\newcommand\DoToC{%
  \startcontents
  \vspace{0.1in}
  \printcontents{0}{0}{\textbf{Contents}\vskip10pt\hrule\vskip5pt}
  \vskip5pt\hrule\vskip5pt
}
\theoremstyle{plain}
\newtheorem{theorem}{Theorem}[section]
\newtheorem{proposition}[theorem]{Proposition}
\newtheorem{lemma}[theorem]{Lemma}
\theoremstyle{definition}
\newtheorem{assumption}[theorem]{Assumption}
\theoremstyle{remark}
\newcommand{\norm}[1]{\left\lVert#1\right\rVert}
\numberwithin{equation}{section}
\DeclareMathOperator*{\argmin}{arg\,min\ }
\newcommand{\RNum}[1]{\uppercase\expandafter{\romannumeral #1\relax}}
\author{Jing Wang$^{*,\dagger}$}
\affiliation{
  \institution{New York University}
  \city{New York}
  \country{United States}}
\email{jw5665@nyu.edu}
\author{Meichen Song$^{*,\dagger}$}
\affiliation{
  \institution{Stony Brook University}
  \city{New York}
  \country{United States}}
\email{meichen.song@stonybrook.edu}
\author{Feng Gao$^{\ddag}$}
\affiliation{
  \institution{IIIS, Tsinghua University}
  \city{Beijing}
  \country{China}}
\email{feng.gao220@gmail.com}
\author{Boyi Liu}
\affiliation{
  \institution{Northwestern University}
  \city{Illinois}
  \country{United States}}
\email{boyiliu2018@u.northwestern.edu}
\author{Zhaoran Wang}
\affiliation{
  \institution{Northwestern University}
  \city{Illinois}
  \country{United States}}
\email{zhaoran.wang@northwestern.edu}
\author{Yi Wu}
\affiliation{
  \institution{IIIS, Tsinghua University;\\
  Shanghai Qi Zhi Institute}
  \city{}
  \country{}
  }
\email{jxwuyi@gmail.com}
\title[AAMAS-2023 Formatting Instructions]{Differentiable Arbitrating in Zero-sum Markov Games}
\begin{abstract}
We initiate the study of how to perturb the reward in a zero-sum Markov game with two players to induce a desirable Nash equilibrium, namely arbitrating. Such a problem admits a bi-level optimization formulation. The lower level requires solving the Nash equilibrium under a given reward function, which makes the overall problem challenging to optimize in an end-to-end way. We propose a backpropagation scheme that differentiates through the Nash equilibrium, which provides the gradient feedback for the upper level. In particular, our method only requires a black-box solver for the (regularized) Nash equilibrium (NE). We develop the convergence analysis for the proposed framework with proper black-box NE solvers and demonstrate the empirical successes in two multi-agent reinforcement learning (MARL) environments.
\end{abstract}
\keywords{Zero-sum game; Equilibrium refinement; Bi-level optimization}
\newcommand{\BibTeX}{\rm B\kern-.05em{\sc i\kern-.025em b}\kern-.08em\TeX}
\begin{document}


\pagestyle{fancy}
\fancyhead{}



\maketitle 
\def\thefootnote{*}\footnotetext{These authors contributed equally to this work}
\def\thefootnote{$\dagger$}\footnotetext{Work finished during the internship in Shanghai Qi Zhi Institute.}
\def\thefootnote{$\ddag$}\footnotetext{Feng Gao contributed the entire experiment part of this work.}

\section{Introduction}
Arbitrating the conflict between self-interest and collective interests permeates the development of human societies~\cite{harari2014sapiens}.
We study Markov games~\cite{littman1994markov} as abstractions of human societies. Nash equilibrium (NE)~\cite{bacsar1998dynamic}, where none of the players could benefit from unilaterally deviating its strategy, is an essential concept in Markov games.
While players modeled by Markov games act rationally to maximize their own rewards, the lack of collective consideration may lead to an undesirable NE, which undermines the overall welfare from a system perspective~\cite{10.2307/3690047}.

Incentive design~\cite{ratliff2019perspective}, which was originally developed for bandit settings, aims to arbitrate such a conflict by perturbing the rewards to refine NE so that the self-interested players reach a desirable one, e.g., the higher social welfare or the lager exploration rate. More specifically, the incentive design problem is naturally formulated as a bi-level structure~\cite{luo1996mathematical, colson2007overview}. The upper-level designer aims to determine the optimal incentives and the lower-level players perform under incentive-perturbed rewards. To this end, it needs to anticipate how the lower-level players react to the prescribed incentives when they play a Markov game.
However, taking a derivative through the bi-level structure is difficult~\cite{friesz1990sensitivity}, especially when the lower level has multiple agents. In particular, the lower-level equilibrium appears to be a black box for the upper-level designer, which makes it challenging for the upper-level designer to assess how the prescribed incentives will influence the lower-level equilibrium. As a result, the upper-level designer lacks the gradient feedback for effectively updating its policy for prescribing the incentives~\cite{li2020end}.

Some attempts have been proposed to work around the difficulty of derivatives in solving bi-level optimization.
One direction is to get rid of the bi-level problem structure by converting the problem into a multi-objective problem: i.e., the low-level players simultaneously optimize the game reward and the designer’s objective by exploring multiple NEs~\cite{du2019liir,rahmattalabi2016d++,ibrahimreward,10.1145/860575.860689,li2021celebrating,tang2021discovering}. The simplified problem can be directly solved by an NE solver. However, there is no theoretical guarantee for finding the optimal NE subject to the designer's objective~\cite{devlin2011theoretical}. 
The other direction is to keep the bi-level problem structure and avoid the derivative issue by applying a gradient-free optimizer to the upper level~\cite{mguni2019coordinating}. However, this suffers from a high computation cost since a zeroth-order method typically requires a large number of queries to the lower-level solver to derive the desired NE.
Therefore, a first-order method that can be applied to complicated environments under the bi-level structure with a convergence guarantee is urgently needed to overcome the sample efficiency issue. 

We extend the incentive design problem to the MARL settings and construct a provably \textit{differentiable arbitrating} (DA) framework. We derive the derivatives through the NE for the DA framework, which enables the upper level to utilize the policy gradient feedback from the lower level to obtain the gradient of the designer’s objective. 
Theoretical convergence proof for DA is presented in the setting of the two-player zero-sum game 
by choosing proper NE solvers which have convergence guarantee, e.g. Policy Extragradient Method (PEM)~\cite{cen2021fast} and Entropy-regularized OMWU~\cite{cen2022faster}.
In practice, we implement the DA framework using a multi-agent variant of the soft actor-critic algorithm (MASAC)~\cite{haarnoja2018soft}, leading to a practical algorithm, DASAC. 
We evaluate our algorithm in two two-player zero-sum MARL environments. Empirical results show that the incentive proposed by the designer arbitrates the conflict well: i.e., the upper-level designer's loss function reduces while lower-level self-interested players still attain an NE. The emergent arbitrated behaviors are consistent with human intuitions in each setting. Besides, the sample efficiency for the upper level is significantly improved compared with a zeroth-order method.
\vspace{-0.15in}\\

\textbf{Contributions.} \textbf{i)} We consider the problem of incentive design in MARL settings and tackle the challenge of deriving the gradient of the upper-level objective through the bi-level structure. \textbf{ii)} We develop the \emph{first} differentiable \textit{first-order} framework, \textit{Differentiable Arbitrating} (DA), which can arbitrate the conflict to obtain a desirable NE in Markov games. \textbf{iii)} We theoretically prove the convergence of DA  with proper NE solvers for two-player zero-sum Markov games. \textbf{iv)} We empirically show that a desirable NE with interpretable behaviors can be found by DA more efficiently than zeroth-order methods in two MARL environments. 
\section{Related works}
\textbf{Nash equilibrium for Markov Games.}
There is a large amount of work on finding the Nash equilibrium in multi-agent Markov games. Claus and Boutilier propose \textit{fictitious play}~\cite{claus1998dynamics} (FS), i.e., all players presuming that the opponents are stationary, for equilibrium selection  with guaranteed convergence to a Nash equilibrium. FS was originally developed for normal form games and therefore was not widely applied to complex applications~\cite{lambert2005fictitious,mcmahan2007fast,heinrich2015smooth} until the \textit{fictitious self-play} (FSP)~\cite{pmlr-v37-heinrich15} was proposed.~\cite{heinrich2016deep} extends SP from extensive-form games to imperfect-information games.~\cite{liu2020policy} proposes the smooth FSP algorithm, which expands the PPO algorithm~\cite{schulman2017proximal} from the single-agent to the multi-agent zero-sum Markov game and provides with convergence guarantee. In the multi-agent reinforcement learning setting, MADDPG in~\cite{lowe2017multi} and MAAC in~\cite{iqbal2019actor} adapt the actor-critic~\cite{konda2000actor} and soft-actor-critic~\cite{haarnoja2018soft} algorithms from the single-agent to complex multi-agent setting and empirically show better performances fully than decentralized multi-agent RL on finding NEs. MADDPG and MAAC do not have theoretical convergence guarantee even though they achieve empirical successes. On the other hand, many works have developed convergence guarantee for value-based methods~\cite{chen2021sample,cen2021fast,cen2022faster} and policy-based methods~\cite{zhao2022provably,alacaoglu2022natural} for two-player zero-sum Markov games. Cen~\cite{cen2021fast,cen2022faster} introduce the entropy regularization term into the two-player zero-sum Markov game and propose the Policy Extragradient Methods and Entropy-regularized OMWU Methods, which have convergence guarantee without the NE uniqueness assumption on the two-player zero-sum Markov game.\\
\textbf{Equilibrium Selection/Refinement.} Since even team Markov Games (where the players have common interests) could have multiple NEs, there exists a large volume of work in game theory on NE selection~\cite{selten1965spieltheoretische,myerson1978refinements}, e.g. admissibility~\cite{banks1987equilibrium}, subgame
perfection~\cite{selten1965spieltheoretische}, Pareto efficiency~\cite{bernheim1987coalition} or stability against opponent’s deviation from best response~\cite{fang2013protecting}. \cite{ratliff2019perspective} raises emphasis on \textit{adaptive incentive design}, which modifies intrinsic rewards via additional incentive to balance the individuals' self-interests and system's social welfare for desirable NE selection. The incentive design was first proposed in economics~\cite{pigoue1920welfare} and attracts substantial attentions with many follow-up works~\cite{laffont2009theory,bolton2004contract,weber2011optimal}. During the past decades, people explored the economic incentive to various domains, including energy~\cite{cambini2010incentive,mohsenian2010optimal}, transportation~\cite{melnikow1997effect,kokkinogenis2014policy}, healthcare~\cite{ma2019realigning,aziz2021efficient}, and  education~\cite{macartney2016incentive}. More recently, incentive design has also been extended to the area of  game theory and reinforcement learning.~\cite{liu2021inducing} utilize the incentive design in the multi-bandit problem and prove that the proposed algorithm converges to the global optimum at a sub-linear rate for a broad class of games.~\cite{mguni2019coordinating} provide an incentive-design mechanism for an uncooperative multi-agent system and optimize the upper-level incentive objective with Bayesian optimization, a sample-efficient optimization algorithm, instead of the gradient-based methods because the lower-level MARL problem is a black box.~\cite{yang2020learning} propose a decentralized incentive mechanism that allows each individual to directly give rewards to others and learn its own incentive function, respectively. Also,~\cite{chen2022adaptive} proposes a bi-level incentive mechanism for a single lower-level player, which can be viewed as a special case of our setting. 
Stackelberg game~\cite{leitmann1978generalized,li2017review} is a strategic game in economics in which the leader moves first and the followers behave sequentially. The incentive design could be reformulated as the Stackelberg game by treating the incentive design objective as the leading player.~\cite{wang2022coordinating} proposed a gradient-decent algorithm to find NE for the Stackelberg bandit problem.~\cite{zhong2021can} propose a value iteration method for solving the Stackelberg Markov game with convergence guarantee.~\cite{wang2021learning} features differentiation through policy gradient on Sequential Decision Problems without global convergence guarantee.\\
\textbf{Bi-level optimization.} In machine learning, a large amount of tasks can be formulated as bi-level optimization problems, e.g. adversarial learning~\cite{jiang2021learning}, meta-learning~\cite{finn2017model}, hyperparameter optimization~\cite{baydin2018online}, and end-to-end learning~\cite{amos2017optnet}. Many gradient-based algorithms are used for solving these problems.
Most of them are first-order gradient-based methods~\cite{finn2017model,nichol2018first,goodfellow2020generative,baydin2018online} thanks to both computational efficiency of first-order gradients and fast empirical convergence. 
By contrast, zeroth-order methods~\cite{song2020maml,wang2021zarts} suffer from a heavy workload of data sampling, which requires tremendous computing resources. 
\cite{parker2020ridge} develop a second-order method.
However, the proposed algorithm requires computation of the inverse of Hessian, which restricts the algorithm to simple tabular settings. 
The most challenging part for gradient-based methods to solve the bi-level optimization problem is to derive the derivative of lower-level solution with respect to the incentive variables in the higher level. 
Feng~\cite{feng2021neural} and Yang~\cite{yang2022adaptive} expand the lower-level optimization process as a sequence of gradient updates and directly compute gradients throughout the entire optimization process. Despite of the simplicity, such a method suffers from tremendous computation troubles due to memory  explosion and numerical issues. Therefore, approximation techniques such as truncated gradient with a sliding window would be necessary. By contrast, our DA framework gives an accurate first-order formula for exact gradient computation for the lower-level optima. Wang~\cite{wang2021learning} constructed a bi-level problem to infer the missing parameters in MDP by learning a predictive model. This paper is under the single-agent setting where the objective in the lower-level is to minimize a clear performance measure while our setting is more complex by requiring the computation of NE.
\section{Problem formulation}
\label{section:problem formulation}
This section organizes as follows. Section \ref{subsec:formu of two-player zero-sum Markov game} begins with a typical definition for two-player zero-sum Markov games $\mathcal{G}$. To illustrate how the designer induces the behaviors of players, Section \ref{subsec:irg} introduces the incentive parameter $\theta$ into the reward of the two-player zero-sum Markov game and formulate the incentivized Markov game $\mathcal{G}_\theta$. 
Section \ref{subsec:irg2} designs an entropy-regularized Markov game $\mathcal{G}_\theta'$ by adding entropy regularization terms into the reward of $\mathcal{G}_\theta$ and demonstrates $\mathcal{G}_\theta'$ owns the unique NE. Section \ref{subsec:bi-level} formulates the arbitrating system as a bi-level optimization scheme based on the regularized Markov game with the  incentive-perturbed reward.

\vspace{-0.1in}
\subsection{Two-player Zero-sum Markov Game}
\label{subsec:formu of two-player zero-sum Markov game}
Consider a two-player zero-sum Markov game $\mathcal{G}\!=\!(\mathcal{S},\{\mathcal{A}^i\}_{i\in\{1,2\}},$\\$\mathcal{P},\{r^i\}_{i\in\{1,2\}},\gamma)$, where $\mathcal{S}$ is the state space observed by all players, $\mathcal{A}^i$ is the action space of player $i$ and $\mathcal{A}:=\mathcal{A}^1\times\mathcal{A}^2$ is the joint action space of two players, then $\mathcal{P}:\mathcal{S}\times\mathcal{A}\times\mathcal{S}\rightarrow[0,1]$ denotes the transition probability from state $s\in\mathcal{S}$ to state $s'\in\mathcal{S}$ for taking joint action $a\in\mathcal{A}$. $r^i:\mathcal{S}\times\mathcal{A}\rightarrow \left[-1,1\right]$ is the immediate reward function of player $i$, which implies $r^1=-r^2$ in zero-sum game. $\gamma\in[0,1)$ is the discounted factor. 
]We remark that we consider the two-player zero-sum setting for notation simplicity. Extensions to more general settings will be discussed in Appendix \ref{supp:extension}.
\vspace{-0.05in}
\subsection{Markov Game with the Incentive-perturbed Reward}\label{subsec:irg}
Let $\mathcal{G}_\theta=(\mathcal{S},\{\mathcal{A}^i\}_{i\in\{1,2\}},\mathcal{P},\{r^i(\cdot;\theta)\}_{i\in\{1,2\}},\gamma)$ be a two-player zero-sum Markov game with an  incentive-perturbed reward function $r^i(\cdot;{\theta})$ explicitly parameterized by $\theta$, where $\theta\in\mathbb{R}^m$ represents the incentives added by the designer. 
We assume that $r^i(\cdot;{\theta}):\mathcal{S}\times\mathcal{A}\rightarrow \left[-1,1\right]$ is uniformly bound in $\theta$ and remains zero-sum.
For a given trajectory $\tau=(s_0,\{a_0^i\}_{i\in\{1,2\}},s_1,...,\{a_{T-1}^i\}_{i\in\{1,2\}},s_T)$, we denote the total discounted reward for player $i \in \{1,2\}$ as
\vspace{-0.05in}
\begin{align*}
    R^i(\tau;{\theta})=\sum_{t=0}^{T-1}\gamma^tr^i(s_t,a_t;{\theta}).
\end{align*}
\vspace{-0.1in}\\
Define $\pi^i\!\!:\!\mathcal{S}\!\times\!\mathcal{A}^i\!\!\rightarrow\!\![0,1]$ as the probability distribution of policy of player $i$ and $\pi\!\!=\!\!(\pi^1,\pi^2)$ as that of the joint policy. 

Let $D^{\pi}(\tau)$ denote the probability distribution of the trajectory $\tau$ based on the policy pair $\pi$.
That is,
\vspace{-0.05in}
\begin{align*}
  D^{\pi}(\tau):=\rho_0(s_0)\prod_{t=0}^{T-1}P(s_{t+1}|s_t,a_t) \left(\prod_{i=1}^2\pi^i(a^i_t|s_t)\right),
\end{align*}\\
\vspace{-0.25in}

where $\rho_0(s_0)$ is the distribution for the initial state.
We denote $\mathbb{E}_{\pi}[\;\cdot\;]$ as the expectation over the trajectory $\tau\sim D^{\pi}(\tau)$. Then the performance of the policy pair $\pi$ for the game $\mathcal{G}_{\theta}$ is evaluated by the state-value function $V^i_{\pi}:\mathcal{S}\rightarrow\mathbb{R}$, which is defined as
\vspace{-0.05in}
\begin{align}\label{eq:value}
    V^i_{\pi}(s;\theta)&=\mathbb{E}_{\pi}\left[\sum_{t=0}^{T-1}\gamma^t{\cdot}r^i(s_t,a^i_t,a^{-i}_t;{\theta})|s_0=s\right]\notag\\
    &=\mathbb{E}_{\tau\sim D^{\pi}}\left[R^i(\tau;{\theta})\big|s_0=s\right].
\end{align}
\vspace{-0.3in}\\


The Nash equilibrium (NE) is an essential concept where no agent could benefit by individually changing its policy. The NE for the incentivized two-player zero-sum Markov game $\mathcal{G}_{\theta}$ is the solution for the following min-max game:
\begin{align}\label{NE_ori}
    \min_{\pi^2}\max_{\pi^1}V^1_{\pi}(s;\theta),\quad s\in\mathcal{S}.
\end{align}
Shapley~\cite{shapley1953stochastic} proves that there exists a NE pair for the min-max game \ref{NE_ori} for all state $s\in\mathcal{S}$ and the min-max value is unique~\cite{von2007theory}. 

\subsection{Entropy-regularized Markov Game}\label{subsec:irg2}
To ensure the NE to be unique, we follow~\cite{pmlr-v80-dai18c,cen2021fast,pmlr-v97-zhao19d} and introduce the entropy-regularized counterpart $\mathcal{G}^{'}_\theta\!=\!(\mathcal{S}, \{\mathcal{A}^i\}_{i\in\{1,2\}}, \mathcal{P},\{r^{(i)}\\(\cdot;\theta)\}_{i\in\{1,2\}}, \gamma,$ $\lambda)$, where
$\lambda\geq0$ is the regularization parameter. 
The entropy regularization technique is commonly utilized in reinforcement learning to encourage exploration to avoid being trapped at sub-optimal solutions~\cite{haarnoja2017reinforcement,lee2018sparse,ahmed2019understanding} and keep the policies of different agents away from being heavily affected by the opponents’ strategy~\cite{liu2019policy,xiao2021entropy}.

Specifically, the reward function for player $i$ is replaced by its entropy-regularized reward function $r^{(i)}_{\pi}(\cdot;\theta):\mathcal{S}\!\times\!\mathcal{A}^i\!\times\!\mathcal{A}^{-i}\!\rightarrow\!\mathbb{R}$, which is defined as,
\vspace{-0.05in}
\begin{align}\label{def:regularized reward}
    r^{(i)}_{\pi}(s,a^i,a^{-i};\theta)=&r^i(s,a^i,a^{-i};\theta)-\lambda\log(\pi^i(a^i|s))\notag\\
    &\quad+\lambda\log(\pi^{-i}(a^{-i}|s)).
\end{align}
\vspace{-0.18in}\\
Here we clarify what the entropy we use with a light abuse of notation. The state-reward function and the entropy-regularized state-reward function associated with policy pair $\pi$ are defined as
\begin{align}
    r^i_{\pi}(s;\theta)&\!=\!\mathbb{E}_{(a^i\!,a^{-i})\sim\pi}\!\!\left[r^i(s,a^i\!,a^{-i};\theta)\right],\notag\\
    r^{(i)}_{\pi}(s;\theta)&\!=\!\mathbb{E}_{(a^i\!,a^{-i})\sim\pi}\!\!\left[r^{(i)}_{\pi}(s,a^i\!,a^{-i};\theta)\right]\notag\\
    &\!=\!r^i_{\pi}(s;\theta)\!+\!\lambda H(\pi^i(\cdot|s))-\lambda H(\pi^{-i}(\cdot|s)),
    \label{regularized reward}
\end{align}
where $H(\pi^i(\cdot|s))\!\!=\!-\!\sum_{a^i\in\mathcal{A}^i}\!\pi^i(a^i|s)\log(\pi^i(a^i|s))$ is the Shannon entropy. Correspondingly, the total discounted reward $R^{(i)}_{\pi}$ for a given trajectory $\tau$ of the player $i$ in game $\mathcal{G}_{\theta}'$ and the entropy-regularized state-value function with incentive parameter $V^{(i)}_{\pi}:\mathcal{S}\rightarrow\mathbb{R}$ are defined as
\begin{align}
    R^{(i)}_{\pi}(\tau;{\theta})&=\sum_{t=0}^{T-1}\gamma^tr^{(i)}_{\pi}(s_t,a_t^i,a_t^{-i};{\theta}),\notag\\
    V^{(i)}_{\pi}(s;\theta)&=\mathbb{E}_{\pi}\left[\sum_{t=0}^{T-1}\gamma^t{\cdot}r^{(i)}_{\pi}(s_t,a^i_t,a^{-i}_t;{\theta})|s_0=s\right]\notag\\
    &=\mathbb{E}_{\tau\sim D^{\pi}}\left[R^{(i)}_{\pi}(\tau;{\theta})\big|s_0=s\right].\label{regularized V}
\end{align}\label{eq:value2}
\vspace{-0.15in}

Suppose the NE of the regularized Markov game with  incentive-perturbed reward $\mathcal{G}_{\theta}'$ is $\pi_{*}(\theta)=(\pi_{*}^1(\theta),\pi^2_{*}(\theta))$. We write Player -$i$ as Player $i$’s opponent then the joint policy could also be rewritten as the policy pair $\pi=(\pi^i,\pi^{-i})$, where $i \in \left\{1, 2\right\}$. 
Then the NE can be defined as follows, 
for $\forall i\in\left\{1,2\right\}$, the NE $\pi^*$ satisfies
\begin{align}
\label{equ:def of regularized NE}
    V^{(i)}_{\pi_{*}(\theta)}(s;\theta)=\max_{\pi^i}V^{(i)}_{(\pi^i,\pi_*^{\text{-}i}(\theta))}(s;\theta),\quad \forall s\in\mathcal{S}.
\end{align}
The NE for the regularized Markov game $\mathcal{G}_\theta'$ is equivalent to the solution of the following min-max optimization problem
\vspace{-0.05in}
\begin{align}\label{prob:minmax2}
    \min_{\pi^2}\max_{\pi^1}V^{(1)}_{\pi}(s;\theta),\quad\forall s\in\mathcal{S}.
\end{align}
As is known from~\cite{cen2021fast,cen2022faster,mertikopoulos2016learning}, there exists a unique policy pair $\pi$ for the min-max optimization problem (\ref{prob:minmax2}). Therefore the entropy-regularized Markov game has the unique NE. Moreover,~\cite{cen2021fast} proves that an $\epsilon/2$-optimal NE for the \textbf{regularized} Markov game $\mathcal{G}_\theta'$ is an $\epsilon$-optimal NE for the original when the regularization parameter $\lambda\leq\frac{(1-\gamma)\epsilon}{2(\log(|\mathcal{A}_1|)+\log(|\mathcal{A}_2|))}$, which indicates the difference between NE for the regularized Markov game and NE for the original can be controlled by the regularization parameter. The definition for $\epsilon$-optimal NE is clearly defined in \cite{cen2021fast}, which means the optimal gap of the policy is smaller than $\epsilon$.



\subsection{Bi-level Optimization Scheme}\label{subsec:bi-level}
Our final goal is to arbitrate conflicts between self-interested players and the collective-interested designer by navigating all individuals to the desirable refined NE with higher system welfare. Section \ref{subsec:irg} introduces the incentive parameter $\theta$ to modify the intrinsic reward $r^i(\cdot;\theta)$ of Markov games, which is utilized as a tool to design the system objective and refine NE for arbitration. In this section, we formulate a bi-level optimization scheme with equilibrium constraints 
induced by such an arbitrating system. Among the bi-level optimization scheme, for the lower-level players, we aim at finding the NE for the given incentivized Markov game $\mathcal{G}_\theta'$. Regarding the upper-level designer, we define a system arbitrating objective targeting the optimal $\theta$ that could channel all individuals towards a desirable NE.


For the purpose of gradient computation in Section \ref{sec:algorithm} and experiments in Section \ref{sec:exp1}, we parameterize policy $\pi^i_{\phi^i}$ with $\phi^i\!\in\!\mathbb{R}^{d_i}$ for the player $i$ ($i\in\{1,2\}$) and denote the policy pair as $\pi_{\phi}\!=\!(\pi^1_{\phi^1}\!,\pi^2_{\phi^2})$, where $\phi\!=\!(\phi^1,\phi^2)\!\in\!\mathbb{R}^d, d=d_1+d_2$. To avoid the abuse of subscription, we omit $i$ of $\phi^i$ in $\pi^i_{\phi^i}$ (w.r.t $\pi^i_{\phi}$). The simplification will not cause confusion because the policy $\pi^i$ of a single player $i$ is only determined by $\phi^i$. Then the joint policy could also be rewritten as the policy pair $\pi_\phi=(\pi^i_\phi,\pi^{-i}_\phi)$, where $i \in \left\{1, 2\right\}$.
The NE of the regularized Markov game with  incentive-perturbed reward $\mathcal{G}_{\theta}^{'}$ is denoted as $\pi_{\phi_{*}(\theta)}=(\pi^1_{\phi_{*}(\theta)},\pi^2_{\phi_{*}(\theta)})$, where $\phi_{*}(\theta)=(\phi_{*}^1(\theta),\phi_{*}^2(\theta))$ is the parameter of NE under incentive $\theta$. The definition of NE for regularized Markov game in (\ref{equ:def of regularized NE}) can be rewritten as,
\vspace{-0.03in}
\begin{align*}
    V^{(i)}_{\pi_{\phi_{*}(\theta)}}(s;\theta)=\!\max_{\phi^i\in\mathbb{R}^{d_i}}V^{(i)}_{(\pi^i_{\phi},\pi^{\text{-}i}_{\phi_{*}(\theta)})}(s;\theta),\quad \forall s\in\mathcal{S}.
\end{align*}
\vspace{-0.25in}\\

\textbf{Arbitrating system.}
Suppose the arbitrating objective for game system $\mathcal{G}_{\theta}'$ is only determined by the incentive parameter $\theta$ and the joint policy $\pi_\phi$, written as mapping $f(\theta,\phi)\!:\!\Theta\times\mathbb{R}^d\rightarrow\mathbb{R}$. The system aims to minimize the arbitrating objective loss at the NE of $\mathcal{G}_{\theta}$, $f(\theta,\phi_*(\theta))$ .
Therefore, the arbitrating system could be formulated as a bi-level optimization problem:
\begin{align}\label{bilevel:prob2}
    \min_{\theta\in\Theta}\quad & f_*(\theta)=f(\theta,\phi_*(\theta))\notag\\
    s.t.\quad & \mathbb{E}_{\nu^{*}}\left[V^{(i)}_{\pi_{\phi_{*}(\theta)}}(s;\theta)\right]\notag\\
    &\quad=\!\max_{\phi^i\in\mathbb{R}^{d_i}}\!\mathbb{E}_{\nu^{*}}\left[V^{(i)}_{(\pi^i_\phi,\pi^{\text{-}i}_{\phi_{*}(\theta)})}(s;\theta)\right],\;\forall i\in\{1,2\}\end{align}

\textbf{Remark:} Eq.~\ref{bilevel:prob2} is our major objective. We primarily focus on efficiently learning the upper-level incentive parameter while treating the lower level as a black-box NE solver. In the following sections, we focus on deriving the first-order gradient of the upper-level loss function $f_*(\theta)$, then establish our \textit{Differentiable Arbitrating} (DA) framework based on it.



\section{Method}
\label{sec:algorithm}
The bi-level arbitrating system (Eq.~\ref{bilevel:prob2}) with NE constraints guides players indirectly via the incentive parameter $\theta$ to safeguard the arbitrating objective $f$. For the lower level, we attempt to determine the policy parameter of NE under a certain incentivized environment $\mathcal{G}_\theta'$, remarked as $\phi^{*}(\theta)$. For the upper-level designer, we aim to derive the first-order algorithm to find the optimal incentive parameter $\theta$ such that the objective $f_*(\theta)=f(\theta,\phi_*(\theta))$ can be optimized while players attain their NE. In such a bi-level problem, it is challenging to obtain the gradient of $f_*$, especially when the solver for regularized NE in the lower level is a black box, since the incentive parameter $\theta$ affects objective $f_*$ implicitly via refining the regularized NE for players. Our goal is to derive the gradient of $f_*$ w.r.t $\theta$ with NE constrains and then develop a backpropagation scheme that utilize the feedback of policy gradients from the lower level to establish the gradient of the incentive objective function at a regularized NE, $\nabla_{\theta}f(\theta, \phi^{*}(\theta))$. To be more specific, the gradient $\nabla_{\theta}f(\theta, \phi^{*}(\theta))$ can be derived as,
\begin{align}\label{eq:grad_f1}
    \nabla_{\theta}f_* = \left[\nabla_{\theta} {f}(\theta,\!\phi)\!+\!\nabla_{\theta}\phi^{*}(\theta)^\intercal\nabla_{\phi}{f}(\theta,\phi)\right]\!\big|_{\phi=\phi^{*}(\theta).}
\end{align}
where $\nabla_{\theta} {f}(\theta,\!\phi)$ and $\nabla_{\phi}{f}(\theta,\phi)$ are known while the NE gradient $\nabla_{\theta}\phi_{*}(\theta)$ need to be derived based on the policy gradient feedback from the lower level.
The computing of the NE gradient $\nabla_{\theta}\phi^{*}(\theta)$ is in detail discussed in Section \ref{section:differentiable computing} and then a general backpropagation framework that for the regularized NE is proposed in Section \ref{section:computing framework}.
\subsection{Gradient for the Incentive}
\label{section:differentiable computing}
In this section, we will derive the gradient at NE $\nabla_{\theta}f(\theta, \phi^{*}(\theta))$, in which the key point is to derive NE gradient $\nabla_{\theta}\phi^{*}(\theta)$. For notation simplification, we denote $u^{(i)}_{\theta}(\phi;s)$ as the gradient of $V^{(i)}_{\phi}(s;\theta)$ w.r.t $\phi^i$ then we define
\begin{align}
\label{def: u*}
    u_{\theta}(\phi)&:=\mathbb{E}_{\nu^{(*)}}[u_{\theta}(\phi;s)],\\\notag
    u_{\theta}(\phi;s)\!&:=\!(u_{\theta}^{(1)}(\phi;s),u_{\theta}^{(2)}(\phi;s))
    =(\nabla_{\phi^1}V^{(1)}_{\pi_{\phi}}(s;\theta),\nabla_{\phi^2}V^{(2)}_{\pi_{\phi}}(s;\theta)).
\end{align}
Then, the gradient of $u_\theta(\phi)$ w.r.t. to $\theta$ and $\phi$ are straightforward as follows,
  \begin{align}
  \label{notation:gradient u}
      \nabla_{\theta}u_{\theta}(\phi)\!=\!\left[
      \begin{matrix}
        \mathbb{E}_{\nu^{(*)}}\!\nabla^2_{\theta\phi^1}V^{(1)}_{\pi_{\phi}}(s;\theta)\\
        \mathbb{E}_{\nu^{(*)}}\!\nabla^2_{\theta\phi^2}V^{(2)}_{\pi_{\phi}}(s;\theta)\\
      \end{matrix}
      \right]\in\mathbb{R}^{d\times m},\\
      \nabla_{\phi}u_{\theta}(\phi)=\left[
      \begin{matrix}
        \mathbb{E}_{\nu^{(*)}}\nabla^2_{\phi\phi^1}V^{(1)}_{\pi_{\phi}}(s;\theta)\\
        \mathbb{E}_{\nu^{(*)}}\nabla^2_{\phi\phi^2}V^{(2)}_{\pi_{\phi}}(s;\theta)\\
      \end{matrix}
      \right]\in\mathbb{R}^{d\times d}.
      \label{notation:gradient u2}
  \end{align}
The following Lemma \ref{lma:V2} implies the computation of $\nabla^2_{\theta\phi^i}V^{(i)}_{\pi_{\phi}}(s;\theta)$ and $\nabla^2_{\phi\phi^i}V^{(i)}_{\pi_{\phi}}(s;\theta)$ can be explicitly derived based on the policy gradient, which are backpropagated from NE solver in the lower level. The computation details for
the gradients and Hessians
in Lemma \ref{lma:V2} are presented throughout Lemma \ref{lma:V} in appendix due to limitation of space. Then in light of (\ref{notation:gradient u}-\ref{notation:gradient u2}) and Lemma \ref{lma:V2}, $\nabla_{\theta}\phi^{*}(\theta)$ can be expressed in policy gradient as derived in the following Lemma \ref{lma:phitheta}, inducing the final expression of objective gradient at NE $\nabla_{\theta} f_{*}(\theta)$.
\begin{lemma}\label{lma:V2}
  In an incentive regularized Markov game $\mathcal{G}_{\theta}^{'}$, we denote $\pi^i_{\phi^i}:=\pi^i_{\phi^i}(a_t^i|s_t)$, $\pi^{-i}_{\phi^{-i}}:=\pi^{-i}_{\phi^{-i}}(a_t^{-i}|s_t)$ and $\pi_{\phi}:=\pi_{\phi}(a_t|s_t)$ for notation simplicity,
then we have
\begin{align*}
    &\nabla_{\theta}V^{(i)}_{\pi_{\phi}}(s;\theta)=\nabla_{\theta}V^{i}_{\pi_{\phi}}(s;\theta);\quad
    \nabla^2_{\theta\phi^i}V^{(i)}_{\pi_{\phi}}(s;\theta)=\nabla^2_{\theta\phi^i}V^{i}_{\pi_{\phi}}(s;\theta);
\end{align*}
\begin{align*}
    &\nabla\!_{\phi^i}\! V^{(i)}_{\pi_{\phi}}(s;\theta)=\nabla_{\phi^i} V^{i}_{\pi_{\phi}}(s;\theta)-\!\lambda\mathbb{E}_{D^{\pi_{\phi}}}\!\left[\sum_{t=0}^{T-1}\!\gamma^t\nabla_{\phi^i}\!\log\pi^i_{\phi^i}\right.\notag\\
    &\quad+\left.\left(\sum_{t=0}^{T-1}\!\gamma^t\!\left(\log\!\pi^i_{\!\phi^i}-\log\!\pi^{-i}_{\!\phi^{-i}}\right)\right)\!\left(\sum_{t=0}^{T-1}\!\nabla_{\phi^i}\log\pi^i_{\phi^i}\right)\right];\\
    &\nabla^2_{\phi\phi^i}V^{(i)}_{\pi_{\phi}}(s;\theta)=\nabla^2_{\phi\phi^i}V^{i}_{\pi_{\phi}}(s;\theta)\\
    &\quad-\lambda\mathbb{E}_{D^{\pi\!_\phi}}\!\left[\left(\sum_{t=0}^{T-1}\!\nabla\!_{\phi^i}\!\log\!\pi^i_{\phi^i}\right)\left(\sum_{t=0}^{T-1}\!\gamma^t\left(\nabla\!_{\phi}\!\log\!\pi^i_{\phi^i}-\nabla\!_{\phi}\!\log\!\pi^{-i}_{\phi^{-i}}\right)\right)^{\!\!\intercal}\right.\\
    &\quad+\left(\!\sum_{t=0}^{T-1}\!\gamma^t\!\left(\log\pi_{\phi^i}^i-\log\pi_{\phi^{-i}}^{-i}\!\right)\right)\!
     \left(\!\sum_{t=0}^{T-1}
      \!\nabla_{\phi^i}\!\log\pi_{\phi^i}^i\right)\!\left(\!\sum_{t=0}^{T-1}\!\!\nabla_{\phi}\!\log\!\pi_{\phi}\right)^{\!\intercal}\notag\\
      &\quad+\left(\sum_{t=0}^{T-1}\!\gamma^t\left(\log\pi_{\phi^i}^i-\log\pi_{\phi^{-i}}^{-i}\!\right)\right)\!\left(\sum_{t=0}^{T-1}\nabla^2_{\phi\phi^i}\!\log\pi_{\phi}\right)^{\!\intercal}\\
      &\quad+\left(\!\sum_{t=0}^{T-1}\!\gamma^t\nabla_{\phi^i}\!\log\! \pi^i_{\phi^i}\right)\!\left(\sum_{t=0}^{T-1}\!\nabla_{\phi}\log \pi_{\phi}\!\right)^{\!\!\intercal}\left.+\!\left(\!\sum_{t=0}^{T-1}\!\!\gamma^t\nabla^2_{\phi\phi^i}\!\log \!\pi_{\phi}\right)
      \right],
  \end{align*}

  where $\mathbb{E}_{D^{\pi\!_\phi}}[\;\!\!\cdot\!\;\!]$ denotes the expectation over the trajectory $\tau\sim D^{\pi_\phi}(\tau)$
  and $\nabla_{\theta}V^{i}_{\pi_{\phi}}(s;\theta)$, $\nabla_{\phi^i} V^{i}_{\pi_{\phi}}(s;\theta)$, $\nabla^2_{\theta\phi^i}V^{i}_{\phi}(s;\theta)$ and $\nabla^2_{\phi\phi^i}V^{i}_{\phi}(s;\theta)$ are computed from Lemma \ref{lma:V} in appendix.
\end{lemma}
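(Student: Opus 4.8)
The plan is to exploit the additive structure of the regularized value function. By the definitions in (\ref{def:regularized reward})--(\ref{regularized V}), the entropy-regularized return differs from the unregularized one only through an entropy correction term, so that $V^{(i)}_{\pi_\phi}(s;\theta)=V^{i}_{\pi_\phi}(s;\theta)+E^{i}_{\pi_\phi}(s)$, where
\[
E^{i}_{\pi_\phi}(s)\;=\;\lambda\,\mathbb{E}_{\tau\sim D^{\pi_\phi}}\!\left[\sum_{t=0}^{T-1}\gamma^t\!\left(-\log\pi^i_{\phi^i}(a^i_t|s_t)+\log\pi^{-i}_{\phi^{-i}}(a^{-i}_t|s_t)\right)\,\big|\,s_0=s\right].
\]
The crucial structural observation is that $E^{i}_{\pi_\phi}$ depends on $\phi$ both explicitly (through the log-policy summand) and implicitly (through the trajectory law $D^{\pi_\phi}$), but is \emph{entirely independent of} $\theta$, since neither the log-policy terms nor the transition kernel $\mathcal{P}$ carries the incentive parameter.

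From this observation the first two identities follow immediately. Because $E^{i}_{\pi_\phi}$ does not depend on $\theta$, we have $\nabla_\theta E^{i}_{\pi_\phi}=0$, hence $\nabla_\theta V^{(i)}_{\pi_\phi}=\nabla_\theta V^{i}_{\pi_\phi}$; differentiating the identically-zero gradient $\nabla_\theta E^{i}_{\pi_\phi}$ once more in $\phi^i$ gives $\nabla^2_{\theta\phi^i}E^{i}_{\pi_\phi}=0$, and therefore $\nabla^2_{\theta\phi^i}V^{(i)}_{\pi_\phi}=\nabla^2_{\theta\phi^i}V^{i}_{\pi_\phi}$.

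For the two $\phi$-derivatives I would compute $\nabla_{\phi^i}E^{i}_{\pi_\phi}$ and $\nabla^2_{\phi\phi^i}E^{i}_{\pi_\phi}$ by the log-derivative (policy-gradient) trick, using $\nabla_\phi\log D^{\pi_\phi}(\tau)=\sum_{t=0}^{T-1}\nabla_\phi\log\pi_\phi(a_t|s_t)$ and the fact that only player $i$'s policy depends on $\phi^i$. Differentiating $E^{i}_{\pi_\phi}$ once produces the explicit-gradient term $-\lambda\,\mathbb{E}[\sum_t\gamma^t\nabla_{\phi^i}\log\pi^i_{\phi^i}]$ together with the score-function term $-\lambda\,\mathbb{E}[(\sum_t\gamma^t(\log\pi^i_{\phi^i}-\log\pi^{-i}_{\phi^{-i}}))(\sum_t\nabla_{\phi^i}\log\pi^i_{\phi^i})]$, which, added to $\nabla_{\phi^i}V^{i}_{\pi_\phi}$, is exactly the claimed formula for $\nabla_{\phi^i}V^{(i)}_{\pi_\phi}$. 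Differentiating this vector once more in $\phi$ --- splitting again into the explicit Jacobian of each factor and a second application of the score-function identity, then collecting terms by the product rule --- yields the five-term expression for $\nabla^2_{\phi\phi^i}V^{(i)}_{\pi_\phi}$: the outer product of the $\phi^i$-score with the $\phi$-gradient of the discounted log-ratio, the score term multiplying the full measure-score, the discounted log-ratio times the Hessian $\nabla^2_{\phi\phi^i}\log\pi_\phi$, the discounted $\phi^i$-score times the measure-score, and the discounted Hessian term.

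The main obstacle I anticipate is the bookkeeping in the last identity: one must keep the explicit $\phi$-dependence of each factor (the discounted log-ratio, the per-step scores, and the Hessians $\nabla^2_{\phi\phi^i}\log\pi_\phi$) separate from the implicit dependence through $D^{\pi_\phi}$, and track the transposes together with the distinction between $\nabla_{\phi^i}$ and $\nabla_\phi$ so that the outer products line up exactly as stated. The unregularized pieces $\nabla_\theta V^{i}_{\pi_\phi}$, $\nabla_{\phi^i}V^{i}_{\pi_\phi}$, $\nabla^2_{\theta\phi^i}V^{i}_{\pi_\phi}$ and $\nabla^2_{\phi\phi^i}V^{i}_{\pi_\phi}$ require no separate argument here, as they are supplied by Lemma~\ref{lma:V}.
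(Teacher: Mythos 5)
Your proposal is correct and follows essentially the same route as the paper's proof: it isolates the same entropy-correction term (the paper's $-\lambda U^i_{\pi_\phi}(s)$, your $E^i_{\pi_\phi}(s)$), uses its $\theta$-independence to dispatch the first two identities, and obtains the $\phi$-derivatives by the identical combination of the score-function identity $\nabla_\phi\log D^{\pi_\phi}=\sum_t\nabla_\phi\log\pi_\phi$ with the product rule, yielding the same five-term Hessian expression. No substantive difference from the paper's argument.
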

Lemma \ref{lma:V2} utilizes the policy gradient information, which is easily accessible in a NE solver, to describe how self-interested players react to the adjustment in the incentive parameter.  
It is straightforward to prove but is essential to provide the lower-level information embedded in the policy gradient to guide the direction of updates for the incentive parameter against a high collective loss.\looseness=-1

In light of Lemma \ref{lma:V2}, Lemma \ref{lma:phitheta} presents the gradient of NE w.r.t. the incentive parameter, which is crucial to leverage the backpropagated information from the lower level to update the incentive parameter. It describes how the equilibrium among players is affected by the incentive from the designer and shows that such information is related to the Hessian of $V^{(i)}_{\pi_{\phi}}(s;\theta)$ at NE.
\begin{lemma}\label{lma:phitheta}
   The following formula holds for the policy parameter $\phi^{(*)}(\theta)$ of NE $\pi_{\phi^{*}(\theta)}$ for incentive regularized Markov game $\mathcal{G}_{\theta}^{'}$:
  \begin{align}\label{eq:grad_phi_new}
      \nabla_{\theta}\phi^{(*)}(\theta)=-\left[\nabla_{\phi}u_{\theta}(\phi)\right]^{-1}\nabla_{\theta}u_{\theta}(\phi)\big|_{\phi=\phi^*(\theta)}.
  \end{align}
\end{lemma}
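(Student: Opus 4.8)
The plan is to prove the identity as an application of the implicit function theorem to the first-order stationarity condition satisfied by the regularized NE. First I would observe that the vector $u_\theta(\phi)$ defined in (\ref{def: u*}) is precisely the stacked collection of each player's \emph{own-parameter} gradient of the regularized value, averaged under $\nu^{(*)}$; that is, its $i$-th block is $u^{(i)}_\theta(\phi)=\mathbb{E}_{\nu^{(*)}}[\nabla_{\phi^i}V^{(i)}_{\pi_\phi}(s;\theta)]$. By the equilibrium characterization encoded in the constraint of (\ref{bilevel:prob2}), for each $i\in\{1,2\}$ the parameter $\phi^i_*(\theta)$ maximizes $\mathbb{E}_{\nu^{(*)}}[V^{(i)}_{(\pi^i_\phi,\pi^{-i}_{\phi_*(\theta)})}(s;\theta)]$ over $\phi^i$ while the opponent is held fixed at its equilibrium value. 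Hence the interior first-order optimality condition for each player forces $u^{(i)}_\theta(\phi_*(\theta))=0$, and stacking the two blocks yields the key stationarity identity $u_\theta(\phi_*(\theta))=0$, which holds for every $\theta\in\Theta$.

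Next I would treat $u_\theta(\phi_*(\theta))=0$ as an identity in $\theta$ and differentiate both sides totally. Writing the map as $U(\theta,\phi):=u_\theta(\phi)\in\mathbb{R}^d$, the chain rule gives $\nabla_\theta U(\theta,\phi_*(\theta))+\nabla_\phi U(\theta,\phi_*(\theta))\,\nabla_\theta\phi_*(\theta)=0$, where the two partial Jacobians are exactly the matrices $\nabla_\theta u_\theta(\phi)\in\mathbb{R}^{d\times m}$ and $\nabla_\phi u_\theta(\phi)\in\mathbb{R}^{d\times d}$ displayed in (\ref{notation:gradient u})--(\ref{notation:gradient u2}). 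Solving this linear system for $\nabla_\theta\phi_*(\theta)$ produces the claimed formula $\nabla_\theta\phi_*(\theta)=-[\nabla_\phi u_\theta(\phi)]^{-1}\nabla_\theta u_\theta(\phi)$ evaluated at $\phi=\phi^*(\theta)$, completing the argument once the two regularity hypotheses below are in place.

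The main obstacle is justifying precisely those hypotheses that make the total differentiation legitimate: that $\theta\mapsto\phi_*(\theta)$ is differentiable (so $\nabla_\theta\phi_*(\theta)$ exists), and that the Jacobian $\nabla_\phi u_\theta(\phi)$ is nonsingular at the equilibrium (so its inverse is well defined). Both are delivered by the implicit function theorem provided $\nabla_\phi u_\theta$ is invertible at $(\theta,\phi_*(\theta))$. The delicate point is that this block-stacked Jacobian is a \emph{game} Jacobian coupling the two players' cross-derivatives and is not symmetric in general, so definiteness of a single player's own Hessian is not by itself sufficient. Here I would lean on the entropy regularization $\lambda>0$ of $\mathcal{G}'_\theta$: it makes each player's objective strongly concave in its own parameter and guarantees uniqueness of the regularized NE, as recalled after (\ref{prob:minmax2}), which is the structural ingredient underwriting the required nondegeneracy of $\nabla_\phi u_\theta$ at $\phi^*(\theta)$. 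With invertibility secured, the remaining work is only the chain-rule bookkeeping already organized by the definitions (\ref{notation:gradient u})--(\ref{notation:gradient u2}), whose constituent gradients and Hessians are furnished in explicit policy-gradient form by Lemma \ref{lma:V2}.
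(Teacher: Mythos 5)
Your proposal follows essentially the same route as the paper's own proof: establish the stationarity identity $u_\theta(\phi_*(\theta))=0$ from the first-order optimality condition of each player at the regularized NE, differentiate it totally in $\theta$, and solve the resulting linear system for $\nabla_\theta\phi_*(\theta)$. In fact you are somewhat more careful than the paper, which silently assumes both the differentiability of $\theta\mapsto\phi_*(\theta)$ and the invertibility of $\nabla_\phi u_\theta(\phi)$ at the equilibrium, whereas you correctly flag these as the hypotheses that the implicit function theorem (together with the entropy regularization) must supply.
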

Then substituting Eq.~(\ref{eq:grad_phi_new}) into Eq.~(\ref{eq:grad_f1}) leads to the final explicit expression of $\nabla_{\theta} f_{*}(\theta)$:
\begin{align}\label{eq:grad_f}
    &\nabla_{\theta}
    f_{*}(\theta,\phi^*(\theta))\notag\\
    =&\!\left[\nabla_{\theta} {f}(\theta,\!\phi)\!\!-\!\!\nabla_{\theta}u_{\theta}(\phi)^\intercal\left[\nabla_{\phi}u_{\theta}(\phi)\right]^{-1}\nabla_\phi{f}(\theta,\phi)\right]\big|_{\phi=\phi^{*}(\theta).}
\end{align}

\subsection{Differentiable Arbitrating}
\label{section:computing framework}

  

Based on the gradient computation in Section \ref{section:differentiable computing}, we propose the \textit{Differentiable Arbitrating }(DA) framework to solve the arbitrating system (Eq.~\ref{bilevel:prob2}). DA differentiates the upper-level designer's objective function through NE and backpropagates the gradient information of policies from the lower level to derive the upper-level gradient. In this scheme, the black-box solver for the regularized NE in the lower level could be any methods containing gradient information of the policies, for example, SAC~\cite{haarnoja2018soft}, MADDPG~\cite{lowe2017multi}, MAAC~\cite{iqbal2019actor}, Policy Extragradient Method~\cite{cen2021fast} and Entropy-regularized OMWU~\cite{cen2022faster}. 
\begin{algorithm}[H]
\floatname{algorithm}{Framework}
\caption{DA: Differentiable Arbitrating in MARL.}
\label{frame_IncenRl}
\begin{algorithmic}
\STATE {\bfseries Input:} $\beta_k$: learning rate for upper-level iteration
\STATE {\bfseries Output:} $\theta$: incent. param.; $\phi$: param. of policy $\pi_\phi$
\STATE Initialize the incentive parameter $\theta=\theta_0$.

\FOR{k=0,1,...}
  \STATE Initialize param. $\phi=\phi_0$ of policy $\pi_{\phi_0}$ for game $\mathcal{G}_{\theta_k}'$.
  
  \FOR{t=0,1...}
  \STATE Update $\phi=\phi_t$ with NE solver until corresponding policy $\pi_{\phi_t}$ converge to the Nash equilibrium of game $\mathcal{G}_{\theta_k}'$, w.r.t $\phi^{*}(\theta_k)$.
  \ENDFOR
  \STATE Update incentive parameter
  $$\theta=\theta_{k+1}\leftarrow\theta_k-\beta_k\nabla f_{*}(\theta_k),$$
  where $\nabla f_{*}(\theta_k)$ is defined in (\ref{eq:grad_f}).
\ENDFOR
\end{algorithmic}
\end{algorithm}
\vspace{-0.05in}
The workflow of DA is shown in Framework \ref{frame_IncenRl}. 
In this work, we specifically develop the convergence guarantee for the DA framework with lower-level NE solvers which have convergence guarantee and
implement our DA framework using a multi-agent variant of SAC, which will be described in detail later in Section \ref{sec:imp}.

\textbf{DA framework is a \emph{first-order} method.} This is because the update of upper-level incentive $\theta$ only relies on the first-order gradient $\nabla f_{*}(\theta)$ without any hessian information of $\theta$. The second-order gradients defined in Lemma 4.1 are hessian matrices of lower-level value function $V$ and are used for the computation of first-order gradient $\nabla f_{*}(\theta)$.

  
\textbf{Extension to general settings.}
Although DA framework is described under two-player zero-sum fully observable Markov games, it could be extended to more general settings: \textit{N-players}, \textit{general-sum} and \textit{partially observable Markov decision process (POMDP)}. 
For N-player general-sum Markov games, the game $\mathcal{G}_\theta$ is required to be monotone to ensure the entropy-regularized game $\mathcal{G}_\theta'$ to be strongly-monotone and therefore has unique NE. Then the DA framework could leverage a proper lower-level black-box NE solver for the (regularized) NE. 
For POMDPs, some minor adjustments are required, replacing the state-based policy $\pi^i(a^i|s)$ with the observation-based policy $\pi^i(a^i|o^i)$, where $o^i\!=\!O^i(s)$ is the observation of the player $i$ in state $s$. Such adjustments will not change the gradient computation, which is essential for the DA framework. The detailed discussions of problem formulation and the gradient derivation for the general settings are deferred to Appendix \ref{supp:extension}. 


\vspace{-0.02in}
\section{Convergence analysis}\label{sec:cov}
In this section, we develop the convergence analysis for the DA framework with proper lower-level NE solvers in the two-player zero-sum setting. 
Section \ref{subsec:inner_loop} discuss lower-level NE solvers with convergence guarantee in literature that could be applied to our DA framework. Section \ref{subsec:outer_loop} guarantees the convergence of incentive parameter $\theta$ in the upper level for the bi-level DA framework under convergent lower-level NE solvers.


\vspace{-0.05in}
\subsection{Convergence of the Lower Level}\label{subsec:inner_loop}

The convergence of the bi-level DA framework requires the lower-level NE solver for having the convergence guarantee to the NE $\pi_*$ of the entropy-regularized Markov game $\mathcal{G}_\theta'$. There are diverse NE solvers that have good empirical performance in practical games~\cite{haarnoja2018soft,lowe2017multi,iqbal2019actor}. Some proposed NE solvers have the theoretical convergence proofs~\cite{szepesvari1999unified,singh2000convergence,hu2003nash}. For the illustration of the DA framework convergence, we take two NE solvers proposed under entropy-regularization that are directly related with our setting.
It is known that the Policy Extragradient Method (PEM)~\cite{cen2021fast} and Entropy-regularized OMWU~\cite{cen2022faster} all have the convergence guarantee under our setting. This section restates the main convergence results for PEM and Entropy-regularized OMWU for the purpose of completeness of convergence guarantee for DA framework.

\begin{proposition}[Theorem 3 in~\cite{cen2021fast}]\label{prop:PEM}
 Assume $|\mathcal{S}_1|\geq|\mathcal{S}_2|$ and $\lambda\leq 1$. Setting learning rate $\eta_t\!=\!\eta\!=\!\frac{1-\gamma}{2(1+\lambda(\log|\mathcal{S}_1|+1-\gamma))}$, the Policy Extragradint Method (Algorithm \ref{alg_PEM} in supplementary) takes no more than $\widetilde{\mathcal{O}}\left(\frac{1}{\lambda(1-\gamma)^2}\log^2\left(\frac{1}{\epsilon}\right)\right)$ iterations to achieve the policy pair $\pi=(\pi_1,\pi_2)$ that satisfies:
 \vspace{-0.08in}
    \begin{align*}
    \max_{\pi_1',\pi_2'}\mathbb{E}_{s\sim\rho}\left[\left(V^{(1)}_{\pi_1',\pi_2}(s;\theta)-V^{(1)}_{\pi_1,\pi_2'}(s;\theta)\right)\right]\leq\epsilon,
  \end{align*}
\vspace{-0.1in}\\
  where $\rho$ is an arbitrary distribution over the state space $\mathcal{S}$.
\end{proposition}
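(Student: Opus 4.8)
The plan is to exploit the entropy regularization to turn the min-max problem (\ref{prob:minmax2}) into one with a \emph{unique} fixed point toward which the extragradient iterates contract at a linear (last-iterate) rate; the stated iteration count then follows by solving the linear-rate recursion for the number of steps needed to drive the exploitability below $\epsilon$. The first ingredient I would set up is the characterization of the regularized NE as the solution of a smoothed (quantal-response) Bellman equation: for a fixed action-value function the entropy term makes the per-state objective $\lambda$-strongly convex in $\pi^2$ and $\lambda$-strongly concave in $\pi^1$, so each per-state matrix game has a unique saddle point given by a softmax of the $Q$-values, and the associated regularized Bellman operator is a $\gamma$-contraction in the sup norm. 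This supplies both the uniqueness already asserted after (\ref{prob:minmax2}) and the scaffolding for a nested contraction argument.

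Next I would analyze the two coupled recursions separately and then combine them. For the inner matrix-game layer, the key point is that vanilla gradient descent--ascent merely cycles on the bilinear coupling, whereas the extragradient look-ahead step, together with the $\lambda$-strong monotonicity induced by entropy, yields a genuine contraction $\mathrm{dist}_{k+1}\le(1-c\,\lambda\eta)\,\mathrm{dist}_k$ toward the per-state saddle point, where $\mathrm{dist}$ is measured in a KL-type Lyapunov function between the current policy pair and the quantal-response equilibrium. For the outer value layer, the smoothed Bellman contraction propagates the $Q$-estimates geometrically with modulus controlled by $\gamma$. The prescribed step size $\eta=\frac{1-\gamma}{2(1+\lambda(\log|\mathcal{S}_1|+1-\gamma))}$ is precisely the choice that balances the smoothness of the softmax against the regularization strength so that both contractions hold simultaneously; tracking the two geometric rates through the nested loop produces one $\log(1/\epsilon)$ factor from the matrix-game solve and one from the value propagation, hence the $\log^2(1/\epsilon)$, while the $\frac{1}{\lambda(1-\gamma)^2}$ prefactor collects the inner contraction modulus $(\lambda\eta)^{-1}\asymp(\lambda(1-\gamma))^{-1}$ and the outer effective horizon $(1-\gamma)^{-1}$. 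Finally I would convert the iterate distance into the advertised duality gap $\max_{\pi_1',\pi_2'}\mathbb{E}_{s\sim\rho}[V^{(1)}_{\pi_1',\pi_2}(s;\theta)-V^{(1)}_{\pi_1,\pi_2'}(s;\theta)]$ via the performance-difference lemma together with the Lipschitzness of $V^{(1)}$ in the policies.

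The main obstacle I expect is establishing the \emph{last-iterate} linear contraction of the extragradient updates under the min-max coupling and then propagating it cleanly through the value-iteration layer. Standard monotone-operator arguments give only averaged- or ergodic-iterate rates, but the crisp $\log^2$ bound requires a last-iterate guarantee, which forces one to build a Lyapunov function that simultaneously dominates both players' regret and is contracted by the extragradient step; the accounting is delicate because the inner $Q$-estimates are non-stationary across outer iterations, so I must ensure the inner matrix-game error does not accumulate when fed into the Bellman recursion. Controlling this error coupling --- rather than either contraction in isolation --- is the crux, and the explicit $\eta$ together with the assumptions $\lambda\le 1$ and $|\mathcal{S}_1|\ge|\mathcal{S}_2|$ are exactly what render the two error sources compatible.
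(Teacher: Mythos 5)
There is nothing in the paper to compare your argument against: Proposition~\ref{prop:PEM} is not proved in this paper at all. It is a verbatim restatement of Theorem~3 of Cen et al.~\cite{cen2021fast}, imported as a black-box guarantee for the lower-level NE solver (the authors say explicitly that they ``restate the main convergence results \dots for the purpose of completeness''). So the paper's ``proof'' is a citation, and your task was effectively to reconstruct the proof of the cited theorem.

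As a reconstruction, your outline does track the actual architecture of the Cen et al.\ argument: PEM is a nested-loop method whose outer layer is value iteration under the entropy-regularized (``soft'') Bellman operator, a $\gamma$-contraction with a unique fixed point whose per-state saddle is the softmax/quantal-response policy, and whose inner layer solves each per-state regularized matrix game by an extragradient (PU/OMWU) update that enjoys a \emph{last-iterate} linear rate of the form $(1-\eta\lambda)^t$ in a KL-type Lyapunov function; the two geometric rates multiply to give the $\log^2(1/\epsilon)$, and the prefactor $\frac{1}{\lambda(1-\gamma)^2}$ comes from $(\eta\lambda)^{-1}\asymp(\lambda(1-\gamma))^{-1}$ times the effective horizon $(1-\gamma)^{-1}$, exactly as you say. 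You also correctly identify the crux: the last-iterate linear contraction of the entropy-regularized extragradient inner loop and the control of the error fed from inexact inner solves into the outer Bellman recursion. But that crux is the entire technical content of the cited theorem, and your proposal leaves it as a plan rather than carrying it out (no construction of the Lyapunov function, no one-step contraction inequality, no verification that the stated $\eta$ satisfies the required smoothness condition, no error-accumulation bound across outer iterations). So the proposal is a faithful high-level map of the external proof, not a proof; within the context of this paper the correct move is simply to invoke the citation, which is what the authors do. One small caution: the hypothesis $|\mathcal{S}_1|\geq|\mathcal{S}_2|$ and the $\log|\mathcal{S}_1|$ in the step size are typos for the corresponding action-space quantities in the original theorem, so you should not try to build the argument around state-space cardinalities.
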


\begin{proposition}[Theorem 1 in~\cite{cen2022faster}]\label{prop:OMWU}
 Setting $0<\eta\leq\frac{(1-\gamma)^3}{32000|\mathcal{S}|}$ and $\alpha_t=\eta\lambda$, the Entropy-regularized OMWU (Algorithm \ref{alg_EROMWU} in supplementary) takes no more than $\widetilde{\mathcal{O}}\left(\frac{1}{(1-\gamma)\eta\lambda}\log\frac{1}{\epsilon}\right)$ iterations to achieve achieve the policy pair $\pi=(\pi_1,\pi_2)$ that satisfies:
  \vspace{-0.05in}
    \begin{align*}
      \max_{s\in\mathcal{S},\pi_1',\pi_2'}\left(V^{(1)}_{\pi_1',\pi_2}(s;\theta)-V^{(1)}_{\pi_1,\pi_2'}(s;\theta)\right)\leq\epsilon.
  \end{align*}
\end{proposition}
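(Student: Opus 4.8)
The plan is to establish geometric (linear) last-iterate convergence by exhibiting a single scalar potential that contracts by a constant factor at every iteration, and then to convert the resulting exponential decay into the stated regularized duality-gap bound. Because $\mathcal{G}_\theta'$ is entropy-regularized, its regularized Bellman operator is a $\gamma$-contraction in $\|\cdot\|_\infty$ with a unique fixed point, the regularized value $V^\star_\lambda$ whose associated policy pair is the unique (quantal-response) NE $\pi_*$ of Problem~(\ref{prob:minmax2}). The quantity controlled in the statement, $\max_{s,\pi_1',\pi_2'}\big(V^{(1)}_{\pi_1',\pi_2}(s;\theta)-V^{(1)}_{\pi_1,\pi_2'}(s;\theta)\big)$, is exactly the regularized duality gap, so it suffices to drive both players' policies to $\pi_*$ and the running value estimate to $V^\star_\lambda$.

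First I would isolate the matrix-game subproblem solved at each state. Fixing the current $Q$-estimate, the per-state update is precisely entropy-regularized OMWU on a bilinear payoff: the optimism (prediction) step cancels the first-order gradient-mismatch term, and the $\lambda$-strong convexity induced by the Shannon entropy $H$ turns the OMWU dynamics into a contraction in the KL divergence to the state's QRE policy, with rate $1-\Theta(\eta\lambda)$ under the choices $0<\eta\le (1-\gamma)^3/(32000|\mathcal{S}|)$ and $\alpha_t=\eta\lambda$. This is the engine of the linear rate. Second, I would couple this per-state policy contraction to the propagation of the value error: because the payoff matrices themselves depend on $V$, which in turn depends on all players' policies, the per-state update is only an \emph{inexact} OMWU step whose inexactness scales with the current value error. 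The key is then to design a Lyapunov function combining $\|V_t-V^\star_\lambda\|_\infty$ with a weighted sum of per-state divergences $\sum_s \mathrm{KL}(\pi_*(\cdot|s)\,\|\,\pi_t(\cdot|s))$, and to show it satisfies a one-step recursion $\Phi_{t+1}\le \big(1-\Theta((1-\gamma)\eta\lambda)\big)\,\Phi_t$, where the $\gamma$-contraction of the Bellman operator absorbs the value-error cross terms and the small step size keeps the policy-to-value feedback benign; the effective-horizon factor $(1-\gamma)$ is what slows the composite rate relative to the matrix-game rate.

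Finally, unrolling gives $\Phi_t\le \big(1-\Theta((1-\gamma)\eta\lambda)\big)^t\Phi_0$, and solving $\big(1-\Theta((1-\gamma)\eta\lambda)\big)^t\le\epsilon$ yields the advertised $\widetilde{\mathcal{O}}\big(\tfrac{1}{(1-\gamma)\eta\lambda}\log\tfrac1\epsilon\big)$ iteration count, after translating the potential bound into the duality-gap bound (another $(1-\gamma)^{-1}$ enters through the horizon). The main obstacle is the middle step: cleanly controlling the two-way coupling between the drifting, inexact payoffs and the OMWU policy iterates so the cross terms do not destroy the contraction. Linear convergence of entropy-regularized OMWU for a \emph{fixed} matrix game is already known; the delicate part in the Markov setting is choosing the potential weights and the step-size regime (hence the explicit $32000|\mathcal{S}|$ constant) so that the value recursion and the policy recursion contract \emph{simultaneously} at a common geometric rate. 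As this proposition merely restates Theorem~1 of~\cite{cen2022faster}, the complete argument follows that reference.
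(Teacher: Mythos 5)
The paper offers no proof of this proposition: it is imported verbatim as Theorem~1 of~\cite{cen2022faster}, and your proposal correctly ends by deferring to that reference, which is exactly what the paper does. Your sketch of the argument (per-state entropy-regularized OMWU contracting in KL at rate $1-\Theta(\eta\lambda)$, coupled through a Lyapunov function with the $\|V_t-V^\star_\lambda\|_\infty$ error to yield a common geometric rate $1-\Theta((1-\gamma)\eta\lambda)$) is a faithful summary of how the cited work proceeds, so there is nothing to reconcile.
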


Proposition \ref{prop:PEM} and Proposition \ref{prop:OMWU} demonstrate that the PEM and Entropy-regularized OMWU algorithms converges to the NE for the lower-level incentivized entropy-regularized Markov games $\mathcal{G}_\theta'$ in linear rate.

\subsection{Convergence of the Upper Level}\label{subsec:outer_loop}
Assume that $\theta^{*}=\arg\min_{\theta}f_{*}(\theta)$ represents the optimal point of incentive parameter $\theta$ in the arbitrating system (\ref{bilevel:prob2}). In this section, we are going to show that the framework \ref{frame_IncenRl} guarantees that the incentive parameter $\theta$ converges to $\theta^{*}$ under certain assumptions.

\begin{assumption}\label{asm:f_1}
We assume that $\nabla_\theta f_*(\theta)$ is L-Lipschitz continuous (L-smooth) w.r.t $\theta\in\Theta$, which means
\begin{align*}
    \norm{\nabla_{\theta} f_{*}(\theta)-\nabla_{\theta} f_{*}(\theta')}\leq L\norm{\theta-\theta'},\quad \forall \theta, \theta' \in \Theta
\end{align*}
We also assume that $f_*(\theta)$ is bounded w.r.t $\theta\in\Theta$, which means there exists $M>0$ such that 
\vspace{-0.05in}
\begin{align*}
    |f_{*}(\theta)|\leq M,\quad \forall \theta \in \Theta
\end{align*}
\end{assumption}
L-smooth assumption is a common assumption in convergence proofs for different algorithms~\cite{kingma2014adam,yang2016unified,ramezani2018generalization,li2018sharp,abbaszadehpeivasti2022exact}. We also illustrate a concrete example to justify Assumption \ref{asm:f_1} is rational in practical setting in Appendix \ref{sup:asmf1}. Assumption \ref{asm:f_1} does not require the composite function $f_*(\theta)=f(\theta,\phi^*(\theta))$ to be convex and the convergence guarantee for the upper level even suitable for non-convex cases.

\begin{theorem}\label{thm:outer_conv}
Suppose that Assumption \ref{asm:f_1} holds and the lower-level NE solver has convergence guarantee. In Framework \ref{frame_IncenRl}, let the update rule $\theta_{k+1}=\theta_k-\beta_k\nabla_{\theta} f_{*}(\theta_k)$
for incentive parameter $\theta_k$ run for $T$ iterations with learning rates $\beta_k=\frac{1}{L}$, then we have
\begin{align*}
    \min_{k=0,...,T}\norm{\nabla_\theta f_*(\theta_k)}^2\leq\frac{2L(f_*(\theta_0)-f_*(\theta^*))}{T+1}\leq\frac{4LM}{T+1}.
\end{align*}
\end{theorem}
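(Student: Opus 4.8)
The plan is to run the standard descent-lemma argument for gradient descent on a non-convex but $L$-smooth objective, turning the $L$-smoothness in Assumption~\ref{asm:f_1} into a guaranteed per-step decrease of $f_*$ and then telescoping over the $T$ iterations. First I would invoke the descent lemma: because $\nabla_\theta f_*$ is $L$-Lipschitz, for any $\theta,\theta'\in\Theta$ one has $f_*(\theta')\le f_*(\theta)+\inprod{\nabla_\theta f_*(\theta)}{\theta'-\theta}+\frac{L}{2}\norm{\theta'-\theta}^2$. Substituting the update $\theta_{k+1}-\theta_k=-\beta_k\nabla_\theta f_*(\theta_k)$ with $\beta_k=1/L$, the linear term contributes $-\frac{1}{L}\norm{\nabla_\theta f_*(\theta_k)}^2$ and the quadratic term contributes $+\frac{1}{2L}\norm{\nabla_\theta f_*(\theta_k)}^2$, so the two combine to give the one-step inequality $f_*(\theta_{k+1})\le f_*(\theta_k)-\frac{1}{2L}\norm{\nabla_\theta f_*(\theta_k)}^2$. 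This is precisely where the choice $\beta_k=1/L$ enters: it is the step size that leaves a net coefficient of $-\frac{1}{2L}$ in front of the squared gradient norm.

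Next I would rearrange this into $\frac{1}{2L}\norm{\nabla_\theta f_*(\theta_k)}^2\le f_*(\theta_k)-f_*(\theta_{k+1})$ and sum over $k=0,\ldots,T$. The right-hand side telescopes to $f_*(\theta_0)-f_*(\theta_{T+1})$, and since $\theta^*=\argmin_\theta f_*(\theta)$ we have $f_*(\theta_{T+1})\ge f_*(\theta^*)$, yielding $\frac{1}{2L}\sum_{k=0}^{T}\norm{\nabla_\theta f_*(\theta_k)}^2\le f_*(\theta_0)-f_*(\theta^*)$. I would then lower-bound the sum by $(T+1)$ times its smallest term, i.e. $\min_{k=0,\ldots,T}\norm{\nabla_\theta f_*(\theta_k)}^2\le\frac{1}{T+1}\sum_{k=0}^{T}\norm{\nabla_\theta f_*(\theta_k)}^2$, which immediately gives the first claimed bound $\frac{2L(f_*(\theta_0)-f_*(\theta^*))}{T+1}$. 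The second inequality is then just the boundedness half of Assumption~\ref{asm:f_1}: $f_*(\theta_0)-f_*(\theta^*)\le|f_*(\theta_0)|+|f_*(\theta^*)|\le 2M$, so the bound becomes $\frac{4LM}{T+1}$.

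The argument is essentially the textbook non-convex convergence of gradient descent, so no individual step is genuinely difficult; the substance of the theorem sits in Assumption~\ref{asm:f_1} itself, namely that the implicitly defined composite $f_*(\theta)=f(\theta,\phi_*(\theta))$ is $L$-smooth and bounded even though $\phi_*(\theta)$ is only characterized through the lower-level NE. The one conceptual point I would flag is that Framework~\ref{frame_IncenRl} updates with the exact gradient $\nabla_\theta f_*(\theta_k)$ from Eq.~(\ref{eq:grad_f}), whereas the black-box solver returns only an approximate NE; under the theorem's hypothesis that the solver has a convergence guarantee, the returned policy can be driven arbitrarily close to $\phi_*(\theta_k)$, so the induced gradient error is controllable and the exact-gradient recursion analyzed above is the appropriate idealization. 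This coupling between lower-level solver accuracy and the upper-level gradient is the only place where anything beyond the standard descent argument is required, and I expect it to be the main obstacle to a fully rigorous, error-tracking version of the statement.
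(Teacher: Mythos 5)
Your proposal is correct and follows essentially the same route as the paper's proof: the descent lemma from $L$-smoothness, substitution of the update with $\beta_k=1/L$ to get the per-step decrease $-\frac{1}{2L}\norm{\nabla_\theta f_*(\theta_k)}^2$, telescoping, bounding the minimum by the average, and finally invoking boundedness to get $f_*(\theta_0)-f_*(\theta^*)\le 2M$. The remark about the gap between the exact-gradient recursion and the approximate NE returned by the black-box solver is a fair observation, but the paper likewise analyzes only the idealized exact-gradient iteration.
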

Theorem \ref{thm:outer_conv} implies that there exist a subsequence in the sequence of gradient norms $\{\norm{\nabla_\theta f_*(\theta_k)}\}_{k=0}^{\infty}$ which converges to zero. Therefore, Theorem \ref{thm:outer_conv} guarantees that the incentive parameter $\theta$ could always converge to a sub-optimal point with sublinear convergence rate in the upper level. It is easy to know if the composite function $f^*(\theta)$ is convex, it converges to the global optimal point.
The detailed proof is shown in Appendix \ref{sup:proof_outer}.

\section{Implementations}\label{sec:imp}
The direct implementation of PEM and Entropy-regularized OMWU are not practical/efficient although they have the theoretical convergence guarantee. PEM is a double-loop algorithm that invokes solving the NE of an entropy-regularized Matrix game for each inner-loop. Although Entropy-regularized OMWU is a single-loop algorithm that adapts a two-timescale iteration, its update rules are for the tabular setting, in which the policy are per state updated. 
Therefore, we modify the vanilla algorithm of Entropy-regularized OMWU with some techniques inspired by the multi-agent soft-actor-critic (MASAC) and propose the DA-SAC algorithm (Algo. \ref{alg_IncenRl_sac}). We organize the modifications as the following: 


\begin{enumerate}
    \item[i)] In stead of the per-state policy update in Entropy-regularized OMWU,  we optimize a separate policy model via minimizing its KL divergence from the exponential energy function, i.e., we establish an actor-critic structure as widely accepted in SAC.
    
    
    \item[ii)] We use one-step gradient update with great practical success accepted by most of MARL practical algorithms, e.g. SAC~\cite{haarnoja2018soft}, MADDPG~\cite{lowe2017multi} and MAAC~\cite{iqbal2019actor}, instead of the extragradient technique introduced in Entropy-regularized OMWU. 
    
    

    \item[iii)] We follow SAC to set the energy function as a multiple of the Q function, where the scale factor is the reciprocal of the auto-tuned temperature coefficient~\cite{haarnoja2018soft2}.
\end{enumerate}

Updated pseudo-code and all remaining details are included in Appendix \ref{supp:exp}.

\begin{figure*}[h]
\centering
\subfigure[RWS ($\theta_0$)]{\includegraphics[width=0.21\textwidth]{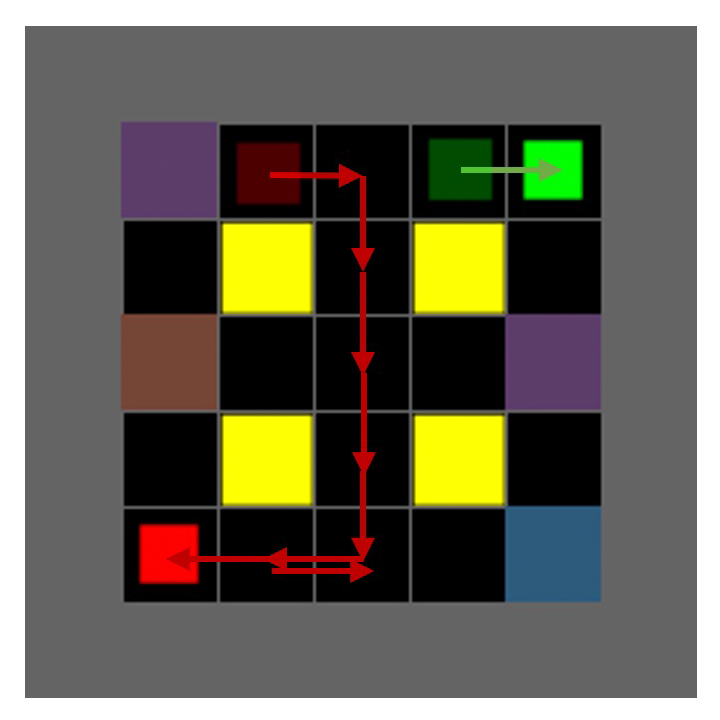}}
\label{subfigure:RWS0}
\subfigure[RWS ($\theta^*$)]{\includegraphics[width=0.21\textwidth]{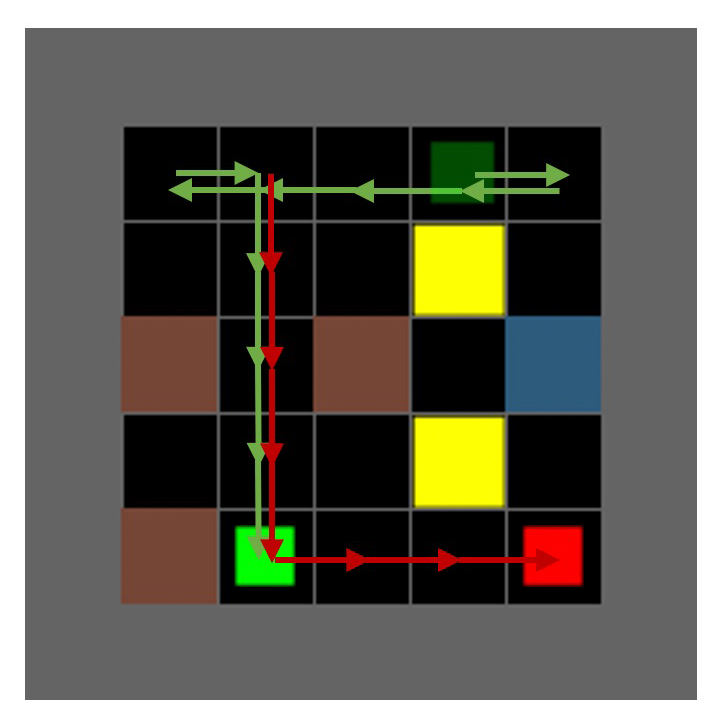}}
\subfigure[PP ($\theta_0$)]{\includegraphics[width=0.21\textwidth]{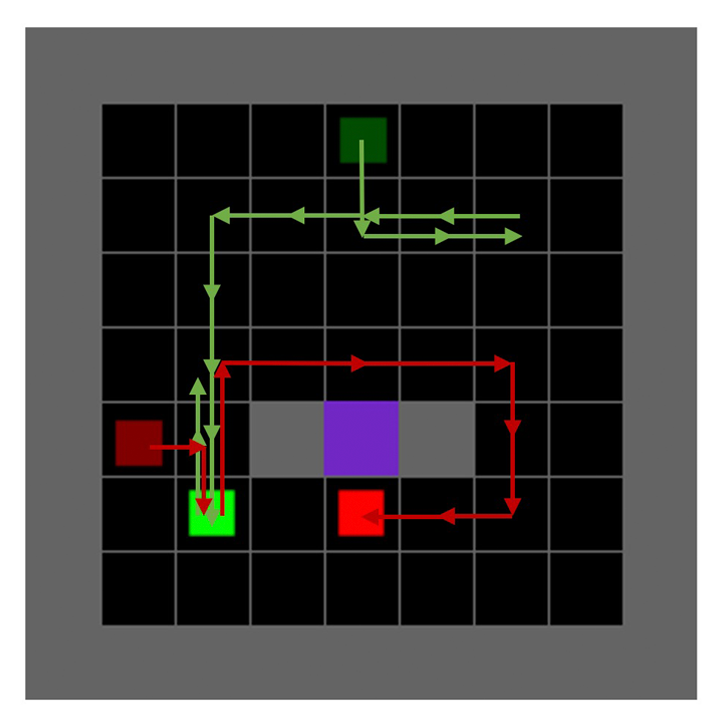}}
\subfigure[PP ($\theta^*$)]{\includegraphics[width=0.21\textwidth]{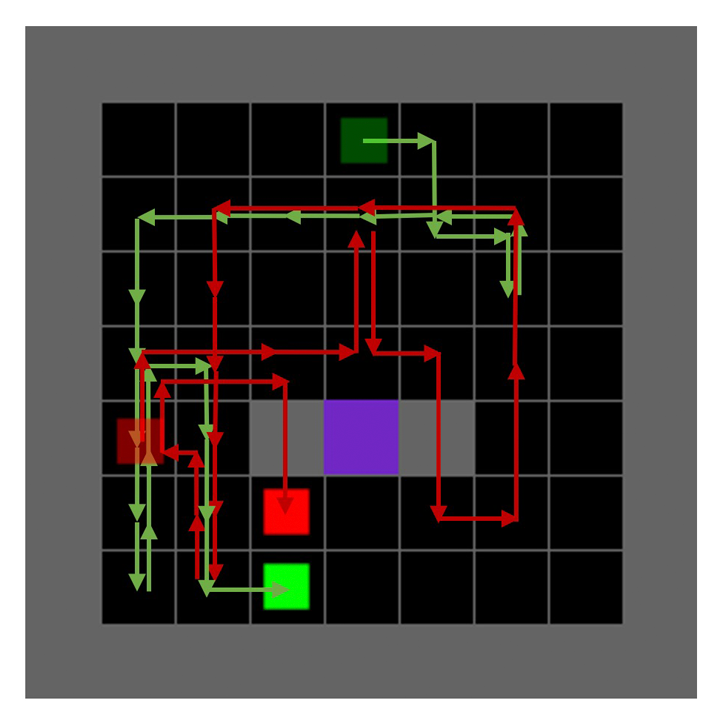}}
\vspace{-0.15in}
\caption{Trace for incentivized RWS and incentivized PP with different incentive parameters $\theta_0$ and $\theta^*$ (left to right). $\theta_0$ is the initial incentive parameter and $\theta^*$ is the optimized parameter with DA framework.
The NEs after arbitration (Figure (b) \& (d)) ensure higher exploration rate in the grid-world games compared with the initial NEs (Figure (a) \& (c)) without arbitration.}
\label{fig:result}
\end{figure*}

\section{Experiments}\label{sec:exp1}
We evaluate the performance of the DA framework in sense of sample efficiency for the upper level in two zero-sum Markov games with two players and their incentivized variant, running with scissors (RWS)~\cite{vezhnevets2020options} which contains cyclic reinforcement learning challenges~\cite{balduzzi2018re,leibo2019autocurricula,omidshafiei2019alpha} and a standard predator-prey, both of which are implemented on a grid-world environment~\cite{gym_minigrid}.

\subsection{Evaluation Environments}\label{subsec:env}
\textbf{Running with scissors (RWS).} 
In $5\times5$ grid-world RWS,
resources (rock, scissors, or paper) are tailed in resources pools (Figure \ref{fig:RWS}). Three of them are deterministic pools spaced with fixed resources and six others are nondeterministic filled with random resources.
Two players randomly spawn among the free grids and fully observe the environment and resources owned by their opponent.
Each chooses to move one grid in four directions at one step. When the player $i$ steps on the grid with the resource, it collects the resource to its inventory $v^i$, and the resource is removed from the grid. 
Four rocks, papers and scissors are randomly distributed to two players at the initial state and each player is assigned at least one for each resource. After 25 steps, the confrontation occurs and the payoffs for each player, $r^0$ and $r^1$, are calculated on the basis of the standard antisymmetric matrix $M$~\cite{hofbauer2003evolutionary} as
\vspace{-0.1in}
\begin{align*}
r^0=\frac{v^0}{\|v^0\|} M \left(\frac{v^1}{\|v^1\|}\right)^\intercal = -r^1,
\quad \text{where  }M =  \left[\begin{array}{ccc}
0 & -1 & 1\\
1 & 0 & -1\\
-1 & 1 & 0
\end{array} 
\right].
\end{align*}
\vspace{-0.1in}\\
Besides, the rules of the game are assumed to be a blank to the players and they must explore to discover them.

\begin{figure}[t]
    \vspace{-0.05in}
    \includegraphics[width=0.35\textwidth]{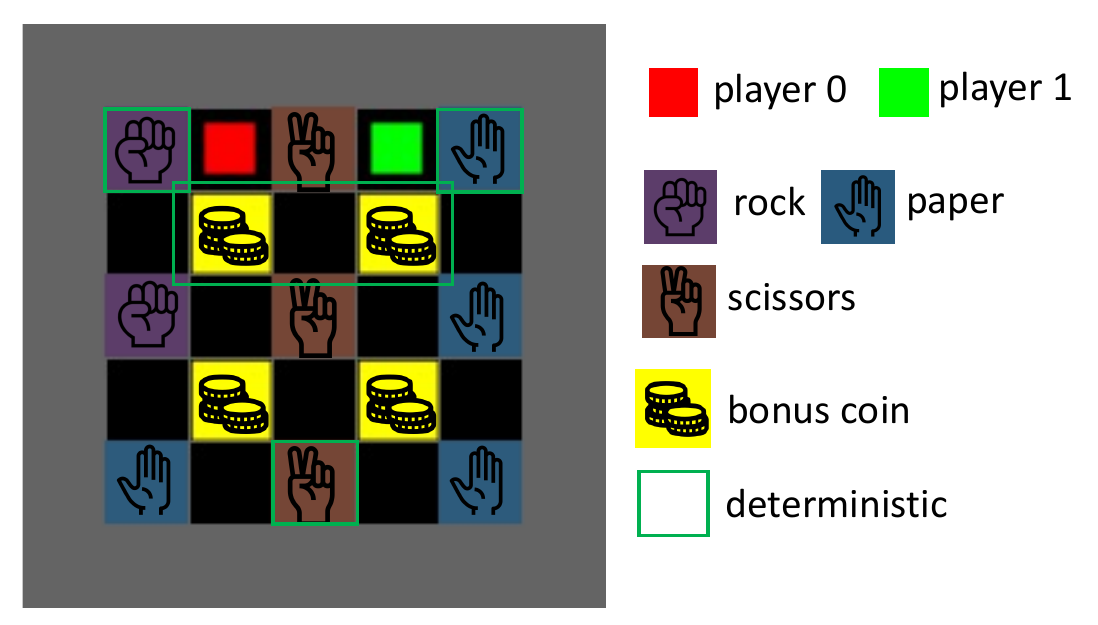}
    \vspace{-0.1in}
    \caption{Configuration of the Rock-with-Scissors (RWS). Two players are randomly spawned among the free grids and collect resources for confrontation.
    The bonus coins are the incentive added by the designer to encourage exploration.
    }
    \label{fig:RWS}
\end{figure}

RWS is an extension of the classic matrix game rock-paper-scissors (RPS) with increasing complexity, and the game-theoretic structures of the RPS are satisfied in RWS. Since NE is to keep the same number of each resource in their inventory~\cite{van2007rock}, players tend to stay around a grid after they achieve NE instead of continuing to explore, while the upper-level designer hopes more grids can be explored. Given such a conflict, the designer intervenes in the rewards by scattering gold coins as an incentive to channel players exploring more grids. 

There are four grids that generate coins. Two of them are filled with fixed bonuses 0.5, and others are filled with adjustable bonuses marked as the incentive parameter $\theta\!=\!(\theta_1,\theta_2)\in\mathbb{R}^2$. Once the player $i$ collects coins, the corresponding bonuses are added to its total reward. Conversely, the bonuses collected by player $i$ are deducted from its opponent $-i$. The total reward for the player $i$ is
\begin{align*}
    \vspace{-0.2in}
    r^i_{total} =& r^i + \sum\nolimits_{j=1,2} \theta_j (I^i_j-I^{-i}_j)+0.5(N^i_{\text{fixed}}-N^{-i}_{\text{fixed}}),
\end{align*}
where $I^i_j\!=\!1$ if the coin with incentive bonus $\theta_j$ is collected by the player $i$, otherwise $I^i_j\!=\!0$ and $N^i_{\text{fixed}}$ is the number of fixed coins collected by the player $i$. Then the exploration rate ($ER$) can be defined as
\begin{align}\label{eq:ER}
\vspace{-0.05in}
    ER(\theta) = \frac{\sum_iG_e^i(\theta)}{G_{\text{total}}},
\end{align}
where the $G_e^i(\theta)$ denotes the number of grids explored by player $i$ and $G_{\text{total}}$ denotes the total number of grids. Therefore, the designer's loss is defined as $f_*(\theta) = 1-ER(\theta)$.



\begin{figure}[t]
    \vspace{-0.05in}
    \includegraphics[width=0.35\textwidth]{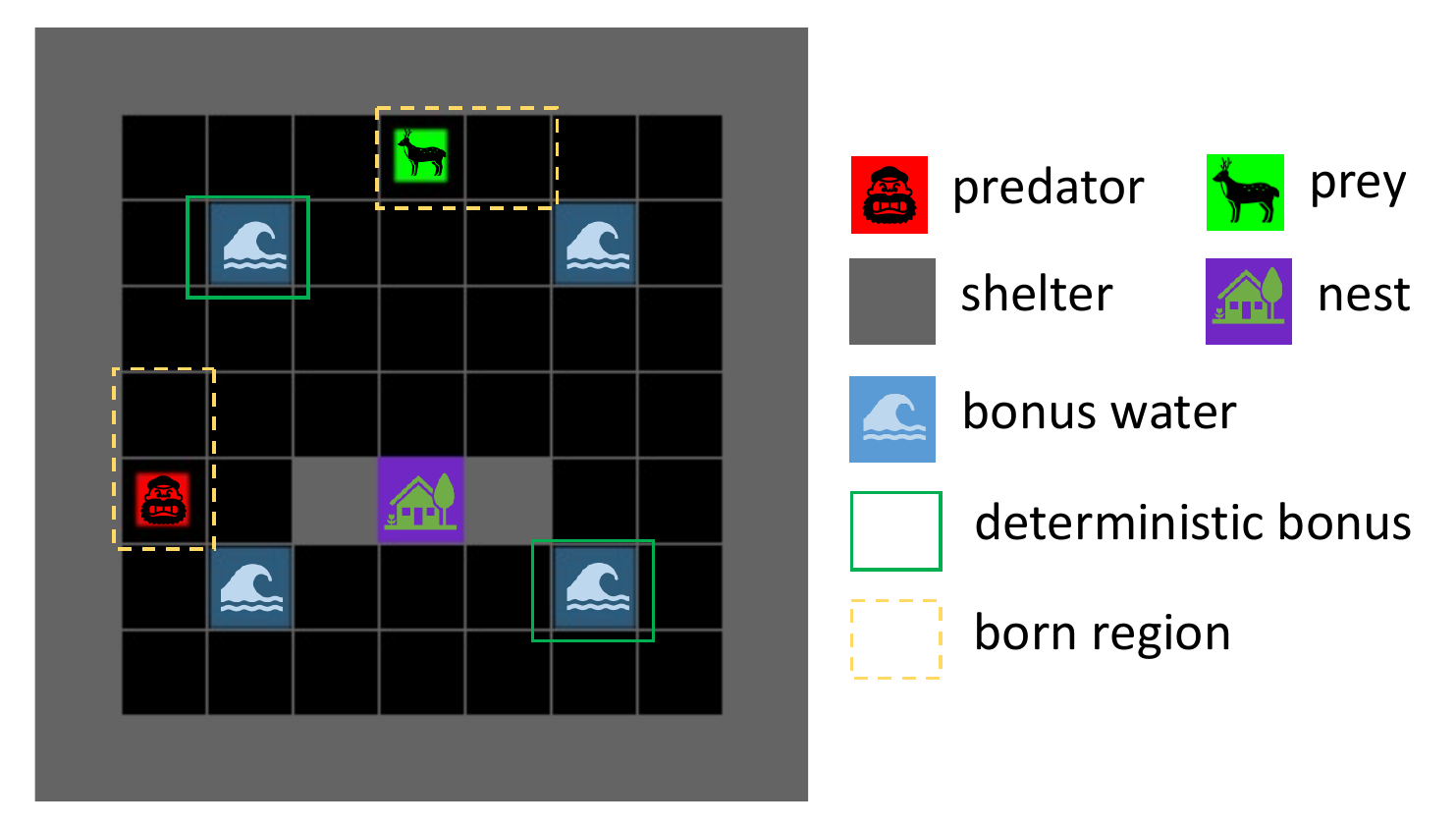}
    \vspace{-0.1in}
    \caption{Configuration of the Predator-Prey (PP). The predator aims to catch the prey and the prey aims to return to the nest. Both of them are randomly spawned in born region. The water pools are set as the incentive by the designer.}
    \label{fig:pp}
\end{figure}


\textbf{Predator-prey (PP).} The environment for the predator-prey is set in a $7\times7$ grid-world as Figure \ref{fig:pp}, containing the prey's nest, two shelters, and water pools. Both the predator and prey cannot stop at the shelter. At each step, the prey can move one grid and the predator can move one or two grids in four directions. The game is terminated under three conditions: i) the episode achieves maximum length 25; ii) the prey returns the nest, then the reward for the prey, $r_{\text{prey}}$, is $+1$ and the reward for the predator, $r_{\text{pred}}$, is $-1$; iii) the predator catches the prey (they arrive on the same grid at the same time), then the reward for prey, $r_{\text{prey}}$, is $-1$ and the reward for predator, $r_{\text{pred}}$, is $+1$. From a whole-ecosystem perspective, like the dissemination of plant seeds, the designer stimulates them to explore more places by setting four pools with different volumes of water in the four fixed grids. Two of them are filled with fixed water volume with an additional 0.1 bonuses and others are filled with adjustable water volume, whose bonuses are denoted as the incentive parameter $\theta = (\theta_1,\theta_2) \in \mathbb{R}^2$. The player gets the corresponding bonus if it finds and drinks up a pool of water while its opponent gets the corresponding penalty. Then the total pay-off for the predator is
\begin{align*}
    r_{\text{pred}}\!= &I_{\text{catch}} \!-\! I_{\text{nest}} \!+\! \sum\nolimits_{j=1,2} \theta_j(I_j^{\text{pred}}-I_{j}^{\text{prey}})\!+\!0.5(N_{\text{fixed}}^{\text{pred}}-N_{\text{fixed}}^{\text{prey}}),
\end{align*}
where $I_{\text{catch}}$ and $I_{\text{nest}}$ are two indicators for that the predator catch the prey or the prey back to the nest, $I_j^{\text{pred}}$ (or $I_{j}^{\text{prey}}$) is the indication functions of whether the pool with $\theta_j$ amount of water is found by the predator (or prey) and $N_{\text{fixed}}^{\text{pred}}$ (or $N_{\text{fixed}}^{\text{prey}}$) represents the number of fixed pools discovered by the predator (or prey). In addition, we define the reward for prey as $r_{\text{prey}} = -r_{\text{pred}}$. The arbitrating objective for PP is the same as what is defined for RWS. 


\subsection{Results and Analysis}
We use \textit{GridSearch(M)} to denote the zeroth-order method,  which spreads $M$ number of points into the feasible region for incentive parameter $\theta$, as a competitor to DASAC. Additionally, we adopt Bayesian optimization to improve the efficiency of the zeroth-order method, denoted as \textit{BayesOpt}, which is also used in~\cite{mguni2019coordinating}.
The experimental details of GridSearch(M) and BayesOpt are included in Appendix \ref{supp:GSM}.
Figure \ref{fig:f} shows the trend of the arbitrating objective $f_*$ 
after applying DASAC onto the environments. The green dotted line in Figure \ref{fig:f} is the best objective
score of GridSearch(100).
On the one hand, Figure \ref{fig:f} shows the differentiable first-order framework we propose is capable to admit a better incentive parameter $\theta$ with a lower objective loss. On the other hand, Figure \ref{fig:f} indicates that GridSearch(M) requires 100 samples, in another word, solving the lower-level Markov game 100 times, to achieve an equivalent performance of DASAC. 
DASAC can achieve better efficiency (~5 outer-loop iterations) than these zeroth-order methods even though BayesOpt (~12 outer-loop iterations, >2 times slower than ours) has improved the efficiency of GridSearch (100 outer-loop iterations). After considering the gradient computation, DASAC requires ~3.5h for one iteration and the zeroth-order methods (Gridsearch and BayesOpt) requires ~3h for one iteration under the same GPU setting. Our method is ~2x faster than BayesOpt and ~17x faster than Gridsearch(100).
It means
our first-order framework requires fewer evaluations for the upper level to obtain a desirable NE policy, leading to a higher sample efficiency. 

\begin{figure}[t]
\centering
\includegraphics[width=.85\linewidth]{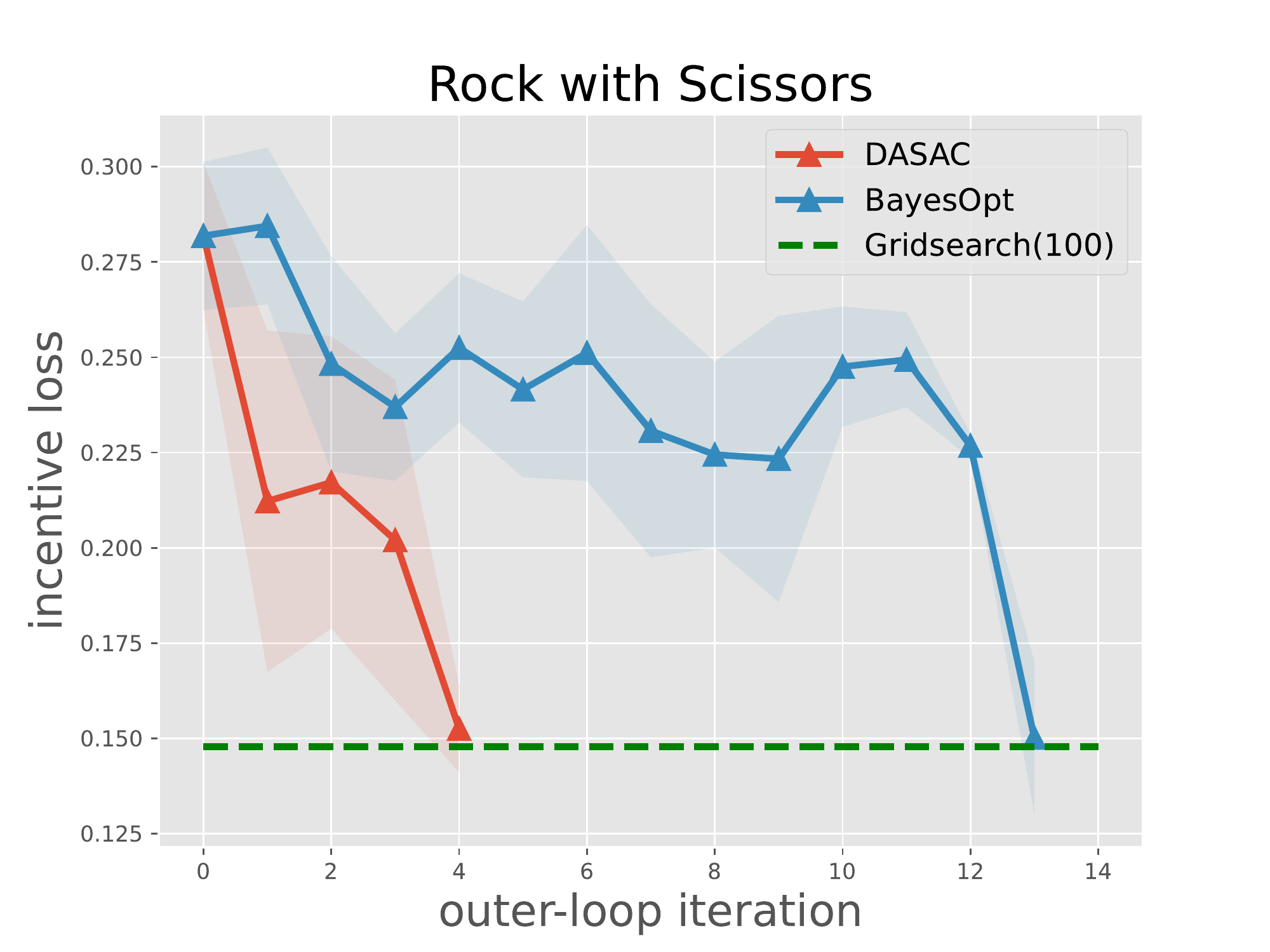}
\includegraphics[width=.85\linewidth]{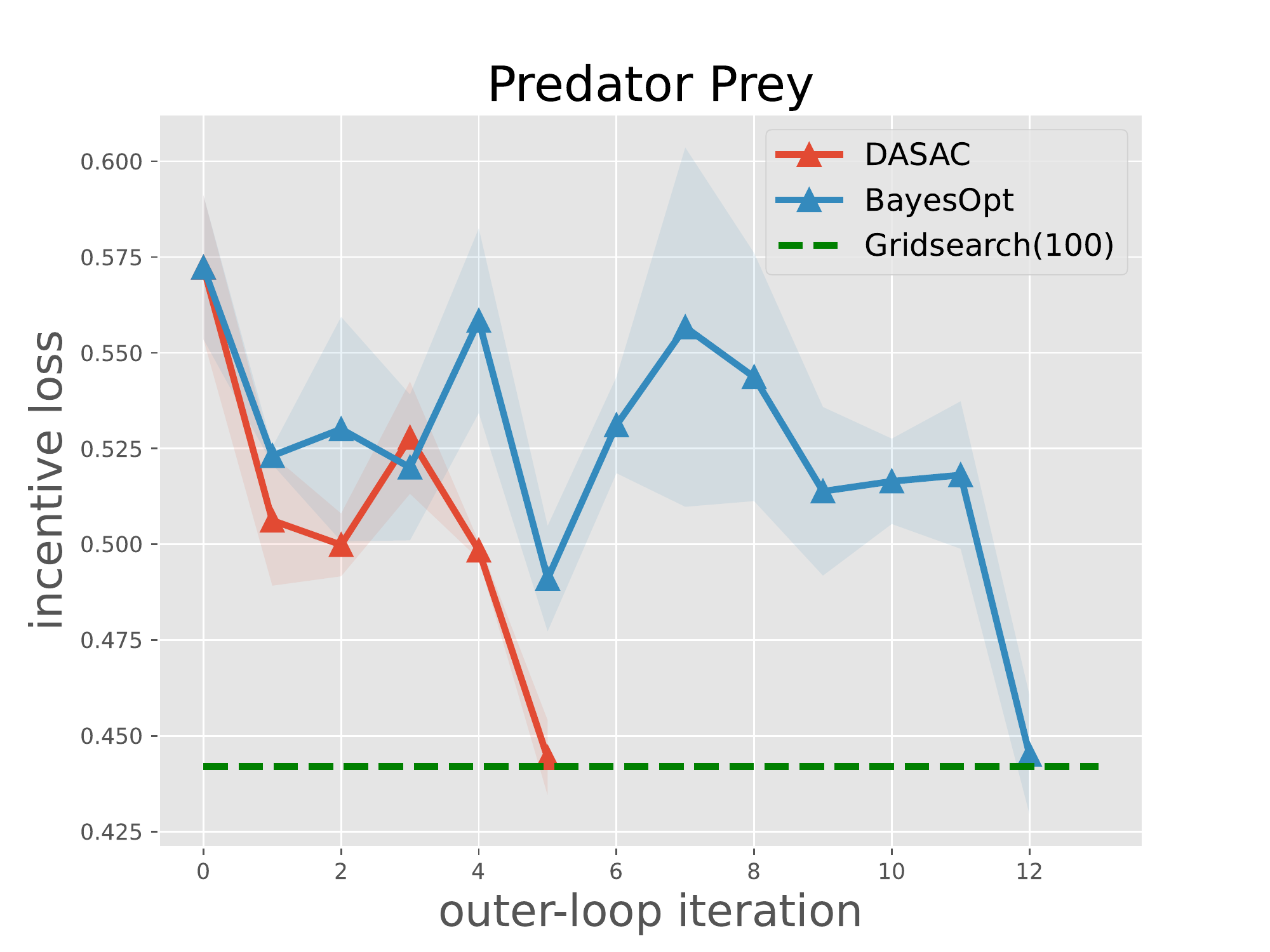}
\vspace{-3mm}
\caption{Arbitrating objective loss $f_*$ for DASAC, BayesOpt, and GridSearch(100) in RWS (top) and PP (bottom). The green dotted line indicates the best objective score of GridSearch(100). Our DASAC can outperform two zeroth-order baselines, GridSearch(100) and BayesOpt, by a large margin, which shows that our first-order framework requires fewer evaluations for the upper level to obtain a desirable NE policy, leading to a higher sample efficiency. }
\vspace{-0.25in}
\label{fig:f}
\end{figure}
Furthermore, we will present our DASAC tends to reach a desirable NE via analyzing the NE behavior at the initial incentive $\theta_0$ and the optimal incentive $\theta^*$ (marked in Figure \ref{fig:f}).
Figure \ref{fig:result} illustrates the trace of players when taking the NE policies in RWS and PP environments with initial incentive $\theta_0$ and the optimal incentive $\theta^*$ respectively. 
It demonstrates that the optimal $\theta^*$ navigates the system to a better NE that players are more exploration-minded. 
To be more specific, with the initial incentive parameter $\theta_0$, although players start to take account of the exploration rate and think outside the habitual behaviors in an unincentivized system, e.g., standstill but not collecting resources after a few steps in RWS or stalemate in a relative diagonal position in PP, they still tend to maintain the habitual behavior after a few explorations. As the training for the incentive parameter goes on, the habitual behaviors in an unincentivized system are gradually abandoned and they develop new patterns of behavior that lead to a higher exploration rate and ensure their own goals at the same time. For example, in PP, the predator still tries their best to catch the prey, and the prey tries to escape as soon as possible. The difference is that the prey tends to circle around, exploring the water source to replenish its energy instead of staying in one place and avoiding risky moves.

\section{Conclusion}\label{sec:con}
Our work initiates a provably differentiable framework in context with MARL to solve a bi-level arbitrating problem. We provide the convergence proof and empirically validate the effectiveness of the DASAC on arbitrating in two Markov games.
Our work can be extended to multiple-player and general-sum settings with proper NE solvers whose convergence properties have been empirically shown. 
Therefore, we will empirically test the performance of the performance of the DA framework on more general cases in our future work.

\bibliographystyle{ACM-Reference-Format} 
\bibliography{sample}


\onecolumn
\newpage
\appendix
\hrule height 4pt
\vskip 0.15in
\vskip -\parskip
\begin{center}
{\LARGE\bf Differentiable Arbitrating in Zero-sum Markov Games \par} 
\end{center}
\vskip 0.25in
\vskip -\parskip
\hrule height 1pt

\DoToC
\section{Proof for Lemma \ref{lma:V2}}\label{supp:V2}
In this section, we provide the formulation of gradients $\nabla_{\theta}V^{(i)}_{\pi_{\phi}}(s;\theta), \nabla_{\phi^i} V^{(i)}_{\pi_{\phi}}(s;\theta), \nabla^2_{\theta\phi^i}V^{(i)}_{\pi_{\phi}}(s;\theta)$ and $ \nabla^2_{\phi\phi^i}V^{(i)}_{\pi_{\phi}}(s;\theta)$. We first introduce the derivation of gradients for $V^i_{\pi_{\phi}}(s;\theta)$ in the following Lemma \ref{lma:V} and then move into the proof for Lemma \ref{lma:V2}.
\begin{lemma}\label{lma:V}
  For incentivized Markov game $\mathcal{G}_{\theta}=(\mathcal{N},\mathcal{S},\{\mathcal{A}^i\}_{i\in\{1,2\}},\mathcal{P},\{r^i(\cdot;\theta)\}_{i\in\{1,2\}},\gamma)$, let $R^i(\tau;\theta)$ be the total reward for an sample trajectory $\tau$ on following $\pi_{\phi}$ for $T$ steps, starting from initial state $s_0=s$. We have
  \begin{align*}
      \nabla_{\theta}V^i_{\pi_{\phi}}(s;\theta)&=\mathbb{E}_{\tau\sim D^{\pi_\phi}}\left[\nabla_{\theta}R^i(\tau;\theta)|s_0=s\right],\\
      \nabla_{\phi^i} V^i_{\pi_{\phi^i}}(s;\theta)&=\mathbb{E}_{\tau\sim D^{\pi_\phi}}\left[R^i(\tau;\theta)\sum_{t=0}^{T-1}\nabla_{\phi^i}\log\pi^i_{\phi^i}(a_t|s_t)\Big|s_0=s\right],\\
      \nabla^2_{\theta\phi^i}V^i_{\pi_{\phi}}(s;\theta)&=\mathbb{E}_{\tau\sim D^{\pi_\phi}}\left[\left(\sum_{t=0}^{T-1}\nabla_{\phi^i}\log\pi^i_{\phi^i}(a_t|s_t)\right)\nabla_{\theta}R^i(\tau;\theta)^\intercal\Big|s_0=s\right],\\
      \nabla^2_{\phi\phi^i}V^i_{\pi_{\phi}}(s;\theta)&=\mathbb{E}_{\tau\sim D^{\pi_\phi}}\left[R^i(\tau;\theta)\left((\sum_{t=0}^{T-1}\nabla_{\phi^i}\log\pi^i_{\phi^i}(a_t|s_t))(\sum_{t=0}^{T-1}\nabla_{\phi}\log\pi_{\phi}(a_t|s_t))^\intercal\right.\right.\notag\\
      &\left.\left.\quad\quad\quad\quad\quad\quad+\sum_{t=0}^{T-1}\nabla^2_{\phi\phi^i}\log\pi_{\phi}(a_t|s_t)\right)\Big|s_0=s\right],
  \end{align*}
  where
  \begin{align*}
      \nabla_{\phi}\log\pi_{\phi}(a_t|s_t)&=(\nabla_{\phi^1}\log\pi^1_{\phi^1}(a_t|s_t),\nabla_{\phi^2}\log\pi^2_{\phi^2}(a_t|s_t))\notag\\
      \nabla^2_{\phi\phi^i}\log\pi_{\phi}(a_t|s_t)&=(\nabla^2_{\phi^1\phi^i}\log\pi^1_{\phi^1}(a_t|s_t),\nabla^2_{\phi^2\phi^i}\log\pi^2_{\phi^2}(a_t|s_t)).\notag
  \end{align*}   
\end{lemma}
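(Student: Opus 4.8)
The plan is to prove all four identities by the likelihood-ratio (score-function) trick applied to the finite-horizon trajectory measure $D^{\pi_\phi}(\tau)$. The starting point is the integral representation $V^i_{\pi_\phi}(s;\theta)=\int D^{\pi_\phi}(\tau)\,R^i(\tau;\theta)\,d\tau$ conditioned on $s_0=s$, together with the factorization $\log D^{\pi_\phi}(\tau)=\log\rho_0(s_0)+\sum_{t=0}^{T-1}\log P(s_{t+1}|s_t,a_t)+\sum_{t=0}^{T-1}\log\pi^1_{\phi^1}(a_t^1|s_t)+\sum_{t=0}^{T-1}\log\pi^2_{\phi^2}(a_t^2|s_t)$. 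The decisive observation is that neither $\rho_0$ nor the transition kernel $P$ depends on $\theta$ or $\phi$, so that $\nabla_\theta\log D^{\pi_\phi}(\tau)=0$ and $\nabla_{\phi^i}\log D^{\pi_\phi}(\tau)=\sum_{t=0}^{T-1}\nabla_{\phi^i}\log\pi^i_{\phi^i}(a_t|s_t)$; meanwhile only the reward $R^i$ carries the $\theta$-dependence.

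First I would dispatch the two first-order identities. For $\nabla_\theta V^i_{\pi_\phi}$, since the trajectory law is $\theta$-free, differentiation under the expectation yields immediately $\nabla_\theta V^i_{\pi_\phi}=\mathbb{E}_{\tau\sim D^{\pi_\phi}}[\nabla_\theta R^i(\tau;\theta)]$. For $\nabla_{\phi^i}V^i_{\pi_\phi}$, I apply $\nabla_{\phi^i}D^{\pi_\phi}=D^{\pi_\phi}\nabla_{\phi^i}\log D^{\pi_\phi}$ inside the integral and substitute the score expression above, producing $\nabla_{\phi^i}V^i_{\pi_\phi}=\mathbb{E}_{\tau\sim D^{\pi_\phi}}[R^i(\tau;\theta)\sum_{t=0}^{T-1}\nabla_{\phi^i}\log\pi^i_{\phi^i}(a_t|s_t)]$. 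This is the standard REINFORCE/policy-gradient identity specialized to a fixed horizon $T$.

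Next I would obtain the two second-order identities by differentiating the first-order ones once more. The mixed Hessian $\nabla^2_{\theta\phi^i}V^i_{\pi_\phi}$ follows by taking $\nabla_\theta$ of the $\nabla_{\phi^i}V^i_{\pi_\phi}$ formula; because both the measure and the score factor $\sum_t\nabla_{\phi^i}\log\pi^i_{\phi^i}$ are $\theta$-independent, only $R^i$ is differentiated, giving the product $\mathbb{E}_{\tau\sim D^{\pi_\phi}}[(\sum_{t}\nabla_{\phi^i}\log\pi^i_{\phi^i})\,\nabla_\theta R^i(\tau;\theta)^\intercal]$. The Hessian $\nabla^2_{\phi\phi^i}V^i_{\pi_\phi}$ is the term requiring care: applying $\nabla_\phi$ to $\mathbb{E}_{\tau}[R^i\,g^i]$, with $g^i(\tau)=\sum_t\nabla_{\phi^i}\log\pi^i_{\phi^i}$, invokes the product rule across two sources of $\phi$-dependence, namely the measure $D^{\pi_\phi}$ and the vector $g^i$ itself. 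The measure term contributes, via the same score trick, $\mathbb{E}_{\tau}[R^i\,g^i(\sum_t\nabla_\phi\log\pi_\phi)^\intercal]$, while differentiating $g^i$ directly contributes $\mathbb{E}_{\tau}[R^i\sum_t\nabla^2_{\phi\phi^i}\log\pi_\phi]$; summing these two pieces reproduces the stated expression.

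The main obstacle, beyond bookkeeping, is twofold. First, one must justify interchanging $\nabla$ with the finite-horizon integral over $\tau$; this follows from dominated convergence once the policies $\pi^i_{\phi^i}$ are assumed smooth in $\phi^i$ and bounded away from $0$ on their support and the reward is uniformly bounded, so I would record these regularity conditions rather than belabor them. Second, and more delicate, is tracking the matrix shapes and transposes so that the outer-product term $(\sum_t\nabla_{\phi^i}\log\pi^i_{\phi^i})(\sum_t\nabla_\phi\log\pi_\phi)^\intercal$ lands in $\mathbb{R}^{d_i\times d}$ and combines consistently with $\sum_t\nabla^2_{\phi\phi^i}\log\pi_\phi$. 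Here I would exploit the factorization $\log\pi_\phi=\log\pi^1_{\phi^1}+\log\pi^2_{\phi^2}$ to make the block structure of the mixed log-Hessians explicit and to verify agreement with the definitions of $\nabla_\phi\log\pi_\phi$ and $\nabla^2_{\phi\phi^i}\log\pi_\phi$ supplied in the statement.
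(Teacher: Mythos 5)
Your proposal is correct and follows essentially the same route as the paper's proof: factorize $\log D^{\pi_\phi}$, observe that only the policy factors carry $\phi$-dependence and only $R^i$ carries $\theta$-dependence, apply the score-function identity for the first-order gradients, and obtain both Hessians by one further differentiation with the product rule splitting the $\phi$-dependence between the measure and the score sum. The only addition beyond the paper is your explicit note on the regularity conditions justifying the interchange of differentiation and expectation, which the paper leaves implicit.
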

\begin{proof}[Proof for Lemma \ref{lma:V}.]
First, we find out $\nabla_{\phi^i}\log{D^{\pi_\phi}(\tau|s_0=s)}$:
  \begin{align*}
      \nabla_{\phi^i}\log{D^{\pi_\phi}(\tau|s_0=s)}
      &=\nabla_{\phi^i}\log\left[\prod_{t=0}^{T-1}P(s_{t+1}|s_t,a_t) \left(\prod_{i=1}^2\pi^i_{\phi^i}(a^i_t|s_t)\right)\right]\\
      &=\nabla_{\phi^i}\!\!\left[\sum_{t=0}^{T-1}\log{P(s_{t+1}|s_t,a_t)}\!+\!\sum_{t=0}^{T-1}\log{\pi^i_{\phi^i}(a_t|s_t)}\!+\!\sum_{t=0}^{T-1}\log{\pi^{-i}_{\phi^{-i}}(a_t|s_t)}\right]=\sum_{t=0}^{T-1}\nabla_{\phi^i}\log{\pi^i_{\phi^i}(a_t|s_t)}.\\
  \end{align*}
  Similarly, we have
  $\nabla_{\phi}\log{D^{\pi_\phi}(\tau|s_0=s)}\!=\!\sum_{t=0}^{T-1}\nabla_{\phi}\log{\pi_{\phi}(a_t|s_t)},$
  where $\nabla_{\phi}\log\pi_{\phi}(a_t|s_t)\!=\!(\nabla_{\phi^1}\log\pi^1_{\phi^1}(a_t|s_t),$ $\nabla_{\phi^2}\log\pi^2_{\phi^2}(a_t|s_t))$. Moreover, recall that
  $
      \nabla_{\phi\phi^i}\log\pi_{\phi}(a_t|s_t)=(\nabla_{\phi^1\phi^i}\log\pi^1_{\phi^1}(a_t|s_t),\nabla_{\phi^2\phi^i}\log\pi^2_{\phi^2}(a_t|s_t)),
 $
  we could derive $\nabla_{\theta}V^i_{\pi_{\phi}}(s;\theta)$, $\nabla^2_{\phi^i\theta}V^i_{\pi_\phi}(s;\theta)$,$\nabla_{\phi^i}V^i_{\pi_\phi}(s;\theta)$ and $\nabla^2_{\phi\phi^i}V^i_{\pi_\phi}(s;\theta)$ as follows,
  \begin{align*}
      \nabla_{\theta}V^i_{\pi_{\phi}}(s;\theta)&=\nabla_{\theta}\mathbb{E}_{\tau\sim D^{\pi_\phi}}\left[R^i(\tau;\theta)|s_0=s\right]=\mathbb{E}_{\tau\sim D^{\pi_\phi}}\left[\nabla_{\theta}R^i(\tau;\theta)|s_0=s\right],\\
      \nabla_{\phi^i}V^i_{\pi_\phi}(s;\theta)&=\nabla_{\phi^i}\mathbb{E}_{\tau\sim D^{\pi_\phi}}\left[R^i(\tau;\theta)|s_0=s\right]
      =\nabla_{\phi^i}\left[\sum_{\tau\sim D^{\pi_\phi}}D^{\pi_\phi}(\tau|s_0=s)R^i(\tau;\theta)\right]\\
      &=\sum_{\tau\sim D^{\pi_\phi}}D^{\pi_\phi}(\tau|s_0=s)\nabla_{\phi^i}\log{D^{\pi_\phi}(\tau|s_0=s)}R^i(\tau;\theta)\\
      &=\mathbb{E}_{\tau\sim
      D^{\pi_\phi}}\left[\nabla_{\phi^i}\log{D^{\pi_\phi}(\tau|s_0=s)}R^i(\tau;\theta)|s_0=s\right]\\
      &=\mathbb{E}_{\tau\sim
      D^{\pi_\phi}}\left[R^i(\tau;\theta)\sum_{t=0}^{T-1}\nabla_{\phi^i}\log{\pi^i_{\phi^i}(a_t|s_t)}|s_0=s\right],
 \end{align*}
where the third equation follows as $\nabla_{\phi^i}D^{\pi_\phi}=D^{\pi_\phi}\frac{\nabla_{\phi^i}D^{\pi_\phi}}{D^{\pi_\phi}}=D^{\pi_\phi}\nabla_{\phi^i}\log{D^{\pi_\phi}}$ and interchanging the gradient and summation.    
\begin{align*}
    \nabla^2_{\theta\phi^i}V^i_{\pi_{\phi}}(s;\theta)&=\nabla_{\theta}\mathbb{E}_{\tau\sim
      D^{\pi_\phi}}\left[R^i(\tau;\theta)\sum_{t=0}^{T-1}\nabla_{\phi^i}\log{\pi^i_{\phi^i}(a_t|s_t)}|s_0=s\right]\\
      &=\mathbb{E}_{\tau\sim
      D^{\pi_\phi}}\left[\left(\sum_{t=0}^{T-1}\nabla_{\phi^i}\log{\pi^i_{\phi^i}(a_t|s_t)}\right)\nabla_{\theta}R^i(\tau;\theta)^\intercal|s_0=s\right],\\
      \nabla^2_{\phi\phi^i}V^i_{\pi_{\phi}}(s;\theta)&=\nabla_{\phi\phi^i}\left[\sum_{\tau\sim D^{\pi_{\phi}}}D^{\pi_{\phi}}(\tau|s_0=s)R^i(\tau;\theta)\right]\\
      &=\nabla_{\phi}\left[\sum_{\tau\sim D^{\pi_\phi}}D^{\pi_\phi}(\tau|s_0=s)\nabla_{\phi^i}\log{D^{\pi_{\phi}}(\tau|s_0=s)}R^i(\tau;\theta)\right]\\
      &=\nabla_{\phi}\left[\sum_{\tau\sim D^{\pi_\phi}}D^{\pi_\phi}(\tau|s_0=s)\left(\sum_{t=0}^{T-1}\nabla_{\phi^i}\log{\pi^i_{\phi^i}(a_t|s_t)}\right)R^i(\tau;\theta)\right]\\
    &=\sum_{\tau\sim D^{\pi_\phi}}R^i(\tau;\theta)\left[\left(\sum_{t=0}^{T-1}\nabla_{\phi^i}\log{\pi^i_{\phi^i}(a_t|s_t)}\right)\nabla_{\phi}D^{\pi_{\phi}}(\tau|s_0=s)^\intercal\right.\\
    &\quad\quad\quad\quad\quad\quad\quad\quad\quad\quad\left.+D^{\pi_{\phi}}(\tau|s_0=s)\left(\sum_{t=0}^{T-1}\nabla_{\phi\phi^i}\log{\pi_{\phi}(a_t|s_t)}\right)\right]\\
    &=\!\!\sum_{\tau\sim D^{\pi_\phi}}\!\!R^i(\tau;\theta)D^{\pi_{\phi}}(\tau|s_0\!=\!s)\!\left[\!\left(\!\sum_{t=0}^{T-1}\nabla_{\phi^i}\log{\pi^i_{\phi^i}(a_t|s_t)}\right)\!\nabla_{\phi}\!\log D^{\pi_{\phi}}(\tau|s_0\!=\!s)^\intercal\right.\\
    &\quad\quad\quad\quad\quad\quad\quad\quad\quad\quad\quad\quad\quad\quad\quad\quad\left.+\sum_{t=0}^{T-1}\nabla_{\phi\phi^i}\log{\pi_{\phi}(a_t|s_t)}\right]\\
    &=\mathbb{E}_{\tau\sim D^{\pi_\phi}}\left[R^i(\tau;\theta)\left((\sum_{t=0}^{T-1}\nabla_{\phi^i}\log\pi^i_{\phi^i}(a_t|s_t))(\sum_{t=0}^{T-1}\nabla_{\phi}\log\pi_{\phi}(a_t|s_t))^\intercal\right.\right.\notag\\
    &\quad\quad\quad\quad\quad\quad\quad\quad\quad\quad\quad\quad\quad\quad\quad\quad\left.\left.+\sum_{t=0}^{T-1}\nabla^2_{\phi\phi^i}\log\pi_{\phi}(a_t|s_t)\right)\Big|s_0=s\right].
\end{align*}
\end{proof}
In the remaining of this section, we complete the proof of Lemma \ref{lma:V2} in main body.
\begin{proof}[Proof for Lemma \ref{lma:V2}.]Recall the definition of $V^{(i)}_{\pi_{\phi}}(s;\theta)$ in (\ref{regularized V}) and  $r^{i}_{\pi}(s,a^i,a^{-i};{\theta})$ in (\ref{def:regularized reward}), we have,
\begin{align}
    V^{(i)}_{\pi_{\phi}}(s;\theta)&=\mathbb{E}_{\tau\sim D^{\pi_{\phi}}}\left[\sum_{t=0}^{T-1}\gamma^t{\cdot}r^{(i)}_{\pi}(s_t,a^i_t,a^{-i}_t;{\theta})|s_0=s\right]\notag\\
    &=\mathbb{E}_{\tau\sim D^{\pi_{\phi}}}\left[\sum_{t=0}^{T-1}\gamma^t{\cdot}r^{i}_{\pi}(s_t,a^i_t,a^{-i}_t;{\theta})-\lambda\sum_{t=0}^{T-1}\gamma^t\left(\log\pi_{\phi^i}^i(a_t^i|s_t)-\log\pi_{\phi^{-i}}^{-i}(a_t^{-i}|s_t)\right)|s_0=s\right]\notag\\
    \vspace{-0.05in}
    &=V^{i}_{\pi_{\phi}}(s;\theta)-\lambda\mathbb{E}_{\tau\sim D^{\pi_{\phi}}}\left[\sum_{t=0}^{T-1}\gamma^t\left(\log\pi_{\phi^i}^i(a_t^i|s_t)-\log\pi_{\phi^{-i}}^{-i}(a_t^{-i}|s_t)\right)|s_0=s\right].\label{VU}
    \vspace{-0.25in}
\end{align}
For easy notation, we define
\begin{align*}
    U_{\pi_{\phi}}^i(s)=\mathbb{E}_{\tau\sim D^{\pi_{\phi}}}\left[\sum_{t=0}^{T-1}\gamma^t\left(\log\pi_{\phi^i}^i(a_t^i|s_t)-\log\pi_{\phi^{-i}}^{-i}(a_t^{-i}|s_t)\right)|s_0=s\right].
\end{align*}
Then we have,
\begin{align}
    \nabla_{\theta}U_{\pi_{\phi}}^i(s)=&0;\label{U_theta}\\
    \nabla_{\phi^i}U_{\pi_{\phi}}^i(s)=&\nabla_{\phi^i}\left[\sum_{\tau}D^{\pi_{\phi}}(\tau|s_0\!=\!s)\sum_{t=0}^{T-1}\gamma^t\left(\log\pi_{\phi^i}^i(a_t^i|s_t)-\log\pi_{\phi^{-i}}^{-i}(a_t^{-i}|s_t)\right)\right]\notag\\
    =&\sum_{\tau}D^{\pi_{\phi}}(\tau|s_0\!=\!s)\left[\nabla_{\phi_i}\log D^{\pi_{\phi}}(\tau|s_0\!=\!s)\sum_{t=0}^{T-1}\gamma^t\left(\log\pi_{\phi^i}^i(a_t^i|s_t)-\log\pi_{\phi^{-i}}^{-i}(a_t^{-i}|s_t)\right)\right.\notag\\
    &\quad\quad\quad\left.+\sum_{t=0}^{T-1}\gamma^t\nabla_{\phi^i}\log\pi_{\phi^i}^i(a_t^i|s_t)\right]\notag\\
    =&\mathbb{E}_{\tau\sim D^{\pi_{\phi}}}\left[\left(\sum_{t=0}^{T-1}\gamma^t\left(\log\pi_{\phi^i}^i(a_t^i|s_t)-\log\pi_{\phi^{-i}}^{-i}(a_t^{-i}|s_t)\right)\right)\left(\sum_{t=0}^{T-1}\nabla_{\phi^i}\log\pi_{\phi^i}^i(a_t^i|s_t)\right)\right.\notag\\
    &\quad\quad\quad\left.+\sum_{t=0}^{T-1}\gamma^t\nabla_{\phi^i}\log\pi_{\phi^i}^i(a_t^i|s_t)\right];\label{U_phi_i}\\
    \nabla_{\theta\phi^i}U_{\pi_{\phi}}^i(s)=&0;\label{U_theta_phi_i}\\
    \nabla_{\phi\phi^i}U_{\pi_{\phi}}^i(s)=&\overbrace{\nabla_{\phi}\left[\sum_{\tau}D^{\pi_\phi}(\tau|s_0=s)\left(\sum_{t=0}^{T-1}\gamma^t\left(\log\pi_{\phi^i}^i(a_t^i|s_t)-\log\pi_{\phi^{-i}}^{-i}(a_t^{-i}|s_t)\right)\right)\left(\sum_{t=0}^{T-1}\nabla_{\phi^i}\log\pi_{\phi^i}^i(a_t^i|s_t)\right)\right]}^{\displaystyle{:=(\RNum{1})}}\notag\\
    &+\underbrace{\nabla_{\phi}\left[\sum_{\tau}D^{\pi_\phi}(\tau|s_0=s)\sum_{t=0}^{T-1}\gamma^t\nabla_{\phi^i}\log\pi_{\phi^i}^i(a_t^i|s_t)\right]}_{\displaystyle{:=(\RNum{2})}}\notag.
\end{align}
Next we are going to compute (\RNum{1}) and (\RNum{2}).
\begin{align*}
    (\RNum{1})=&\sum_{\tau}\!D^{\pi_\phi}\!(\tau|s_0\!=\!s)\!\left(\sum_{t=0}^{T-1}\!\gamma^t\!\left(\log\pi_{\phi^i}^i(a_t^i|s_t)-\log\pi_{\phi^{-i}}^{-i}(a_t^{-i}|s_t)\right)\!\right)\!\!\left(\!\sum_{t=0}^{T-1}\nabla_{\phi^i}\!\log\!\pi_{\phi^i}^i(a_t^i|s_t\!)\!\right)\!\nabla_{\phi}\!\log D^{\pi_\phi}(\tau|s_0=s)^\intercal\\
    &+\sum_{\tau}D^{\pi_\phi}(\tau|s_0=s)\nabla_{\phi}\left[\left(\sum_{t=0}^{T-1}\gamma^t\left(\log\pi_{\phi^i}^i(a_t^i|s_t)-\log\pi_{\phi^{-i}}^{-i}(a_t^{-i}|s_t)\right)\right)\left(\sum_{t=0}^{T-1}\nabla_{\phi^i}\log\pi_{\phi^i}^i(a_t^i|s_t)\right)\right]\\
\end{align*}
\begin{align*}
    (\RNum{1})=&\!\sum_{\tau}D^{\pi_\phi}\!(\tau|s_0=s\!)\!\left(\!\sum_{t=0}^{T-1}\!\gamma^t\left(\log\pi_{\phi^i}^i(a_t^i|s_t)-\log\pi_{\phi^{-i}}^{-i}(a_t^{-i}|s_t)\right)\!\right)\!\!\left(\sum_{t=0}^{T-1}\!\nabla_{\phi^i}\!\log\pi_{\phi^i}^i(a_t^i|s_t\!)\!\right)\!\!\left(\sum_{t=0}^{T-1}\nabla_{\phi}\log \pi_{\phi}(a_t|s_t\!)\!\right)^\intercal\\
    &+\sum_{\tau}D^{\pi_\phi}(\tau|s_0=s)\left[\left(\sum_{t=0}^{T-1}\nabla_{\phi^i}\log\pi_{\phi^i}^i(a_t^i|s_t)\right)\left(\sum_{t=0}^{T-1}\gamma^t\left(\nabla_{\phi}\log\pi_{\phi^i}^i(a_t^i|s_t)-\nabla_{\phi}\log\pi_{\phi^{-i}}^{-i}(a_t^{-i}|s_t)\right)\right)^\intercal\right.\\
    &\quad\quad\quad\quad\quad\quad\quad\quad\quad\left.+\left(\sum_{t=0}^{T-1}\gamma^t\left(\log\pi_{\phi^i}^i(a_t^i|s_t)-\log\pi_{\phi^{-i}}^{-i}(a_t^{-i}|s_t)\right)\right)\left(\sum_{t=0}^{T-1}\nabla^2_{\phi\phi^i}\log\pi_{\phi}(a_t^i|s_t)\right)\right]\\
    =&\mathbb{E}_{\tau\sim D^{\pi_\phi}}\!\!\!\left[\left(\sum_{t=0}^{T-1}\gamma^t\left(\log\pi_{\phi^i}^i(a_t^i|s_t)-\log\pi_{\phi^{-i}}^{-i}(a_t^{-i}|s_t)\right)\right)\left(\sum_{t=0}^{T-1}\nabla_{\phi^i}\log\pi_{\phi^i}^i(a_t^i|s_t)\right)\left(\sum_{t=0}^{T-1}\nabla_{\phi}\log \pi_{\phi}(a_t|s_t)\right)^\intercal\right.\\
    &\quad\quad\quad\quad\left.+\left(\sum_{t=0}^{T-1}\!\nabla_{\phi^i}\!\log\pi_{\phi^i}^i(a_t^i|s_t)\right)\left(\!\sum_{t=0}^{T-1}\!\gamma^t\left(\nabla_{\phi}\!\log\pi_{\phi^i}^i(a_t^i|s_t)-\nabla_{\phi}\!\log\pi_{\phi^{-i}}^{-i}(a_t^{-i}|s_t)\right)\right)^\intercal\right.\\
    &\quad\quad\quad\quad\left.+\left(\!\sum_{t=0}^{T-1}\!\gamma^t\left(\log\pi_{\phi^i}^i(a_t^i|s_t)-\log\pi_{\phi^{-i}}^{-i}(a_t^{-i}|s_t)\right)\right)\left(\!\sum_{t=0}^{T-1}\!\nabla^2_{\phi\phi^i}\!\log\pi_{\phi}(a_t^i|s_t)\right)\right];
\end{align*}
\begin{align*}
    (\RNum{2})=&\sum_{\tau}D^{\pi_\phi}(\tau|s_0=s)\left[\left(\sum_{t=0}^{T-1}\gamma^t\nabla_{\phi^i}\log \pi^i_{\phi^i}(a_t|s_t)\right)\left(\sum_{t=0}^{T-1}\nabla_{\phi}\log \pi_{\phi}(a_t|s_t)\right)^\intercal\right.\\
    &\left.\quad\quad\quad+\left(\sum_{t=0}^{T-1}\!\gamma^t\nabla^2_{\phi\phi^i}\log \pi_{\phi}(a_t|s_t)\!\right)\!\right]\\
    =&\mathbb{E}_{\tau\sim D^{\pi_\phi}}\!\!\left[\!\left(\!\sum_{t=0}^{T-1}\!\gamma^t\!\nabla_{\phi^i}\!\log \pi^i_{\phi^i}(a_t|s_t\!)\!\right)\!\!\left(\sum_{t=0}^{T-1}\!\nabla_{\phi}\!\log \pi_{\phi}(a_t|s_t\!)\!\right)^\intercal\!\!\!\!+\!\!\left(\sum_{t=0}^{T-1}\gamma^t\nabla
    ^2_{\phi\phi^i}\!\log \pi_{\phi}(a_t|s_t\!)\!\right)\!\right].
\end{align*}
Therefore,
\begin{align}\label{U_phi_phi_i}
    &\nabla_{\phi\phi^i}U_{\pi_\phi}^{i}(s)\notag=\mathbb{E}_{\tau\sim D^{\pi_\phi}}\!\!\!\left[\left(\sum_{t=0}^{T-1}\gamma^t\left(\log\pi_{\phi^i}^i(a_t^i|s_t)-\log\pi_{\phi^{-i}}^{-i}(a_t^{-i}|s_t)\right)\right)\left(\sum_{t=0}^{T-1}\nabla_{\phi^i}\log\pi_{\phi^i}^i(a_t^i|s_t)\right)\left(\sum_{t=0}^{T-1}\nabla_{\phi}\log \pi_{\phi}(a_t|s_t)\right)^\intercal\right.\\
    &\quad\quad\quad\quad\left.+\left(\sum_{t=0}^{T-1}\!\nabla_{\phi^i}\!\log\pi_{\phi^i}^i(a_t^i|s_t)\right)\left(\!\sum_{t=0}^{T-1}\!\gamma^t\left(\nabla_{\phi}\!\log\pi_{\phi^i}^i(a_t^i|s_t)-\nabla_{\phi}\!\log\pi_{\phi^{-i}}^{-i}(a_t^{-i}|s_t)\right)\right)^\intercal\right.\notag\\
    &\quad\quad\quad\quad\left.+\left(\!\sum_{t=0}^{T-1}\!\gamma^t\left(\log\pi_{\phi^i}^i(a_t^i|s_t)-\log\pi_{\phi^{-i}}^{-i}(a_t^{-i}|s_t)\right)\right)\left(\!\sum_{t=0}^{T-1}\!\nabla^2_{\phi\phi^i}\!\log\pi_{\phi}(a_t^i|s_t)\right)\right.\notag\\
    &\quad\quad\quad\quad\left.+\left(\!\sum_{t=0}^{T-1}\!\gamma^t\nabla_{\phi^i}\log \pi^i_{\phi^i}(a_t|s_t)\right)\left(\sum_{t=0}^{T-1}\nabla_{\phi}\log \pi_{\phi}(a_t|s_t\!)\!\right)^\intercal+\left(\sum_{t=0}^{T-1}\gamma^t\nabla
    ^2_{\phi\phi^i}\!\log \pi_{\phi}(a_t|s_t)\right)\right]    
\end{align}
Since (\ref{VU}) can be written as
\begin{align*}
    V^{(i)}_{\pi_{\phi}}(s;\theta)=V^{i}_{\pi_{\phi}}(s;\theta)-\lambda U_{\pi_\phi}^{i}(s),
\end{align*}
then together with (\ref{U_theta}-\ref{U_phi_phi_i}), we have
\begin{align*}
    &\nabla_{\theta}V^{(i)}_{\pi_{\phi}}(s;\theta)=\nabla_{\theta}V^{i}_{\pi_{\phi}}(s;\theta);\\
    &\nabla_{\phi^i} V^{(i)}_{\pi_{\phi}}(s;\theta)=\nabla_{\phi^i} V^{i}_{\pi_{\phi}}(s;\theta)-\lambda\mathbb{E}_{\tau\sim D^{\pi_{\phi}}}\left[\left(\sum_{t=0}^{T-1}\gamma^t\left(\log\pi_{\phi^i}^i(a_t^i|s_t)-\log\pi_{\phi^{-i}}^{-i}(a_t^{-i}|s_t)\right)\right)\left(\sum_{t=0}^{T-1}\nabla_{\phi^i}\log\pi_{\phi^i}^i(a_t^i|s_t)\right)+
    \sum_{t=0}^{T-1}\gamma^t\nabla_{\phi^i}\log\pi_{\phi^i}^i(a_t^i|s_t)\right];\\
    &\nabla^2_{\theta\phi^i}V^{(i)}_{\pi_{\phi}}(s;\theta)=\nabla^2_{\theta\phi^i}V^{i}_{\pi_{\phi}}(s;\theta);\\
    &\nabla^2_{\phi\phi^i}V^{(i)}_{\pi_{\phi}}(s;\theta)=\nabla^2_{\phi\phi^i}V^{i}_{\pi_{\phi}}(s;\theta)\!-\!\lambda\mathbb{E}_{\tau\sim D^{\pi_\phi}}\left[\left(\sum_{t=0}^{T-1}\gamma^t\left(\log\pi_{\phi^i}^i(a_t^i|s_t)-\log\pi_{\phi^{-i}}^{-i}(a_t^{-i}|s_t)\right)\right)\left(\sum_{t=0}^{T-1}\nabla_{\phi^i}\log\pi_{\phi^i}^i(a_t^i|s_t)\right)\left(\sum_{t=0}^{T-1}\nabla_{\phi}\log \pi_{\phi}(a_t|s_t)\right)^\intercal\right.\\
    &\quad\quad\quad\quad\quad\quad\quad\quad\quad\quad\quad\quad\quad\quad\quad\quad\quad\quad\quad\left.+\left(\sum_{t=0}^{T-1}\!\nabla_{\phi^i}\!\log\pi_{\phi^i}^i(a_t^i|s_t)\right)\left(\!\sum_{t=0}^{T-1}\!\gamma^t\left(\nabla_{\phi}\!\log\pi_{\phi^i}^i(a_t^i|s_t)-\nabla_{\phi}\!\log\pi_{\phi^{-i}}^{-i}(a_t^{-i}|s_t)\right)\right)^\intercal\right.\notag\\
    &\quad\quad\quad\quad\quad\quad\quad\quad\quad\quad\quad\quad\quad\quad\quad\quad\quad\quad\quad\left.+\left(\!\sum_{t=0}^{T-1}\!\gamma^t\left(\log\pi_{\phi^i}^i(a_t^i|s_t)-\log\pi_{\phi^{-i}}^{-i}(a_t^{-i}|s_t)\right)\right)\left(\!\sum_{t=0}^{T-1}\!\nabla^2_{\phi\phi^i}\!\log\pi_{\phi}(a_t^i|s_t)\right)\right.\notag\\
    &\quad\quad\quad\quad\quad\quad\quad\quad\quad\quad\quad\quad\quad\quad\quad\quad\quad\quad\quad\left.+\left(\!\sum_{t=0}^{T-1}\!\gamma^t\nabla_{\phi^i}\log \pi^i_{\phi^i}(a_t|s_t)\right)\left(\sum_{t=0}^{T-1}\nabla_{\phi}\log \pi_{\phi}(a_t|s_t\!)\!\right)^\intercal+\left(\sum_{t=0}^{T-1}\gamma^t\nabla
    ^2_{\phi\phi^i}\!\log \pi_{\phi}(a_t|s_t)\right)\right]
\end{align*}
\end{proof}


\section{Proof for Lemma \ref{lma:phitheta}}\label{sub:proof_grad}

In this section, we prove Lemma \ref{lma:phitheta} based on Lemma \ref{lma:V2}.
\begin{proof}[Proof for Lemma \ref{lma:phitheta}.]
Since $\phi^{(*)}(\theta)$ is the policy parameter of Nash equilibrium $\pi_{\phi^{(*)}(\theta)}$ for the regularized Markov game with the incentivized reward $\mathcal{G}_{\theta}'$, $u_\theta(\phi^{(*)}(\theta))=(\mathbb{E}_{\nu^{*}}\nabla_{\phi^1}V^{(1)}_{\pi_{\phi}}(\theta),\mathbb{E}_{\nu^{*}}\nabla_{\phi^2}V^{(2)}_{\pi_{\phi}}(\theta))\big|_{\phi=\phi^{(*)}(\theta)}=0$. Then, by differentiating the equality with respect to $\theta$ on both sides, for any $i\in\{1,2\}$, we have
$$\mathbb{E}_{\nu^{*}}\nabla_{\theta\phi^i}V^{(i)}_{\pi_{\phi}}(\theta)+\mathbb{E}_{\nu^{*}}\nabla_{\phi\phi^i}V^{(i)}_{\pi_{\phi}}(\theta)[\nabla_{\theta}\phi^{(*)}(\theta)]=0.$$
Thus, define
\begin{align*}
      \nabla_{\theta}u_{\theta}(\phi)=\left[
      \begin{matrix}
        \mathbb{E}_{\nu^{*}}\nabla_{\theta\phi^1}V^{(1)}_{\pi_{\phi}}(\theta)\\
        \mathbb{E}_{\nu^{*}}\nabla_{\theta\phi^2}V^{(2)}_{\pi_{\phi}}(\theta)\\
      \end{matrix}
      \right]\in\mathbb{R}^{d\times m},\\
      \nabla_{\phi}u_{\theta}(\phi)=\left[
      \begin{matrix}
        \mathbb{E}_{\nu^{*}}\nabla_{\phi\phi^1}V^{(1)}_{\pi_{\phi}}(\theta)\\
        \mathbb{E}_{\nu^{*}}\nabla_{\phi\phi^2}V^{(2)}_{\pi_{\phi}}(\theta)\\
      \end{matrix}
      \right]\in\mathbb{R}^{d\times d},
  \end{align*}
  where $\nu^{*}$ is $\nu_{\pi_{\phi^{(*)}(\theta)}}(s)$. We have,
  $$\nabla_{\theta}\phi^{(*)}(\theta)=-\left[\nabla_{\phi}u_{\theta}(\phi)\right]^{-1}\nabla_{\theta}u_{\theta}(\phi)\big|_{\phi=\phi^*(\theta)}.$$
\end{proof}
\section{Extensions of DA framework}\label{supp:extension}
Note that the DA framework itself is general given a proper lower-level NE solver. In this section, we detail the extension of DA framework to two general cases: i) \textit{N-players} and \textit{no restriction to reward}; ii) \textit{partial observations}. Since the primary part in DA framework is the derivation of upper-level loss gradient at NE $\nabla f_*(\theta)$, in the following we present the calculation of $\nabla f_*(\theta)$ based on the policy gradient information from the lower level in extended settings.

\textbf{Extension to N-player and no reward restriction.} We first define the incentivized Markov game with $N$ players as  $\mathcal{G}_{\theta} = (\mathcal{S}, \{\mathcal{A}^i,\\ r^{i}(\cdot;\theta)\}_{i\in\{1,\cdots,N\}}, \lambda ,\mathcal{P}, \gamma)$. Then the total discounted reward $R^i_\pi(\tau;\theta)$ and value function $V^i_{\pi}(s;\theta)$ for all $i\in\{1,...,N\}$ are defined as
\begin{align}
    R_\pi^i(\tau;\theta) &= \sum_{t=0}^{T-1}\gamma^t r^i(s_t,a_t^i,a_{t}^{-i};\theta),\\
    V_\pi^i(s;\theta) &= \mathbb{E}_{\pi}\left[\sum_{t=0}^{T-1}\gamma^t{\cdot}r^{i}_{\pi}(s_t,a^i_t,a^{-i}_t;{\theta})|s_0=s\right]\notag\\
    &=\mathbb{E}_{\tau\sim D^{\pi}}\left[R^{i}_{\pi}(\tau;{\theta})\big|s_0=s\right].     
\end{align}

 Then define the entropy-regularized $\mathcal{G}'_{\theta} = (\mathcal{S}, \{\mathcal{A}^i, r^{(i)}_\pi(\cdot;\theta)\}_{i\in\{1,\cdots,N\}}, \lambda ,\mathcal{P}, \gamma)$, where $r^{(i)}(\cdot;\theta): \mathcal{S} \times \mathcal{A}^i \times \mathcal{A}^{-i} \rightarrow \mathcal{R}$ is defined as:
 \begin{align*}
     r^{(i)}_\pi(\cdot;\theta) = r^i(\cdot;\theta)-\lambda \log(\pi^{i}(a^{i}|s))+\frac{\lambda}{N-1}\sum_{j=1,...,n\atop j\neq i}\log(\pi^{j}(a^{j}|s)).
 \end{align*}
 Then the entropy-regularized value function is
 \begin{align}\label{eq:V_nplayer}
    R_\pi^{(i)}(\tau;\theta) &= \sum_{t=0}^{T-1}\gamma^t r^{(i)}_\pi(s_t,a_t^i,a_{t}^{-i};\theta),\\
    V^{(i)}_{\pi}(s;\theta)&=\mathbb{E}_{\pi}\left[\sum_{t=0}^{T-1}\gamma^t{\cdot}r^{(i)}_{\pi}(s_t,a^i_t,a^{-i}_t;{\theta})|s_0=s\right]\notag\\
    &=\mathbb{E}_{\tau\sim D^{\pi}}\left[R^{(i)}_{\pi}(\tau;{\theta})\big|s_0=s\right].     
 \end{align}
 
 Other definitions remain same as defined in Section \ref{subsec:irg} except for the zero-sum restriction to $r^i(\cdot;\theta)$. We are going to paramterized the joint policy $\pi$ with $\phi$ for computation simplicity. Therefore, $\pi_{\phi}\!=\!(\pi^1_{\phi^1}\!,\cdots,\pi^N_{\phi^N})$, where $\phi=(\phi^1,\cdots,\phi^N) \in \mathbb{R}^d, d=\sum_{i=1}^N d_i$. 
 In N-player Markov game, we denote 
 \begin{align*}
      \nabla_{\phi}\log\pi_{\phi}(a_t|s_t)&=(\nabla_{\phi^1}\log\pi^1_{\phi^1}(a_t|s_t),\cdots,\nabla_{\phi^N}\log\pi^N_{\phi^N}(a_t|s_t))\notag\\
      \nabla^2_{\phi\phi^i}\log\pi_{\phi}(a_t|s_t)&=(\nabla^2_{\phi^1\phi^i}\log\pi^1_{\phi^1}(a_t|s_t),\cdots,\nabla^2_{\phi^N\phi^i}\log\pi^N_{\phi^N}(a_t|s_t)).\notag
  \end{align*} 

Lemma \ref{lma:V in Nplayer} and Lemma \ref{lma:V2 in Nplayer} extends the Lemma \ref{lma:V} and Lemma \ref{lma:V2} into the N-player general-sum game respectively. 

For extension of Lemma \ref{lma:phitheta},
 we switch the notation in (\ref{def: u*}) and (\ref{notation:gradient u}) as 
\begin{align}
\label{def: u* for N players}
    u_{\theta}(\phi;s)\!&:=(\nabla_{\phi^1}V^{(1)}_{\pi_{\phi}}(s;\theta),\cdots,\nabla_{\phi^N}V^{(N)}_{\pi_{\phi}}(s;\theta)), \quad
    u_{\theta}(\phi):=\mathbb{E}_{\nu^{(*)}}[u_{\theta}(\phi;s)],\\
    \label{notation:gradient u for N} 
      \nabla_{\theta}u_{\theta}(\phi)\!&=\!\left[
      \begin{matrix}
        \mathbb{E}_{\nu^{(*)}}\!\nabla^2_{\theta\phi^1}V^{(1)}_{\pi_{\phi}}(s;\theta)\\
        \cdots\\
        \mathbb{E}_{\nu^{(*)}}\!\nabla^2_{\theta\phi^N}V^{(N)}_{\pi_{\phi}}(s;\theta)\\
      \end{matrix}
      \right]\in\mathbb{R}^{d\times m},
      \nabla_{\phi}u_{\theta}(\phi)=\left[
      \begin{matrix}
        \mathbb{E}_{\nu^{(*)}}\nabla^2_{\phi\phi^1}V^{(1)}_{\pi_{\phi}}(s;\theta)\\
        \cdots\\
        \mathbb{E}_{\nu^{(*)}}\nabla^2_{\phi\phi^N}V^{(N)}_{\pi_{\phi}}(s;\theta)\\
      \end{matrix}
      \right]\in\mathbb{R}^{d\times d}.
\end{align}
Then (\ref{eq:grad_phi_new}) in Lemma \ref{lma:phitheta} and the gradient of upper-level loss $\nabla f_*(\theta)$ in (\ref{eq:grad_f}) remain the same.

\begin{lemma}\label{lma:V in Nplayer}
  For an incentivized Markov game $\mathcal{G}_{\theta} = (\mathcal{S}, \{\mathcal{A}^i, \Omega^i,\mathcal{O}^i, r^{i}(\cdot;\theta) \}_{i\in\{1,\cdots,N\}},\mathcal{P}, \gamma)$, let $R^i(\tau;\theta)$ be the total reward for an sample trajectory $\tau$ following $\pi_{\phi}$ for $T$ steps, starting from initial state $s_0=s$. We have
  \begin{align*}
      \nabla_{\theta}V^i_{\pi_{\phi}}(s;\theta)&=\mathbb{E}_{\tau\sim D^{\pi_\phi}}\left[\nabla_{\theta}R^i(\tau;\theta)|s_0=s\right],\\
      \nabla_{\phi^i} V^i_{\pi_{\phi^i}}(s;\theta)&=\mathbb{E}_{\tau\sim D^{\pi_\phi}}\left[R^i(\tau;\theta)\sum_{t=0}^{T-1}\nabla_{\phi^i}\log\pi^i_{\phi^i}(a_t|s_t)\Big|s_0=s\right],\\
      \nabla^2_{\theta\phi^i}V^i_{\pi_{\phi}}(s;\theta)&=\mathbb{E}_{\tau\sim D^{\pi_\phi}}\left[\left(\sum_{t=0}^{T-1}\nabla_{\phi^i}\log\pi^i_{\phi^i}(a_t|s_t)\right)\nabla_{\theta}R^i(\tau;\theta)^\intercal\Big|s_0=s\right],\\
      \nabla^2_{\phi\phi^i}V^i_{\pi_{\phi}}(s;\theta)&=\mathbb{E}_{\tau\sim D^{\pi_\phi}}\left[R^i(\tau;\theta)\left((\sum_{t=0}^{T-1}\nabla_{\phi^i}\log\pi^i_{\phi^i}(a_t|s_t))(\sum_{t=0}^{T-1}\nabla_{\phi}\log\pi_{\phi}(a_t|s_t))^\intercal\right.\right.\notag\\
      &\left.\left.\quad\quad\quad\quad\quad\quad+\sum_{t=0}^{T-1}\nabla^2_{\phi\phi^i}\log\pi_{\phi}(a_t|s_t)\right)\Big|s_0=s\right],
  \end{align*}
  where
  \begin{align*}
      \nabla_{\phi}\log\pi_{\phi}(a_t|s_t)&=(\nabla_{\phi^1}\log\pi^1_{\phi^1}(a_t|o^1_t),\cdots, \nabla_{\phi^N}\log\pi^N_{\phi^N}(a_t|o^N_t)),\notag\\
      \nabla^2_{\phi\phi^i}\log\pi_{\phi}(a_t|s_t)&=(\nabla^2_{\phi^1\phi^i}\log\pi^1_{\phi^1}(a_t|o^1_t),\cdots,\nabla^2_{\phi^N\phi^i}\log\pi^N_{\phi^N}(a_t|o^N_t)).\notag
  \end{align*}   
\end{lemma}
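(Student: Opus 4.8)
The plan is to reproduce the two-player argument from the proof of Lemma~\ref{lma:V} essentially verbatim, since the four target formulas are formally identical and the only structural change is that the joint policy now factorizes over $N$ players rather than two. The score-function (log-derivative) identity that drives the whole computation is insensitive to the player count, so no new machinery is required; the content is carried entirely by the factorized form of the trajectory density.

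First I would write the trajectory density as
\[
  D^{\pi_\phi}(\tau|s_0=s)=\rho_0(s_0)\prod_{t=0}^{T-1}P(s_{t+1}|s_t,a_t)\prod_{j=1}^{N}\pi^j_{\phi^j}(a^j_t|o^j_t),
\]
and take its log-gradient with respect to the single parameter block $\phi^i$. The transition kernel $P$ does not depend on $\phi$, and each factor $\pi^j_{\phi^j}$ with $j\neq i$ depends only on its own block $\phi^j$; hence every term except player $i$'s policy vanishes, leaving
\[
  \nabla_{\phi^i}\log D^{\pi_\phi}(\tau|s_0=s)=\sum_{t=0}^{T-1}\nabla_{\phi^i}\log\pi^i_{\phi^i}(a^i_t|o^i_t).
\]
Differentiating instead with respect to the full parameter $\phi$ yields the concatenated score $\nabla_{\phi}\log D^{\pi_\phi}=\sum_{t=0}^{T-1}\nabla_{\phi}\log\pi_{\phi}$, where $\nabla_\phi\log\pi_\phi$ is the stacking of the $N$ per-player scores defined in the statement. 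These two identities are the only place where the number of players enters.

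Next I would feed these identities into the same four differentiation steps as in Lemma~\ref{lma:V}. For $\nabla_\theta V^i$ the gradient passes straight through to $\nabla_\theta R^i$, since the sampling distribution $D^{\pi_\phi}$ is free of $\theta$. For $\nabla_{\phi^i}V^i$ I would use $\nabla_{\phi^i}D^{\pi_\phi}=D^{\pi_\phi}\nabla_{\phi^i}\log D^{\pi_\phi}$ together with the block identity above, then interchange gradient and summation. The mixed Hessian $\nabla^2_{\theta\phi^i}V^i$ follows by differentiating the $\nabla_{\phi^i}V^i$ expression in $\theta$, again using that only $R^i$ carries $\theta$-dependence. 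Finally, $\nabla^2_{\phi\phi^i}V^i$ comes from differentiating $\nabla_{\phi^i}V^i$ in the full $\phi$, applying the product rule to both the density factor and the score factor and re-expressing $\nabla_\phi D^{\pi_\phi}$ via its own log-derivative; this reproduces the outer-product-plus-curvature form stated in the lemma.

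There is no genuine obstacle here: the argument is a routine rewrite of the two-player proof, with all the work done by the factorized structure of $D^{\pi_\phi}$. The only points demanding care are bookkeeping — keeping the per-block gradient $\nabla_{\phi^i}$ distinct from the full $\nabla_\phi$, and correctly stacking the $N$ score vectors in $\nabla_\phi\log\pi_\phi$ and $\nabla^2_{\phi\phi^i}\log\pi_\phi$ so that the dimensions $d=\sum_{i=1}^N d_i$ align in the resulting $d\times d$ Hessian. The passage from states $s_t$ to per-player observations $o^i_t=O^i(s_t)$ is immaterial to every step, as it merely renames the argument of each policy factor without changing its dependence on the parameter blocks.
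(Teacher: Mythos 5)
Your proposal is correct and follows the paper's own argument essentially verbatim: the paper likewise reduces the lemma to the single computation $\nabla_{\phi^i}\log D^{\pi_\phi}(\tau|s_0=s)=\sum_{t=0}^{T-1}\nabla_{\phi^i}\log\pi^i_{\phi^i}$, observing that in the $N$-fold factorized trajectory density only player $i$'s policy factor survives differentiation by the block $\phi^i$, and then declares the remaining four derivations identical to those in Lemma~\ref{lma:V}. No gaps.
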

\begin{proof}[Proof for Lemma \ref{lma:V in Nplayer}.]
According to the proof of Lemma \ref{lma:V}, the derivation for $\nabla_{\phi^i}V^i_{\pi_{\phi}}(s;\theta)$ and $\nabla^2_{\phi\phi^i}V^i_{\pi_{\phi}}(s;\theta)$ are based on $\nabla_{\phi^i}\log{D^{\pi_\phi}(\tau|s_0=s)}$. Recall $D^{\pi_\phi}(\tau)=\Pi_{t=0}^{T-1} \rho(s_0)P(s_{t+1}|a_t,s_t)\left[\Pi_{i=1}^N \pi^i_{\phi^i}(a_t^i|o^i)\right]$, we have
\begin{align}
\label{equ:gradient of log D in POMDP}
      \nabla_{\phi^i}\log{D^{\pi_\phi}(\tau|s_0=s)}
      &\!=\!\nabla_{\phi^i}\!\left[\!\sum_{t=0}^{T-1}\!\log{P(s_{t+1}|s_t,a_t)}\!+\!\sum_{t=0}^{T-1}\!\log{\pi^i_{\phi^i}(a_t|s_t)}\!+\!\sum_{t=0}^{T-1}\log{\pi^{-i}_{\phi^{-i}}(a_t|o^{-i}_t)}\right]\notag\\
      &\!=\!\sum_{t=0}^{T-1}\nabla_{\phi^i}\log{\pi^i_{\phi^i}(a_t|s_t)}.
\end{align}
Similarly, we have
  $\nabla_{\phi}\log{D^{\pi_\phi}(\tau|s_0=s)}\!=\!\sum_{t=0}^{T-1}\nabla_{\phi}\log{\pi_{\phi}(a_t|s_t)}$.
The remaining of the proof based on (\ref{equ:gradient of log D in POMDP}) is exactly the same as the proof of Lemma \ref{lma:V}.
\end{proof}
In the light of Lemma \ref{lma:V in Nplayer}, we replace Lemma \ref{lma:V2} by the following Lemma \ref{lma:V2 in Nplayer} for the gradient of regularized state-value function based on the incentivized reward.

\begin{lemma}\label{lma:V2 in Nplayer}
  In an incentive regularized Markov game $\mathcal{G}_{\theta}^{'}$, we have
  \begin{align*}
    &\nabla_{\theta}V^{(i)}_{\pi_{\phi}}(s;\theta)=\nabla_{\theta}V^{i}_{\pi_{\phi}}(s;\theta);\\
    &\nabla_{\phi^i} V^{(i)}_{\pi_{\phi}}(s;\theta)=\nabla_{\phi^i} V^{i}_{\pi_{\phi}}(s;\theta)-\lambda\mathbb{E}_{\tau\sim D^{\pi_{\phi}}}\left[\left(\sum_{t=0}^{T-1}\gamma^t\left(\log\pi_{\phi^i}^i(a_t^i|s_t)-\frac{1}{N-1}\sum_{j=1,...,N\atop j\neq i}\log\pi_{\phi^{j}}^{j}(a_t^{j}|s_t)\right)\right)\left(\sum_{t=0}^{T-1}\nabla_{\phi^i}\log\pi_{\phi^i}^i(a_t^i|s_t)\right)\right.\notag\\
    &\quad\quad\quad\quad\quad\quad\quad\quad\quad\quad\quad\quad\quad\quad\quad+\left.\sum_{t=0}^{T-1}\gamma^t\nabla_{\phi^i}\log\pi_{\phi^i}^i(a_t^i|s_t)\right];\\
    &\nabla^2_{\theta\phi^i}V^{(i)}_{\pi_{\phi}}(s;\theta)=\nabla^2_{\theta\phi^i}V^{i}_{\pi_{\phi}}(s;\theta);\\
\end{align*}
\begin{align*}
    &\nabla^2_{\phi\phi^i}V^{(i)}_{\pi_{\phi}}(s;\theta)=\nabla^2_{\phi\phi^i}V^{i}_{\pi_{\phi}}(s;\theta)\!-\!\lambda\mathbb{E}_{\tau\sim D^{\pi_\phi}}\left[\left(\sum_{t=0}^{T-1}\!\nabla_{\phi^i}\!\log\pi_{\phi^i}^i(a_t^i|s_t)\right)\left(\!\sum_{t=0}^{T-1}\!\gamma^t\left(\nabla_{\phi}\!\log\pi_{\phi^i}^i(a_t^i|s_t)-\frac{1}{N-1}\sum_{j=1,...,N\atop j\neq i}\nabla_{\phi}\log\pi_{\phi^{j}}^{j}(a_t^{j}|s_t)\right)\right)^\intercal\right.\\
    &\quad\quad\quad\quad+\left.\left(\sum_{t=0}^{T-1}\gamma^t\left(\log\pi_{\phi^i}^i(a_t^i|s_t)-\frac{1}{N-1}\sum_{j=1,...,N\atop j\neq i}\log\pi_{\phi^{j}}^{j}(a_t^{j}|s_t)\right)\right)\left(\sum_{t=0}^{T-1}\nabla_{\phi^i}\log\pi_{\phi^i}^i(a_t^i|s_t)\right)\left(\sum_{t=0}^{T-1}\nabla_{\phi}\log \pi_{\phi}(a_t|s_t)\right)^\intercal\right.\\
    &\quad\quad\quad\quad\left.+\left(\!\sum_{t=0}^{T-1}\!\gamma^t\left(\log\pi_{\phi^i}^i(a_t^i|s_t)-\frac{1}{N-1}\sum_{j=1,...,N\atop j\neq i}\log\pi_{\phi^{j}}^{j}(a_t^{j}|s_t)\right)\right)\left(\!\sum_{t=0}^{T-1}\!\nabla^2_{\phi\phi^i}\!\log\pi_{\phi}(a_t^i|s_t)\right)\right.\notag\\
    &\quad\quad\quad\quad\left.+\left(\!\sum_{t=0}^{T-1}\!\gamma^t\nabla_{\phi^i}\log \pi^i_{\phi^i}(a_t|s_t)\right)\left(\sum_{t=0}^{T-1}\nabla_{\phi}\log \pi_{\phi}(a_t|s_t\!)\!\right)^\intercal+\left(\sum_{t=0}^{T-1}\gamma^t\nabla
    ^2_{\phi\phi^i}\!\log \pi_{\phi}(a_t|s_t)\right)\right]
\end{align*}
  
\end{lemma}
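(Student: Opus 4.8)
The plan is to follow the same decomposition strategy used in the proof of Lemma~\ref{lma:V2}, with the single opponent entropy term $\log\pi^{-i}_{\phi^{-i}}$ replaced by the averaged term $\frac{1}{N-1}\sum_{j\neq i}\log\pi^j_{\phi^j}$ dictated by the $N$-player regularized reward $r^{(i)}_\pi$. Starting from the definition of the entropy-regularized value function in~(\ref{eq:V_nplayer}), I would first isolate the regularization contribution by writing
\begin{align*}
    V^{(i)}_{\pi_{\phi}}(s;\theta)=V^{i}_{\pi_{\phi}}(s;\theta)-\lambda\, U^i_{\pi_\phi}(s),
\end{align*}
where
\begin{align*}
    U^i_{\pi_\phi}(s)=\mathbb{E}_{\tau\sim D^{\pi_\phi}}\left[\sum_{t=0}^{T-1}\gamma^t\left(\log\pi^i_{\phi^i}(a^i_t|s_t)-\frac{1}{N-1}\sum_{j\neq i}\log\pi^j_{\phi^j}(a^j_t|s_t)\right)\Big|s_0=s\right].
\end{align*}
Because the entropy integrand and the trajectory distribution $D^{\pi_\phi}$ depend only on $\phi$ and never on the incentive parameter $\theta$, the quantity $U^i_{\pi_\phi}(s)$ is independent of $\theta$. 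This immediately gives $\nabla_\theta U^i_{\pi_\phi}(s)=0$ and $\nabla^2_{\theta\phi^i}U^i_{\pi_\phi}(s)=0$, hence $\nabla_\theta V^{(i)}_{\pi_\phi}(s;\theta)=\nabla_\theta V^i_{\pi_\phi}(s;\theta)$ and $\nabla^2_{\theta\phi^i}V^{(i)}_{\pi_\phi}(s;\theta)=\nabla^2_{\theta\phi^i}V^i_{\pi_\phi}(s;\theta)$, matching the first and third equations of the lemma.

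For the first-order $\phi^i$-derivative I would apply the log-derivative identity $\nabla_{\phi^i}D^{\pi_\phi}=D^{\pi_\phi}\nabla_{\phi^i}\log D^{\pi_\phi}$, using the fact established in Lemma~\ref{lma:V in Nplayer} that $\nabla_{\phi^i}\log D^{\pi_\phi}(\tau|s_0=s)=\sum_{t}\nabla_{\phi^i}\log\pi^i_{\phi^i}$ (only player $i$'s policy carries $\phi^i$-dependence). Differentiating $U^i_{\pi_\phi}(s)$ then produces one score-function term acting on the whole integrand and one term from differentiating the integrand directly; in the latter only the $\log\pi^i_{\phi^i}$ summand survives, since each $\log\pi^j_{\phi^j}$ with $j\neq i$ is $\phi^i$-independent. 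This reproduces the stated expression for $\nabla_{\phi^i}V^{(i)}_{\pi_\phi}(s;\theta)$ after substituting into $\nabla_{\phi^i}V^{(i)}=\nabla_{\phi^i}V^i-\lambda\nabla_{\phi^i}U^i$.

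The main obstacle is the second-order term $\nabla^2_{\phi\phi^i}U^i_{\pi_\phi}(s)$, which requires differentiating the first-order expression once more with respect to the full vector $\phi$. I would split it into the two pieces (I) and (II) exactly as in the two-player proof, apply the product rule together with the score-function identity $\nabla_\phi\log D^{\pi_\phi}=\sum_t\nabla_\phi\log\pi_\phi$, and collect the resulting groups of terms. The computation is purely mechanical bookkeeping of sums over trajectory steps; the only substantive change relative to Lemma~\ref{lma:V2} is that every occurrence of $\log\pi^{-i}_{\phi^{-i}}$ (and its $\phi$-gradient) becomes $\frac{1}{N-1}\sum_{j\neq i}\log\pi^j_{\phi^j}$ (respectively its $\phi$-gradient), while the pure score-function terms in $\nabla_{\phi^i}\log\pi^i_{\phi^i}$ and $\nabla^2_{\phi\phi^i}\log\pi_\phi$ carry over verbatim. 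Finally, substituting $\nabla^2_{\phi\phi^i}V^{(i)}_{\pi_\phi}=\nabla^2_{\phi\phi^i}V^i_{\pi_\phi}-\lambda\nabla^2_{\phi\phi^i}U^i_{\pi_\phi}$ and invoking Lemma~\ref{lma:V in Nplayer} for the unregularized Hessian $\nabla^2_{\phi\phi^i}V^i_{\pi_\phi}$ yields the last displayed formula, completing the proof.
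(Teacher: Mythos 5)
Your proposal is correct and follows essentially the same route as the paper: the decomposition $V^{(i)}_{\pi_\phi}=V^i_{\pi_\phi}-\lambda U^i_{\pi_\phi}$ with the averaged opponent entropy term, the observation that $U^i_{\pi_\phi}$ is $\theta$-independent, the score-function identity for the first-order $\phi^i$-derivative, and the split of the Hessian of $U^i_{\pi_\phi}$ into the two pieces (\RNum{1}) and (\RNum{2}) are exactly the steps in the paper's proof. The remaining work is the mechanical bookkeeping you describe, and your identification of the single substantive change from Lemma~\ref{lma:V2} (replacing $\log\pi^{-i}_{\phi^{-i}}$ by $\frac{1}{N-1}\sum_{j\neq i}\log\pi^j_{\phi^j}$ throughout) is accurate.
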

\begin{proof}[Proof for Lemma \ref{lma:V2 in Nplayer}]Recall the definition of $V^{(i)}_{\pi_{\phi}}(s;\theta)$ in (\ref{eq:V_nplayer}), we have,
\begin{align}
    V^{(i)}_{\pi_{\phi}}(s;\theta)&=\mathbb{E}_{\tau\sim D^{\pi_{\phi}}}\left[\sum_{t=0}^{T-1}\gamma^t{\cdot}r^{(i)}_{\pi}(s_t,a^i_t,a^{-i}_t;{\theta})|s_0=s\right]\notag\\
    &=\mathbb{E}_{\tau\sim D^{\pi_{\phi}}}\left[\sum_{t=0}^{T-1}\gamma^t{\cdot}r^{i}_{\pi}(s_t,a^i_t,a^{-i}_t;{\theta})-\lambda\sum_{t=0}^{T-1}\gamma^t\left(\log\pi_{\phi^i}^i(a_t^i|s_t)-\frac{1}{N-1}\sum_{j=1,...,N\atop j\neq i}\log\pi_{\phi^{j}}^{j}(a_t^{j}|s_t)\right)|s_0=s\right]\notag\\
    \vspace{-0.05in}
    &=V^{i}_{\pi_{\phi}}(s;\theta)-\lambda\mathbb{E}_{\tau\sim D^{\pi_{\phi}}}\left[\sum_{t=0}^{T-1}\gamma^t\left(\log\pi_{\phi^i}^i(a_t^i|s_t)-\frac{1}{N-1}\sum_{j=1,...,N\atop j\neq i}\log\pi_{\phi^{j}}^{j}(a_t^{j}|s_t)\right)|s_0=s\right].\label{VU_2}
    \vspace{-0.25in}
\end{align}
For easy notation, we define
\begin{align*}
    U_{\pi_{\phi}}^i(s)=\mathbb{E}_{\tau\sim D^{\pi_{\phi}}}\left[\sum_{t=0}^{T-1}\gamma^t\left(\log\pi_{\phi^i}^i(a_t^i|s_t)-\frac{1}{N-1}\sum_{j=1,...,N\atop j\neq i}\log\pi_{\phi^{j}}^{j}(a_t^{j}|s_t)\right)|s_0=s\right].
\end{align*}
Then we have,
\begin{align}
    \nabla_{\theta}U_{\pi_{\phi}}^i(s)=&0;\label{U_theta_2}\\
    \nabla_{\phi^i}U_{\pi_{\phi}}^i(s)=&\nabla_{\phi^i}\left[\sum_{\tau}D^{\pi_{\phi}}(\tau|s_0\!=\!s)\sum_{t=0}^{T-1}\gamma^t\left(\log\pi_{\phi^i}^i(a_t^i|s_t)-\frac{1}{N-1}\sum_{j=1,...,N\atop j\neq i}\log\pi_{\phi^{j}}^{j}(a_t^{j}|s_t)\right)\right]\notag\\
    =&\sum_{\tau}D^{\pi_{\phi}}(\tau|s_0\!=\!s)\left[\nabla_{\phi_i}\log D^{\pi_{\phi}}(\tau|s_0\!=\!s)\sum_{t=0}^{T-1}\gamma^t\left(\log\pi_{\phi^i}^i(a_t^i|s_t)-\frac{1}{N-1}\sum_{j=1,...,N\atop j\neq i}\log\pi_{\phi^{j}}^{j}(a_t^{j}|s_t)\right)\right.\notag\\
    &\quad\quad\quad\left.+\sum_{t=0}^{T-1}\gamma^t\nabla_{\phi^i}\log\pi_{\phi^i}^i(a_t^i|s_t)\right]\notag\\
    =&\mathbb{E}_{\tau\sim D^{\pi_{\phi}}}\left[\left(\sum_{t=0}^{T-1}\gamma^t\left(\log\pi_{\phi^i}^i(a_t^i|s_t)-\frac{1}{N-1}\sum_{j=1,...,N\atop j\neq i}\log\pi_{\phi^{j}}^{j}(a_t^{j}|s_t)\right)\right)\left(\sum_{t=0}^{T-1}\nabla_{\phi^i}\log\pi_{\phi^i}^i(a_t^i|s_t)\right)\right.\notag\\
    &\quad\quad\quad\left.+\sum_{t=0}^{T-1}\gamma^t\nabla_{\phi^i}\log\pi_{\phi^i}^i(a_t^i|s_t)\right];\label{U_phi_i_2}\\
\end{align}
\begin{align}
    \nabla_{\theta\phi^i}U_{\pi_{\phi}}^i(s)=&0;\label{U_theta_phi_i_2}\\
    \nabla_{\phi\phi^i}U_{\pi_{\phi}}^i(s)=&\overbrace{\nabla_{\phi}\left[\sum_{\tau}D^{\pi_\phi}(\tau|s_0=s)\left(\sum_{t=0}^{T-1}\gamma^t\left(\log\pi_{\phi^i}^i(a_t^i|s_t)-\frac{1}{N-1}\sum_{j=1,...,N\atop j\neq i}\log\pi_{\phi^{j}}^{j}(a_t^{j}|s_t)\right)\right)\left(\sum_{t=0}^{T-1}\nabla_{\phi^i}\log\pi_{\phi^i}^i(a_t^i|s_t)\right)\right]}^{\displaystyle{:=(\RNum{1})}}\notag\\
    &+\underbrace{\nabla_{\phi}\left[\sum_{\tau}D^{\pi_\phi}(\tau|s_0=s)\sum_{t=0}^{T-1}\gamma^t\nabla_{\phi^i}\log\pi_{\phi^i}^i(a_t^i|s_t)\right]}_{\displaystyle{:=(\RNum{2})}}\notag.
\end{align}
Next we are going to compute (\RNum{1}) and (\RNum{2}).
\begin{align*}
    (\RNum{1})=&\sum_{\tau}\!D^{\pi_\phi}\!(\tau|s_0\!=\!s)\!\left(\sum_{t=0}^{T-1}\!\gamma^t\!\left(\log\pi_{\phi^i}^i(a_t^i|s_t)-\frac{1}{N-1}\sum_{j=1,...,N\atop j\neq i}\log\pi_{\phi^{j}}^{j}(a_t^{j}|s_t)\right)\!\right)\!\!\left(\!\sum_{t=0}^{T-1}\nabla_{\phi^i}\!\log\!\pi_{\phi^i}^i(a_t^i|s_t\!)\!\right)\!\nabla_{\phi}\!\log D^{\pi_\phi}(\tau|s_0=s)^\intercal\\
    &+\sum_{\tau}D^{\pi_\phi}(\tau|s_0=s)\nabla_{\phi}\left[\left(\sum_{t=0}^{T-1}\gamma^t\left(\log\pi_{\phi^i}^i(a_t^i|s_t)-\frac{1}{N-1}\sum_{j=1,...,N\atop j\neq i}\log\pi_{\phi^{j}}^{j}(a_t^{j}|s_t)\right)\right)\left(\sum_{t=0}^{T-1}\nabla_{\phi^i}\log\pi_{\phi^i}^i(a_t^i|s_t)\right)\right]\\
    \!=\!&\!\sum_{\tau}D^{\pi_\phi}\!(\tau|s_0=s\!)\!\left(\!\sum_{t=0}^{T-1}\!\gamma^t\left(\log\pi_{\phi^i}^i(a_t^i|s_t)-\frac{1}{N-1}\sum_{j=1,...,N\atop j\neq i}\log\pi_{\phi^{j}}^{j}(a_t^{j}|s_t)\right)\!\right)\!\!\left(\sum_{t=0}^{T-1}\!\nabla_{\phi^i}\!\log\pi_{\phi^i}^i(a_t^i|s_t\!)\!\right)\!\!\left(\sum_{t=0}^{T-1}\nabla_{\phi}\log \pi_{\phi}(a_t|s_t\!)\!\right)^\intercal\\
    &+\sum_{\tau}D^{\pi_\phi}(\tau|s_0=s)\left[\left(\sum_{t=0}^{T-1}\nabla_{\phi^i}\log\pi_{\phi^i}^i(a_t^i|s_t)\right)\left(\sum_{t=0}^{T-1}\gamma^t\left(\nabla_{\phi}\log\pi_{\phi^i}^i(a_t^i|s_t)-\frac{1}{N-1}\sum_{j=1,...,N\atop j\neq i}\nabla_\phi\log\pi_{\phi^{j}}^{j}(a_t^{j}|s_t)\right)\right)^\intercal\right.\\
    &\quad\quad\quad\quad\quad\quad\quad\quad\quad\left.+\left(\sum_{t=0}^{T-1}\gamma^t\left(\log\pi_{\phi^i}^i(a_t^i|s_t)-\frac{1}{N-1}\sum_{j=1,...,N\atop j\neq i}\log\pi_{\phi^{j}}^{j}(a_t^{j}|s_t)\right)\right)\left(\sum_{t=0}^{T-1}\nabla^2_{\phi\phi^i}\log\pi_{\phi}(a_t^i|s_t)\right)\right]\\
    =&\mathbb{E}_{\tau\sim D^{\pi_\phi}}\!\!\!\left[\left(\sum_{t=0}^{T-1}\gamma^t\left(\log\pi_{\phi^i}^i(a_t^i|s_t)-\frac{1}{N-1}\sum_{j=1,...,N\atop j\neq i}\log\pi_{\phi^{j}}^{j}(a_t^{j}|s_t)\right)\right)\left(\sum_{t=0}^{T-1}\nabla_{\phi^i}\log\pi_{\phi^i}^i(a_t^i|s_t)\right)\left(\sum_{t=0}^{T-1}\nabla_{\phi}\log \pi_{\phi}(a_t|s_t)\right)^\intercal\right.\\
    &\quad\quad\quad\quad\left.+\left(\sum_{t=0}^{T-1}\!\nabla_{\phi^i}\!\log\pi_{\phi^i}^i(a_t^i|s_t)\right)\left(\!\sum_{t=0}^{T-1}\!\gamma^t\left(\nabla_{\phi}\!\log\pi_{\phi^i}^i(a_t^i|s_t)-\frac{1}{N-1}\sum_{j=1,...,N\atop j\neq i}\nabla_{\phi}\log\pi_{\phi^{j}}^{j}(a_t^{j}|s_t)\right)\right)^\intercal\right.\\
    &\quad\quad\quad\quad\left.+\left(\!\sum_{t=0}^{T-1}\!\gamma^t\left(\log\pi_{\phi^i}^i(a_t^i|s_t)-\frac{1}{N-1}\sum_{j=1,...,N\atop j\neq i}\log\pi_{\phi^{j}}^{j}(a_t^{j}|s_t)\right)\right)\left(\!\sum_{t=0}^{T-1}\!\nabla^2_{\phi\phi^i}\!\log\pi_{\phi}(a_t^i|s_t)\right)\right];\\
    (\RNum{2})=&\sum_{\tau}D^{\pi_\phi}(\tau|s_0=s)\left[\left(\sum_{t=0}^{T-1}\gamma^t\nabla_{\phi^i}\log \pi^i_{\phi^i}(a_t|s_t)\right)\left(\sum_{t=0}^{T-1}\nabla_{\phi}\log \pi_{\phi}(a_t|s_t)\right)^\intercal\right.\\
    &\left.\quad\quad\quad+\left(\sum_{t=0}^{T-1}\!\gamma^t\nabla^2_{\phi\phi^i}\log \pi_{\phi}(a_t|s_t)\!\right)\!\right]\\
    =&\mathbb{E}_{\tau\sim D^{\pi_\phi}}\!\!\left[\!\left(\!\sum_{t=0}^{T-1}\!\gamma^t\!\nabla_{\phi^i}\!\log \pi^i_{\phi^i}(a_t|s_t\!)\!\right)\!\!\left(\sum_{t=0}^{T-1}\!\nabla_{\phi}\!\log \pi_{\phi}(a_t|s_t\!)\!\right)^\intercal\!\!\!\!+\!\!\left(\sum_{t=0}^{T-1}\gamma^t\nabla
    ^2_{\phi\phi^i}\!\log \pi_{\phi}(a_t|s_t\!)\!\right)\!\right].
\end{align*}
Therefore,
\begin{align}\label{U_phi_phi_i_2}
    &\nabla_{\phi\phi^i}U_{\pi_\phi}^{i}(s)\notag=\mathbb{E}_{\tau\sim D^{\pi_\phi}}\!\!\!\left[\left(\sum_{t=0}^{T-1}\gamma^t\left(\log\pi_{\phi^i}^i(a_t^i|s_t)-\frac{1}{N-1}\sum_{j=1,...,N\atop j\neq i}\log\pi_{\phi^{j}}^{j}(a_t^{j}|s_t)\right)\right)\left(\sum_{t=0}^{T-1}\nabla_{\phi^i}\log\pi_{\phi^i}^i(a_t^i|s_t)\right)\left(\sum_{t=0}^{T-1}\nabla_{\phi}\log \pi_{\phi}(a_t|s_t)\right)^\intercal\right.\\
    &\quad\quad\quad\quad\left.+\left(\sum_{t=0}^{T-1}\!\nabla_{\phi^i}\!\log\pi_{\phi^i}^i(a_t^i|s_t)\right)\left(\!\sum_{t=0}^{T-1}\!\gamma^t\left(\nabla_{\phi}\!\log\pi_{\phi^i}^i(a_t^i|s_t)-\frac{1}{N-1}\sum_{j=1,...,N\atop j\neq i}\nabla_{\phi}\log\pi_{\phi^{j}}^{j}(a_t^{j}|s_t)\right)\right)^\intercal\right.\\
    &\quad\quad\quad\quad\left.+\left(\!\sum_{t=0}^{T-1}\!\gamma^t\left(\log\pi_{\phi^i}^i(a_t^i|s_t)-\frac{1}{N-1}\sum_{j=1,...,N\atop j\neq i}\log\pi_{\phi^{j}}^{j}(a_t^{j}|s_t)\right)\right)\left(\!\sum_{t=0}^{T-1}\!\nabla^2_{\phi\phi^i}\!\log\pi_{\phi}(a_t^i|s_t)\right)\right.\notag\\
    &\quad\quad\quad\quad\left.+\left(\!\sum_{t=0}^{T-1}\!\gamma^t\nabla_{\phi^i}\log \pi^i_{\phi^i}(a_t|s_t)\right)\left(\sum_{t=0}^{T-1}\nabla_{\phi}\log \pi_{\phi}(a_t|s_t\!)\!\right)^\intercal+\left(\sum_{t=0}^{T-1}\gamma^t\nabla
    ^2_{\phi\phi^i}\!\log \pi_{\phi}(a_t|s_t)\right)\right]    
\end{align}
Since (\ref{VU_2}) can be written as
\begin{align*}
    V^{(i)}_{\pi_{\phi}}(s;\theta)=V^{i}_{\pi_{\phi}}(s;\theta)-\lambda U_{\pi_\phi}^{i}(s),
\end{align*}
then together with (\ref{U_theta}-\ref{U_phi_phi_i}), we have
\begin{align*}
    &\nabla_{\theta}V^{(i)}_{\pi_{\phi}}(s;\theta)=\nabla_{\theta}V^{i}_{\pi_{\phi}}(s;\theta);\\
    &\nabla_{\phi^i} V^{(i)}_{\pi_{\phi}}(s;\theta)=\nabla_{\phi^i} V^{i}_{\pi_{\phi}}(s;\theta)-\lambda\mathbb{E}_{\tau\sim D^{\pi_{\phi}}}\left[\left(\sum_{t=0}^{T-1}\gamma^t\left(\log\pi_{\phi^i}^i(a_t^i|s_t)-\frac{1}{N-1}\sum_{j=1,...,N\atop j\neq i}\log\pi_{\phi^{j}}^{j}(a_t^{j}|s_t)\right)\right)\left(\sum_{t=0}^{T-1}\nabla_{\phi^i}\log\pi_{\phi^i}^i(a_t^i|s_t)\right)\right.\notag\\
    &\quad\quad\quad\quad\quad\quad\quad\quad\quad\quad\quad\quad\quad\quad\quad+\left.\sum_{t=0}^{T-1}\gamma^t\nabla_{\phi^i}\log\pi_{\phi^i}^i(a_t^i|s_t)\right];\\
    &\nabla^2_{\theta\phi^i}V^{(i)}_{\pi_{\phi}}(s;\theta)=\nabla^2_{\theta\phi^i}V^{i}_{\pi_{\phi}}(s;\theta);\\
    &\nabla^2_{\phi\phi^i}V^{(i)}_{\pi_{\phi}}(s;\theta)=\nabla^2_{\phi\phi^i}V^{i}_{\pi_{\phi}}(s;\theta)\!-\!\lambda\mathbb{E}_{\tau\sim D^{\pi_\phi}}\left[\left(\sum_{t=0}^{T-1}\!\nabla_{\phi^i}\!\log\pi_{\phi^i}^i(a_t^i|s_t)\right)\left(\!\sum_{t=0}^{T-1}\!\gamma^t\left(\nabla_{\phi}\!\log\pi_{\phi^i}^i(a_t^i|s_t)-\frac{1}{N-1}\sum_{j=1,...,N\atop j\neq i}\nabla_{\phi}\log\pi_{\phi^{j}}^{j}(a_t^{j}|s_t)\right)\right)^\intercal\right.\\
    &\quad\quad\quad\quad+\left.\left(\sum_{t=0}^{T-1}\gamma^t\left(\log\pi_{\phi^i}^i(a_t^i|s_t)-\frac{1}{N-1}\sum_{j=1,...,N\atop j\neq i}\log\pi_{\phi^{j}}^{j}(a_t^{j}|s_t)\right)\right)\left(\sum_{t=0}^{T-1}\nabla_{\phi^i}\log\pi_{\phi^i}^i(a_t^i|s_t)\right)\left(\sum_{t=0}^{T-1}\nabla_{\phi}\log \pi_{\phi}(a_t|s_t)\right)^\intercal\right.\\
    &\quad\quad\quad\quad\left.+\left(\!\sum_{t=0}^{T-1}\!\gamma^t\left(\log\pi_{\phi^i}^i(a_t^i|s_t)-\frac{1}{N-1}\sum_{j=1,...,N\atop j\neq i}\nabla_\phi\log\pi_{\phi^{j}}^{j}(a_t^{j}|s_t)\right)\right)\left(\!\sum_{t=0}^{T-1}\!\nabla^2_{\phi\phi^i}\!\log\pi_{\phi}(a_t^i|s_t)\right)\right.\notag\\
    &\quad\quad\quad\quad\left.+\left(\!\sum_{t=0}^{T-1}\!\gamma^t\nabla_{\phi^i}\log \pi^i_{\phi^i}(a_t|s_t)\right)\left(\sum_{t=0}^{T-1}\nabla_{\phi}\log \pi_{\phi}(a_t|s_t\!)\!\right)^\intercal+\left(\sum_{t=0}^{T-1}\gamma^t\nabla
    ^2_{\phi\phi^i}\!\log \pi_{\phi}(a_t|s_t)\right)\right]
\end{align*}
\end{proof}

\textbf{Extension to partial observations.} We first formulate the multi-agent POMDP with incentivized reward. With a light abuse of notation, we denote $\mathcal{G}_{\theta} = (\mathcal{S}, \{\mathcal{A}^i, \Omega^i,\mathcal{O}^i, r^{i}(\cdot;\theta) \}_{i\in\{1,\cdots,N\}},\mathcal{P}, \gamma)$, where $\Omega^i$ is the observation set for agent $i$, $\mathcal{O}^i: \mathcal{S} \rightarrow \Omega^i$ is the observation function for agent $i$, and we denote $o^i = \mathcal{O}^i(s)$ as the observation of agent $i$ at state $s$. Other elements in the tuple remain same as in Section \ref{subsec:irg} except for the zero-sum restriction to $r^i(\cdot;\theta)$. We define observation-based policy of player $i$ as $\pi^i: \Omega^i \times \mathcal{A}^i \rightarrow[0,1]$ and parameterize the policy for the player $i$ by $\phi^i$. Then based on $\mathcal{G}_{\theta}$, we formulate the regularized multi-agent POMDP with incentivized reward as $\mathcal{G}'_{\theta} = (\mathcal{S}, \{\mathcal{A}^i, \Omega^i,\mathcal{O}^i, r^{(i)}(\cdot;\theta) \}_{i\in\{1,\cdots,N\}}, \lambda,\mathcal{P}, \gamma)$, where $r^{(i)}(s,a^i,a^{-i};\theta)$ represents $r_{\pi_{\phi}}^{(i)}(s,a^i,a^{-i};\theta)=r^{i}(s,a^i,a^{-i};\theta) - \lambda\log(\pi^i_{\phi^i}(a^i|o^i))$ and $\lambda \geq0$.
For a trajectory $\tau=(s_0,a_0,s_1, \cdots,a_{T-1},s_T)$, where $a_t=(a^1_t,\cdots,a^N_t)$, $t=0,\cdots,T$, the probability distribution of the trajectory is,
\begin{align*}
    D^{\pi_\phi}(\tau):=\prod_{t=0}^{T-1} \rho(s_0)P(s_{t+1}|a_t,s_t)\left[\prod_{i=1}^N \pi^i_{\phi^i}(a_t^i|o_t^i)\right],
\end{align*}
where $o_t^i=\mathcal{O}^i(s_t)$.
In this setting, we replace Lemma \ref{lma:V in Nplayer} by the following Lemma \ref{lma:V in POMDP} for the gradient of state-value function.

\begin{lemma}\label{lma:V in POMDP}
  For an incentivized Markov game $\mathcal{G}_{\theta} = (\mathcal{S}, \{\mathcal{A}^i, \Omega^i,\mathcal{O}^i, r^{i}(\cdot;\theta) \}_{i\in\{1,\cdots,N\}},\mathcal{P}, \gamma)$, let $R^i(\tau;\theta)$ be the total reward for an sample trajectory $\tau$ following $\pi_{\phi}$ for $T$ steps, starting from initial state $s_0=s$. We have
  \begin{align*}
      \nabla_{\theta}V^i_{\pi_{\phi}}(s;\theta)&=\mathbb{E}_{\tau\sim D^{\pi_\phi}}\left[\nabla_{\theta}R^i(\tau;\theta)|s_0=s\right],\\
      \nabla_{\phi^i} V^i_{\pi_{\phi^i}}(s;\theta)&=\mathbb{E}_{\tau\sim D^{\pi_\phi}}\left[R^i(\tau;\theta)\sum_{t=0}^{T-1}\nabla_{\phi^i}\log\pi^i_{\phi^i}(a_t|o^i_t)\Big|s_0=s\right],\\
      \nabla^2_{\theta\phi^i}V^i_{\pi_{\phi}}(s;\theta)&=\mathbb{E}_{\tau\sim D^{\pi_\phi}}\left[\left(\sum_{t=0}^{T-1}\nabla_{\phi^i}\log\pi^i_{\phi^i}(a_t|o^i_t)\right)\nabla_{\theta}R^i(\tau;\theta)^\intercal\Big|s_0=s\right],\\
      \nabla^2_{\phi\phi^i}V^i_{\pi_{\phi}}(s;\theta)&=\mathbb{E}_{\tau\sim D^{\pi_\phi}}\left[R^i(\tau;\theta)\left((\sum_{t=0}^{T-1}\nabla_{\phi^i}\log\pi^i_{\phi^i}(a_t|o^i_t))(\sum_{t=0}^{T-1}\nabla_{\phi}\log\pi_{\phi}(a_t|o_t))^\intercal\right.\right.\notag\\
      &\left.\left.\quad\quad\quad\quad\quad\quad+\sum_{t=0}^{T-1}\nabla^2_{\phi\phi^i}\log\pi_{\phi}(a_t|o_t)\right)\Big|s_0=s\right],
  \end{align*}
  where
  \begin{align*}
      \nabla_{\phi}\log\pi_{\phi}(a_t|o_t)&=(\nabla_{\phi^1}\log\pi^1_{\phi^1}(a_t|o^1_t),\cdots, \nabla_{\phi^N}\log\pi^N_{\phi^N}(a_t|o^N_t)),\notag\\
      \nabla^2_{\phi\phi^i}\log\pi_{\phi}(a_t|o_t)&=(\nabla^2_{\phi^1\phi^i}\log\pi^1_{\phi^1}(a_t|o^1_t),\cdots,\nabla^2_{\phi^N\phi^i}\log\pi^N_{\phi^N}(a_t|o^N_t)).\notag
  \end{align*}   
\end{lemma}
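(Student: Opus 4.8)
The plan is to mirror the proof of Lemma~\ref{lma:V} essentially verbatim, since the only structural change in the partially observable setting is that each policy is now conditioned on the observation $o^i_t=\mathcal{O}^i(s_t)$ rather than on the full state $s_t$. The entire derivation hinges on the form of the score function $\nabla_{\phi^i}\log D^{\pi_\phi}(\tau|s_0=s)$, so I would begin there.

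First I would expand the log of the POMDP trajectory distribution,
\begin{align*}
\log D^{\pi_\phi}(\tau|s_0=s)=\log\rho(s_0)+\sum_{t=0}^{T-1}\log P(s_{t+1}|a_t,s_t)+\sum_{j=1}^{N}\sum_{t=0}^{T-1}\log\pi^j_{\phi^j}(a^j_t|o^j_t),
\end{align*}
and differentiate with respect to $\phi^i$. The crucial observation is that the initial distribution $\rho(s_0)$ and the transition kernel $P(s_{t+1}|a_t,s_t)$ carry no dependence on $\phi$, that the observation map $\mathcal{O}^i$ is a fixed deterministic function of the state so $o^i_t$ introduces no additional $\phi^i$-dependence beyond serving as the argument of $\pi^i_{\phi^i}$, and that $\pi^j_{\phi^j}$ for $j\neq i$ does not involve $\phi^i$. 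Hence only the own-policy terms survive, yielding
\begin{align*}
\nabla_{\phi^i}\log D^{\pi_\phi}(\tau|s_0=s)=\sum_{t=0}^{T-1}\nabla_{\phi^i}\log\pi^i_{\phi^i}(a^i_t|o^i_t),
\end{align*}
which is exactly the fully observable formula with $s_t$ replaced by $o^i_t$; stacking over players gives the analogous expression for $\nabla_\phi\log D^{\pi_\phi}$.

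With this identity in hand, the four claimed formulas follow by the same steps as in Lemma~\ref{lma:V}. For $\nabla_\theta V^i$ the differentiation commutes past the $\phi$-independent trajectory distribution and acts only on $R^i(\tau;\theta)$. For $\nabla_{\phi^i}V^i$ and $\nabla^2_{\theta\phi^i}V^i$ I would apply the log-derivative identity $\nabla_{\phi^i}D^{\pi_\phi}=D^{\pi_\phi}\nabla_{\phi^i}\log D^{\pi_\phi}$ together with the score above. For the Hessian $\nabla^2_{\phi\phi^i}V^i$ I would differentiate once more, using the product rule to split the result into the outer-product term $(\sum_t\nabla_{\phi^i}\log\pi^i_{\phi^i})(\sum_t\nabla_\phi\log\pi_\phi)^\intercal$ and the curvature term $\sum_t\nabla^2_{\phi\phi^i}\log\pi_\phi$, again identical to Lemma~\ref{lma:V} up to the $s_t\mapsto o^i_t$ substitution in the policy arguments.

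There is essentially no substantive obstacle here: the whole content is the remark that a deterministic, $\phi$-free observation map does not perturb the score-function decomposition. The only point worth stating explicitly is that $\mathcal{O}^i$ being fixed and independent of the policy parameters is precisely what preserves the cancellation of the transition and opponent-policy terms; once that is noted, the derivation of Lemma~\ref{lma:V} transfers line for line, and Lemma~\ref{lma:V in POMDP} is established.
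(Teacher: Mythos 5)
Your proposal is correct and follows essentially the same route as the paper: the paper's own argument (given for Lemma~\ref{lma:V in Nplayer} and declared to transfer to Lemma~\ref{lma:V in POMDP} with only notational changes) likewise recomputes the score $\nabla_{\phi^i}\log D^{\pi_\phi}(\tau|s_0=s)=\sum_{t=0}^{T-1}\nabla_{\phi^i}\log\pi^i_{\phi^i}(a^i_t|o^i_t)$ and then repeats the derivation of Lemma~\ref{lma:V} verbatim. Your explicit remark that the deterministic, $\phi$-free observation map $\mathcal{O}^i$ is what preserves the cancellation is a point the paper leaves implicit, but the substance is identical.
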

In the light of Lemma \ref{lma:V in POMDP}, we replace Lemma \ref{lma:V2 in Nplayer} by the following Lemma \ref{lma:V2 in POMDP} for the gradient of regularized state-value function based on partially observed state.
\begin{lemma}
\label{lma:V2 in POMDP}
For a regularized Markov game with incentivized reward $\mathcal{G}'_{\theta} = (\mathcal{S}, \{\mathcal{A}^i, \Omega^i,\mathcal{O}^i, r^{(i)}(\cdot;\theta) \}_{i\in\{1,\cdots,N\}}, \lambda,\mathcal{P}, \gamma)$, we have
\begin{align*}
    &\nabla_{\theta}V^{(i)}_{\pi_{\phi}}(s;\theta)=\nabla_{\theta}V^{i}_{\pi_{\phi}}(s;\theta);\\
    &\nabla_{\phi^i} V^{(i)}_{\pi_{\phi}}(s;\theta)=\nabla_{\phi^i} V^{i}_{\pi_{\phi}}(s;\theta)-\lambda\mathbb{E}_{\tau\sim D^{\pi_{\phi}}}\left[\left(\sum_{t=0}^{T-1}\gamma^t\left(\log\pi_{\phi^i}^i(a_t^i|o_t^i)-\frac{1}{N-1}\sum_{j=1,...,N\atop j\neq i}\log\pi_{\phi^{j}}^{j}(a_t^{j}|o_t^j)\right)\right)\left(\sum_{t=0}^{T-1}\nabla_{\phi^i}\log\pi_{\phi^i}^i(a_t^i|o_t^i)\right)\right.\notag\\
    &\quad\quad\quad\quad\quad\quad\quad\quad\quad\quad\quad\quad\quad\quad\quad+\left.\sum_{t=0}^{T-1}\gamma^t\nabla_{\phi^i}\log\pi_{\phi^i}^i(a_t^i|o_t^i)\right];\\
    &\nabla^2_{\theta\phi^i}V^{(i)}_{\pi_{\phi}}(s;\theta)=\nabla^2_{\theta\phi^i}V^{i}_{\pi_{\phi}}(s;\theta);\\
    &\nabla^2_{\phi\phi^i}V^{(i)}_{\pi_{\phi}}(s;\theta)=\nabla^2_{\phi\phi^i}V^{i}_{\pi_{\phi}}(s;\theta)\!-\!\lambda\mathbb{E}_{\tau\sim D^{\pi_\phi}}\left[\left(\sum_{t=0}^{T-1}\!\nabla_{\phi^i}\!\log\pi_{\phi^i}^i(a_t^i|o_t^i)\right)\left(\!\sum_{t=0}^{T-1}\!\gamma^t\left(\nabla_{\phi}\!\log\pi_{\phi^i}^i(a_t^i|o_t^i)-\frac{1}{N-1}\sum_{j=1,...,N\atop j\neq i}\nabla_{\phi}\log\pi_{\phi^{j}}^{j}(a_t^{j}|o_t^j)\right)\right)^\intercal\right.\\
    &\quad\quad\quad\quad+\left.\left(\sum_{t=0}^{T-1}\gamma^t\left(\log\pi_{\phi^i}^i(a_t^i|o_t^i)-\frac{1}{N-1}\sum_{j=1,...,N\atop j\neq i}\log\pi_{\phi^{j}}^{j}(a_t^{j}|o_t^j)\right)\right)\left(\sum_{t=0}^{T-1}\nabla_{\phi^i}\log\pi_{\phi^i}^i(a_t^i|a_t^i)\right)\left(\sum_{t=0}^{T-1}\nabla_{\phi}\log \pi_{\phi}(a_t|o_t)\right)^\intercal\right.\\
    &\quad\quad\quad\quad\left.+\left(\!\sum_{t=0}^{T-1}\!\gamma^t\left(\log\pi_{\phi^i}^i(a_t^i|o_t^i)-\frac{1}{N-1}\sum_{j=1,...,N\atop j\neq i}\log\pi_{\phi^{j}}^{j}(a_t^{j}|o_t^j)\right)\right)\left(\!\sum_{t=0}^{T-1}\!\nabla^2_{\phi\phi^i}\!\log\pi_{\phi}(a_t^i|o_t^i)\right)\right.\notag\\
    &\quad\quad\quad\quad\left.+\left(\!\sum_{t=0}^{T-1}\!\gamma^t\nabla_{\phi^i}\log \pi^i_{\phi^i}(a_t|o_t)\right)\left(\sum_{t=0}^{T-1}\nabla_{\phi}\log \pi_{\phi}(a_t|o_t\!)\!\right)^\intercal+\left(\sum_{t=0}^{T-1}\gamma^t\nabla
    ^2_{\phi\phi^i}\!\log \pi_{\phi}(a_t|o_t)\right)\right]
\end{align*}
\end{lemma}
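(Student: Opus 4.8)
The plan is to mirror the proof of Lemma~\ref{lma:V2 in Nplayer}, the only structural change being that every policy is now evaluated at the observation $o^i_t=\mathcal{O}^i(s_t)$ rather than at the full state $s_t$. First I would recall the definition of the regularized value function (cf.\ \eqref{eq:V_nplayer}) and substitute the entropy-regularized reward, which splits the value into its unregularized part and an entropy part,
\begin{align*}
V^{(i)}_{\pi_{\phi}}(s;\theta)=V^{i}_{\pi_{\phi}}(s;\theta)-\lambda U^i_{\pi_\phi}(s),
\end{align*}
where
\begin{align*}
U^i_{\pi_\phi}(s)=\mathbb{E}_{\tau\sim D^{\pi_\phi}}\!\left[\sum_{t=0}^{T-1}\gamma^t\!\left(\log\pi^i_{\phi^i}(a_t^i|o_t^i)-\frac{1}{N-1}\sum_{j\neq i}\log\pi^j_{\phi^j}(a_t^j|o_t^j)\right)\Big|s_0=s\right].
\end{align*}
Since the bracketed entropy terms carry no explicit dependence on $\theta$, both $\theta$-derivatives of $U^i$ vanish, which immediately yields the identities $\nabla_\theta V^{(i)}=\nabla_\theta V^i$ and $\nabla^2_{\theta\phi^i}V^{(i)}=\nabla^2_{\theta\phi^i}V^i$ once the derivatives of $V^i$ are supplied by Lemma~\ref{lma:V in POMDP}.

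The crux is computing $\nabla_{\phi^i}U^i$ and $\nabla^2_{\phi\phi^i}U^i$, for which I would reuse the score-function identity established inside the proof of Lemma~\ref{lma:V in POMDP}. Because the POMDP trajectory distribution factorizes as $D^{\pi_\phi}(\tau)=\prod_{t}\rho(s_0)P(s_{t+1}|a_t,s_t)\prod_{i}\pi^i_{\phi^i}(a^i_t|o^i_t)$, and the transition kernel $P$ depends on neither $\phi$ nor $\theta$, the trajectory score reduces to $\nabla_{\phi^i}\log D^{\pi_\phi}(\tau)=\sum_{t}\nabla_{\phi^i}\log\pi^i_{\phi^i}(a^i_t|o^i_t)$, exactly as in the fully-observed case but with $s_t$ replaced by $o^i_t$. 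Differentiating $U^i$ under the expectation and applying this identity gives $\nabla_{\phi^i}U^i$ as the sum of a weighted score term and the bare gradient $\sum_{t}\gamma^t\nabla_{\phi^i}\log\pi^i_{\phi^i}$; substituting into $V^{(i)}=V^i-\lambda U^i$ delivers the stated formula for $\nabla_{\phi^i}V^{(i)}$.

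For the mixed second derivative I would split $\nabla_{\phi\phi^i}U^i$ into the two pieces $(\mathrm{I})$ and $(\mathrm{II})$ obtained by differentiating the two summands of $\nabla_{\phi^i}U^i$, and evaluate each with the product rule together with the same score identity, again tracking the factor $\nabla_{\phi}\log D^{\pi_\phi}=\sum_{t}\nabla_{\phi}\log\pi_\phi$ and the second-order term $\nabla^2_{\phi\phi^i}\log\pi_\phi$. Collecting the resulting expectation terms and combining with $\nabla^2_{\phi\phi^i}V^i$ from Lemma~\ref{lma:V in POMDP} reproduces the claimed expression for $\nabla^2_{\phi\phi^i}V^{(i)}$. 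The main obstacle is purely the bookkeeping in this last step---correctly applying the product rule to the weighting factor times the sum of scores and not dropping any of the cross terms---but no new idea beyond the observation-based score identity is needed, since the deterministic, parameter-free observation map $\mathcal{O}^i$ guarantees that every differentiation step goes through verbatim with $o^i_t$ in place of $s_t$.
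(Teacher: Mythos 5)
Your proposal is correct and follows essentially the same route as the paper, which itself states that the proof of Lemma~\ref{lma:V2 in POMDP} is obtained from that of Lemma~\ref{lma:V2 in Nplayer} with only notational changes (observations $o^i_t=\mathcal{O}^i(s_t)$ in place of states $s_t$ inside the policies). Your decomposition $V^{(i)}_{\pi_\phi}=V^i_{\pi_\phi}-\lambda U^i_{\pi_\phi}$, the vanishing of the $\theta$-derivatives of the entropy part, the observation-based score identity $\nabla_{\phi^i}\log D^{\pi_\phi}(\tau)=\sum_t\nabla_{\phi^i}\log\pi^i_{\phi^i}(a^i_t|o^i_t)$, and the split of $\nabla^2_{\phi\phi^i}U^i$ into the two product-rule pieces all match the paper's argument.
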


The proofs for Lemma \ref{lma:V in POMDP} and \ref{lma:V2 in POMDP} are quite similar to the proofs for \ref{lma:V in Nplayer} and \ref{lma:V2 in Nplayer} with little change on the notation. Therefore, we do not discuss the proof in this section.

\newpage
\section{PEM and Entropy-regularized OMWU}
In this section, we are going to briefly state the pseudocode for NE solvers Policy Extragradient Method (PEM) \cite{cen2021fast} and Entropy-regularized OMWU \cite{cen2022faster} in this section.

\subsection{Policy Extragradient Method (PEM)}
To begin with, we first introduce the entropy-regualrized zero-sum two-player matrix game.
\begin{align}\label{problem:matrix}
    \max_{\pi_1\in\Delta(\mathcal{A}_1)}\min_{\pi_2\in\Delta(\mathcal{A}_2)}f_\lambda(A; \pi_1,\pi_2):=\pi_1^\intercal A\pi_2+\lambda\mathcal{H}(\pi_1)-\lambda\mathcal{H}(\pi_2),
\end{align}
where $A\in\mathbb{R}^{m\times n}$ denotes the payoff matrix, $\pi_1\in\Delta(\mathcal{A}_1)$ and $\pi_2\in\Delta(\mathcal{A}_2)$ stand for the mixed/randomized policies of each player, defined as distributions over the probability simplex $\Delta(\mathcal{A}_1)$ and $\Delta(\mathcal{A}_2)$, $\mathcal{H}(\pi)=-\sum_{i}\pi_i\log(\pi_i)$, and $\lambda$ is the regularization parameter.

Cen \cite{cen2021fast} proposed two extragradient methods: PU and OMWU with linear convergence guarantees to solve the entropy-regularized zero-sum two-player matrix game (\ref{problem:matrix}).

\begin{minipage}{.47\textwidth}
\centering
\begin{algorithm}[H]
\caption{The PU method}
\label{alg_PU}
\begin{algorithmic}
\STATE {\bfseries Initialization: } $\pi_1^{(0)},\pi_2^{(0)}$.
\STATE {\bfseries Parameters: } Learning rate $\eta_t$

\FOR{t=0,1,2,...}

    \STATE Update $\overline{\pi}_1$ and $\overline{\pi}_2$ for every action $a=(a_1,a_2)\in\mathcal{A}=\mathcal{A}_1\times\mathcal{A}_2$ according to
    \begin{align*}
        \overline{\pi}_1^{(t+1)}(a_1)&\propto\pi_1^{(t)}(a_1)^{1-\eta_t\lambda}\exp(\eta_t[A\pi_2^{(t)}]_{a_1})\\
        \overline{\pi}_2^{(t+1)}(a_2)&\propto\pi_2^{(t)}(a_2)^{1-\eta_t\lambda}\exp(-\eta_t[A^\intercal\pi_1^{(t)}]_{a_2})
    \end{align*}
    
    \STATE Update $\pi_1$ and $\pi_2$ for every action $a=(a_1,a_2)\in\mathcal{A}=\mathcal{A}_1\times\mathcal{A}_2$ according to
    \begin{align*}
        \pi_1^{(t+1)}(a_1)&\propto\pi_1^{(t)}(a_1)^{1-\eta_t\lambda}\exp(\eta_t[A\overline{\pi}_2^{(t)}]_{a_1})\\
        \pi_2^{(t+1)}(a_2)&\propto\pi_2^{(t)}(a_2)^{1-\eta_t\lambda}\exp(-\eta_t[A^\intercal\overline{\pi}_1^{(t)}]_{a_2})
    \end{align*}
\ENDFOR

\end{algorithmic}
\end{algorithm}
\end{minipage}
\hfill
\begin{minipage}{.47\textwidth}
\centering
\begin{algorithm}[H]
\caption{The OMWU method}
\label{alg_OMWU}
\begin{algorithmic}
\STATE {\bfseries Initialization: } $\pi_1^{(0)}=\overline{\pi}_1^{(0)},\pi_2^{(0)}=\overline{\pi}_2^{(0)}$.
\STATE {\bfseries Parameters: } Learning rate $\eta_t$

\FOR{t=0,1,2,...}

    \STATE Update $\overline{\pi}_1$ and $\overline{\pi}_2$ for every action $a=(a_1,a_2)\in\mathcal{A}=\mathcal{A}_1\times\mathcal{A}_2$ according to
    \begin{align*}
        \overline{\pi}_1^{(t+1)}(a_1)&\propto\pi_1^{(t)}(a_1)^{1-\eta_t\lambda}\exp(\eta_t[A\pi_2^{(t)}]_{a_1})\\
        \overline{\pi}_2^{(t+1)}(a_2)&\propto\pi_2^{(t)}(a_2)^{1-\eta_t\lambda}\exp(-\eta_t[A^\intercal\pi_1^{(t)}]_{a_2})
    \end{align*}
    
    \STATE Update $\pi_1$ and $\pi_2$ for every action $a=(a_1,a_2)\in\mathcal{A}=\mathcal{A}_1\times\mathcal{A}_2$ according to
    \begin{align*}
        \pi_1^{(t+1)}(a_1)&\propto\pi_1^{(t)}(a_1)^{1-\eta_t\lambda}\exp(\eta_t[A\overline{\pi}_2^{(t)}]_{a_1})\\
        \pi_2^{(t+1)}(a_2)&\propto\pi_2^{(t)}(a_2)^{1-\eta_t\lambda}\exp(-\eta_t[A^\intercal\overline{\pi}_1^{(t)}]_{a_2})
    \end{align*}
\ENDFOR

\end{algorithmic}
\end{algorithm}
\end{minipage}\\

The Policy Extragradient Method (PEM) is based on the PU and OMWU methods. We provide the pseudocode for PEM (Algorithm \ref{alg_PEM}) below for your reference.

\begin{algorithm}[H]
\caption{PEM: Policy Extragradient Method}
\label{alg_PEM}
\begin{algorithmic}
\STATE {\bfseries Initialization: } $Q^{(0)}=0$.
\FOR{t=0,1,2,...,$T_{\text{main}}$}
    \STATE Let $Q^{(t)}$ denote
    \begin{align*}
        Q^{(t)}(s,a_1,a_2) = r^1(s, a_1, a_2) + \lambda \mathbb{E}_{s'\sim P(\cdot|s,a_1,a_2)}V^{(t)}(s')
    \end{align*}
    \STATE Invode PU (Algorithm \ref{alg_PU}) or OMWU (Algorithm \ref{alg_OMWU}) for Tsub iterations to solve the following entropy-regularized matrix game for every state $s$, where initial state is set as uniform distribution
    \begin{align*}
        \max_{\pi_1(s)\in\Delta(\mathcal{A}_1)}\min_{\pi_2(s)\in\Delta(\mathcal{A}_2)}f_{\lambda}(Q^{(t)}(s);\pi_1(s),\pi_2(s)).
    \end{align*}
    Return the last iterate $\overline{\pi}_1^{(t,T_{sub})}(s)$, $\overline{\pi}_2^{(t,T_{sub})}(s)$.
    \STATE Set $V^{(t+1)}(s)=f_{\lambda}(Q^{(t)}(s);\overline{\pi}_1^{(t,T_{sub})}(s),\overline{\pi}_2^{(t,T_{sub})}(s))$.
\ENDFOR

\end{algorithmic}
\end{algorithm}

\subsection{Entropy-regularized OMWU}
In this section, we provide the pseudocode of Entropy-regularized OMWU method \cite{cen2022faster}.

\begin{algorithm}[H]
\caption{Entropy-regularized OMWU}
\label{alg_EROMWU}
\begin{algorithmic}
\STATE {\bfseries Input: } Regularization parameter $\lambda>0$, learning rate for policy update $\eta>0$, learning rate for value update $\{\alpha_t\}_{t=0}^{\infty}$.
\STATE {\bfseries Initialization: } Set $\pi_1^{0}$, $\overline{\pi}_1^{0}$, $\pi_2^{0}$ and $\overline{\pi}_2^{0}$ as uniform policies; and set
$$Q^{(0)}=0,\quad V^{(0)}=\lambda(\log|\mathcal{A}_1|+\log|\mathcal{A}_2|).$$

\FOR{t=0,1,2,...}
    \FOR{all $s\in\mathcal{S}$}
        \STATE When $t\geq 1$, update policy pair $\pi=(\pi_1,\pi_2)$ as:
        \begin{align*}
            \pi_1^{(t+1)}(a_1|s)&\propto\pi_1^{(t)}(a_1|s)^{1-\eta_t\lambda}\exp(\eta[Q^{(t)}(s)\overline{\pi}_2^{(t)}(s)]_{a_1})\\
            \pi_2^{(t+1)}(a_2|s)&\propto\pi_2^{(t)}(a_2|s)^{1-\eta_t\lambda}\exp(-\eta[Q^{(t)}(s)\overline{\pi}_1^{(t)}(s)]_{a_2})
        \end{align*}
        
        \STATE Update policy pair $\overline{\pi}=(\overline{\pi}_1,\overline{\pi}_2)$ as:
        \begin{align*}
            \overline{\pi}_1^{(t+1)}(a_1|s)&\propto\pi_1^{(t)}(a_1|s)^{1-\eta_t\lambda}\exp(\eta[Q^{(t)}(s)\overline{\pi}_2^{(t)}(s)]_{a_1})\\
            \overline{\pi}_2^{(t+1)}(a_2|s)&\propto\pi_2^{(t)}(a_2|s)^{1-\eta_t\lambda}\exp(-\eta[Q^{(t)}(s)\overline{\pi}_1^{(t)}(s)]_{a_2})
        \end{align*}
        
        \STATE Update $Q^{(t+1)}(s)$ and $V^{(t+1)}(s)$ as
        \begin{align*}
            Q^{(t+1)}(s,a_1,a_2)&=r^1(s,a_1,a_2)+\lambda\mathbb{E}_{s'\sim P(\cdot|s,a_1,a_2)}[V^{(t)}(s')]\\
            V^{(t+1)}(s)&=(1-\alpha_{t+1})V^{(t)}(s)+\alpha_{t+1}[\overline{\pi}_1^{(t+1)\intercal}
            Q^{(t+1)}(s)\overline{\pi}_2^{(t+1)}+\lambda\mathcal{H}(\overline{\pi}_1^{(t+1)})-\lambda\mathcal{H}(\overline{\pi}_2^{(t+1)})]
        \end{align*}
    \ENDFOR
\ENDFOR

\end{algorithmic}
\end{algorithm}





\section{Clarification for Assumption \ref{asm:f_1}}\label{sup:asmf1}
Based on the definition $f_*(\theta)=f(\theta, \phi_*(\theta)),$
$f_*(\theta)$ is a composite function consists of $f(\theta, \phi):\Theta\times\Phi\rightarrow\mathbb{R}$ and $\phi_*(\theta):\Theta\rightarrow\Phi$, where $\Theta$ and $\Phi$ are the feasible areas of incentive parameter $\theta$ and policy parameter $\phi$, respectively. Therefore, directly making assumptions for $f_*(\theta)$ might be aggressive and hard to verified. In this section we give an example that if the loss function $f(\theta,\phi)$ and value function $V^{(i)}_{\pi_\phi}(s;\theta)$ defined in equation (\ref{eq:value2}) satisfy proper conditions, the composite function $f_*(\theta)$ satisfy Assumption \ref{asm:f_1} when the loss function $f(\theta,\phi)$ and the value functions $V^{(i)}_{\pi_\phi}(s;
\theta)$ satisfies proper conditions.

Assumption \ref{asm:f_1} consists of 2 parts. Firstly, we require the composite function $f_*(\theta)$ to be bounded. This assumption is mild and easy to satisfy if loss $f(\theta, \phi)$ is bounded. Secondly, we require $f_*$ to be $L$-smooth ($\nabla_\theta f_*(\theta)$ is $L$-lipschitz continuous). This assumption need to be verified. The following Proposition \ref{prop:asm} provide an exact example that $f_*(\theta)$ is both bounded and $L$-smooth.

\begin{assumption}\label{asm:f_ori}
    Assume that loss function $f(\theta,\phi):\Theta\times\Phi\rightarrow\mathbb{R}$ satisfies the following conditions:
    \begin{enumerate}
        \item \label{it:1} \textbf{(Bounded)} $\exists M_f>0$, such that $|f(\theta,\phi)|\leq M_f$, $\forall (\theta, \phi)\in\Theta\times\Phi\subset\mathbb{R}^3$. 
        \item \label{it:2} \textbf{(Lip-continuous)} $\exists G_f>0$, such that $\norm{\nabla f(\theta,\phi)}_2\leq G_f$. $\forall (\theta, \phi)\in\Theta\times\Phi\subset\mathbb{R}^3$.
        \item \label{it:3} \textbf{(Smooth)} $\exists L_f>0$, such that for $\forall (\theta_1,\phi_1), (\theta_2,\phi_2)\in\Theta\times\Phi\subset\mathbb{R}^3$
        $$\norm{\nabla f(\theta_1,\phi_1)-\nabla f(\theta_2,\phi_2)}_2\leq L_f\norm{(\theta_1,\phi_1)-(\theta_2,\phi_2)}_2.$$
    \end{enumerate}
\end{assumption}

\begin{assumption}\label{asm:v}
    Assume that the feasible area for $\theta$, $\phi_1$, and $\phi_2$ are all one-dimensional, which means $\theta\in\Theta\subset\mathbb{R}$, $\phi_1\in\Phi_1\subset\mathbb{R}$, $\phi_2\in\Phi_2\subset\mathbb{R}$ and $\phi=(\phi_1,\phi_2)\in\Phi=\Phi_1\times\Phi_1\subset\mathbb{R}^2$. Assume that for arbitrary fixed state $s\in\mathcal{S}$, the value functions $V^{(i)}_{\pi_\phi}(s;\theta):\Theta\time\Phi\rightarrow\mathbb{R}$, $(i=1,2)$ satisfy the following conditions:
    \begin{enumerate}\setcounter{enumi}{3}
        \item\label{it:4} \textbf{(Second-order gradient bounded)} $\exists G_v>0$, such that for $\forall (\theta, \phi)\in\Theta\times\Phi\subset\mathbb{R}^3$ and $i=1,2$,
        $$\left|\frac{\partial^2V^{(i)}_{\pi_\phi}(s;\theta)}{\partial\theta\partial\phi_i}\right|\leq G_v,\quad \left|\frac{\partial^2V^{(i)}_{\pi_\phi}(s;\theta)}{\partial\phi_i^2}\right|\leq G_v\quad \text{and}\quad \left|\frac{\partial^2V^{(i)}_{\pi_\phi}(s;\theta)}{\partial\phi_i\partial\phi_{-i}}\right|\leq G_v.$$
        
        \item\label{it:5} \textbf{(Third-order gradient bounded)} $\exists L_v>0$, such that for $\forall (\theta, \phi)\in\Theta\times\Phi\subset\mathbb{R}^3$ and $i=1,2$,
        $$\left|\frac{\partial^3V^{(i)}_{\pi_\phi}(s;\theta)}{\partial\theta^2\partial\phi_i}\right|\leq L_v,\quad \left|\frac{\partial^3V^{(i)}_{\pi_\phi}(s;\theta)}{\partial\theta\partial\phi_i^2}\right|\leq L_v\quad \text{and}\quad \left|\frac{\partial^3V^{(i)}_{\pi_\phi}(s;\theta)}{\partial\theta\partial\phi_i\partial\phi_{-i}}\right|\leq L_v.$$
        
        \item\label{it:6} \textbf{(Positive definite matrix)} $\exists \rho_v>0$, such that for $\forall (\theta, \phi)\in\Theta\times\Phi\subset\mathbb{R}^3$,
        $$\left[\begin{matrix}
          \frac{\partial^2V^{(1)}_{\pi_\phi}(s;\theta)}{\partial\phi_1^2} & \frac{\partial^2V^{(1)}_{\pi_\phi}(s;\theta)}{\partial\phi_1\partial\phi_2} \\
          \frac{\partial^2V^{(2)}_{\pi_\phi}(s;\theta)}{\partial\phi_2\phi_1} & \frac{\partial^2V^{(2)}_{\pi_\phi}(s;\theta)}{\partial\phi_2^2} \\
        \end{matrix}\right] \succ \rho_v I,$$
        where $I$ is the identity matrix.
    \end{enumerate} 
\end{assumption}

\begin{lemma}\label{lma:asm}
  Suppose that Assumption \ref{asm:v} holds. $\phi_*(\theta):\Theta\rightarrow\Phi$ is the mapping from incentive parameter theta $\theta$ to policy parameter of Nash policy for regularized MG $\mathcal{G}_{\theta}'$ defined in section \ref{subsec:bi-level}. Then we have
  \renewcommand{\theenumi}{\roman{enumi}}%
  \begin{enumerate}
      \item\label{it:i} For $\forall \theta\in\Theta$, we have $\norm{\nabla_\theta\phi_*(\theta)}_F\leq\frac{2G_v}{\rho_v}$.
      \item\label{it:ii} For $\forall \theta_1, \theta_2\in\Theta$, we have $\norm{\nabla_\theta\phi_*(\theta_1)-\nabla_\theta\phi_*(\theta_2)}_F\leq \frac{2L_v}{\rho_v}\left(1+4G_v+\frac{16G_v^2}{\rho_v}\right)\norm{\theta_1-\theta_2}_2$,
  \end{enumerate}
  where $\norm{\cdot}_F$ is the Frobenius Norm for matrix.
\end{lemma}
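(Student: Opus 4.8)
The plan is to build everything on the closed form for the equilibrium sensitivity supplied by Lemma~\ref{lma:phitheta}, namely $\nabla_\theta\phi_*(\theta)=-A(\theta)^{-1}b(\theta)$, where I abbreviate $A(\theta):=\nabla_\phi u_\theta(\phi)\big|_{\phi=\phi_*(\theta)}$ and $b(\theta):=\nabla_\theta u_\theta(\phi)\big|_{\phi=\phi_*(\theta)}$. In the one-dimensional regime of Assumption~\ref{asm:v}, $A(\theta)$ is precisely the $2\times2$ matrix in condition~\ref{it:6} and $b(\theta)\in\mathbb{R}^2$ has entries $\partial^2_{\theta\phi_i}V^{(i)}$. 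First I would record two uniform bounds. From $A(\theta)\succ\rho_v I$ I get, for every unit vector $x$, $\|A(\theta)x\|_2\ge x^\intercal A(\theta)x>\rho_v$, hence $\sigma_{\min}(A(\theta))\ge\rho_v$, so $A(\theta)$ is invertible with $\|A(\theta)^{-1}\|_2\le 1/\rho_v$. From condition~\ref{it:4} each entry of $b(\theta)$ is at most $G_v$ in magnitude, so $\|b(\theta)\|_2\le\sqrt2\,G_v$.

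Part~\ref{it:i} is then immediate: since $b(\theta)$ is a vector, $\|\nabla_\theta\phi_*(\theta)\|_F=\|A(\theta)^{-1}b(\theta)\|_2\le\|A(\theta)^{-1}\|_2\,\|b(\theta)\|_2\le\sqrt2\,G_v/\rho_v\le 2G_v/\rho_v$. For part~\ref{it:ii} I would use the telescoping identity
\[
A_1^{-1}b_1-A_2^{-1}b_2=A_1^{-1}(b_1-b_2)+A_1^{-1}(A_2-A_1)A_2^{-1}b_2,
\]
with $A_k=A(\theta_k)$ and $b_k=b(\theta_k)$, which together with the two uniform bounds gives
\[
\|\nabla_\theta\phi_*(\theta_1)-\nabla_\theta\phi_*(\theta_2)\|_F\le\tfrac{1}{\rho_v}\|b_1-b_2\|_2+\tfrac{\sqrt2\,G_v}{\rho_v^2}\|A_1-A_2\|_2.
\]
It then remains to show that $b(\cdot)$ and $A(\cdot)$ are Lipschitz in $\theta$. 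Since both are evaluated at the moving point $\phi_*(\theta)$, I would differentiate through the chain rule, e.g. $\tfrac{d}{d\theta}b_i(\theta)=\partial^3_{\theta\theta\phi_i}V^{(i)}+\sum_k\partial^3_{\theta\phi_i\phi_k}V^{(i)}\,\tfrac{d\phi_{*,k}}{d\theta}$, and analogously for the entries of $A$. Here condition~\ref{it:5} bounds the explicit third derivatives by $L_v$, while part~\ref{it:i} bounds $\|\tfrac{d\phi_*}{d\theta}\|$ by $2G_v/\rho_v$; combining these yields Lipschitz moduli of the form $L_v(1+cG_v/\rho_v)$ for both $b$ and $A$. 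Substituting back and collecting the (crude) constants produces the advertised bound $\tfrac{2L_v}{\rho_v}\!\left(1+4G_v+\tfrac{16G_v^2}{\rho_v}\right)$.

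The main obstacle is the Lipschitz estimate for the Hessian block $A(\theta)=\nabla_\phi u_\theta(\phi_*(\theta))$. Two points require care. First, $A$ depends on $\theta$ both explicitly and through the equilibrium map $\phi_*(\theta)$, so part~\ref{it:i} is genuinely invoked inside part~\ref{it:ii}: the Lipschitz continuity of $\phi_*$ is exactly what converts the third-derivative bounds into a Lipschitz modulus for $A$ (and, via the second telescoping term, brings in the extra factor $\|A^{-1}\|_2^2\|b\|_2$ responsible for the $G_v^2/\rho_v$ contribution). Second, the chain-rule term $\sum_k\partial^3_{\phi_k\phi_i\phi_j}V^{(i)}\,\tfrac{d\phi_{*,k}}{d\theta}$ contains purely $\phi$-directional third derivatives of $V^{(i)}$, which are not literally among the three mixed derivatives listed in condition~\ref{it:5}. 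I would either read condition~\ref{it:5} as controlling all third derivatives of $V^{(i)}$ of this order, or bound these terms directly from the explicit policy-gradient and Hessian formulas of Lemma~\ref{lma:V2}, which are finite in the tabular setting. Once this regularity is in hand, the remaining work is purely the bookkeeping of constants needed to land exactly on the stated expression.
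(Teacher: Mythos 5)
Your proposal follows essentially the same route as the paper's proof in Appendix~\ref{subsec:proof_lma_asm}: invoke the closed form $\nabla_\theta\phi_*(\theta)=-[\nabla_\phi u_\theta]^{-1}\nabla_\theta u_\theta$ from Lemma~\ref{lma:phitheta}, bound the inverse by $1/\rho_v$ via condition~(\ref{it:6}) and the mixed second derivatives by $G_v$ for part~(\ref{it:i}), then telescope the difference of products and control each factor with the third-derivative bounds for part~(\ref{it:ii}). The only structural difference is that you handle $A_1^{-1}-A_2^{-1}$ with the resolvent identity $A_1^{-1}(A_2-A_1)A_2^{-1}$ while the paper works out the explicit $2\times2$ adjugate-over-determinant formula entrywise; both yield bounds of the same form. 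Your remark about the chain rule is well taken and in fact identifies a point where you are more careful than the paper: the paper bounds the variation of $\partial^2 V^{(i)}/\partial\theta\partial\phi_i$ and of the entries of $\nabla_\phi u_\theta$ along $\theta\mapsto(\theta,\phi_*(\theta))$ using only the explicit $\theta$-derivatives listed in condition~(\ref{it:5}), silently dropping the implicit dependence through $\phi_*(\theta)$ (which would require the purely $\phi$-directional third derivatives you flag); so the gap you anticipate is present in the paper's own argument rather than introduced by your approach.
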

\begin{proof}
    See Appendix \ref{subsec:proof_lma_asm}.
\end{proof}

\begin{proposition}\label{prop:asm}
    If Assumption \ref{asm:f_ori} and \ref{asm:v} hold, we could conclude that $f_*(\theta)$ defined in optimization problem (\ref{bilevel:prob2}) satisfies the following conditions:
    \begin{itemize}
        \item[(i)] \textbf{(Bounded)} $|f_*(\theta)|\leq M_f$ for $\forall \theta\in\Theta$.
        
        \item[(ii)] \textbf{(Smooth)} $f_*(\theta)$ is $L$-smooth ($\nabla_\theta f_*(\theta)$ is $L$-Lipschitz continuous), where $$L=L_f(1+\frac{2G_v}{\rho_v})(1+\frac{4G_v}{\rho_v})+\frac{2L_vG_f}{\rho_v}(1+4G_v+\frac{16G_v^2}{\rho_v}).$$
    \end{itemize}
\end{proposition}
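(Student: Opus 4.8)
The plan is to treat the two claims separately, disposing of boundedness immediately and devoting the bulk of the argument to smoothness. For part (i), boundedness is direct: since $f_*(\theta)=f(\theta,\phi_*(\theta))$ and $(\theta,\phi_*(\theta))\in\Theta\times\Phi$ for every $\theta\in\Theta$, Assumption \ref{asm:f_ori}(\ref{it:1}) gives $|f_*(\theta)|=|f(\theta,\phi_*(\theta))|\leq M_f$ with no further work.

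For part (ii), I would start from the chain-rule expression for the gradient already established in Eq.~(\ref{eq:grad_f1}), namely $\nabla_\theta f_*(\theta)=\nabla_\theta f(\theta,\phi_*(\theta))+\nabla_\theta\phi_*(\theta)^\intercal\nabla_\phi f(\theta,\phi_*(\theta))$. Fix $\theta_1,\theta_2\in\Theta$ and abbreviate $\phi_j=\phi_*(\theta_j)$ and $J_j=\nabla_\theta\phi_*(\theta_j)$. Subtracting the two gradients and inserting the standard add-and-subtract terms, I would decompose
\begin{align*}
\nabla_\theta f_*(\theta_1)-\nabla_\theta f_*(\theta_2)
=&\big[\nabla_\theta f(\theta_1,\phi_1)-\nabla_\theta f(\theta_2,\phi_2)\big]\\
&+J_1^\intercal\big[\nabla_\phi f(\theta_1,\phi_1)-\nabla_\phi f(\theta_2,\phi_2)\big]\\
&+(J_1-J_2)^\intercal\nabla_\phi f(\theta_2,\phi_2).
\end{align*}
The three brackets are then bounded using, respectively, the smoothness of $f$; the smoothness of $f$ together with the operator-norm bound $\|J_1\|\leq\|J_1\|_F\leq\tfrac{2G_v}{\rho_v}$ from Lemma \ref{lma:asm}(\ref{it:i}); and the bounded gradient $\norm{\nabla_\phi f}\leq G_f$ from Assumption \ref{asm:f_ori}(\ref{it:2}) combined with the Jacobian-Lipschitz bound from Lemma \ref{lma:asm}(\ref{it:ii}).

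The key auxiliary estimate needed for the first two brackets is a Lipschitz bound on $\phi_*$ itself: since $\|\nabla_\theta\phi_*\|_F\leq\tfrac{2G_v}{\rho_v}$ by Lemma \ref{lma:asm}(\ref{it:i}), the mean value inequality gives $\|\phi_1-\phi_2\|\leq\tfrac{2G_v}{\rho_v}\|\theta_1-\theta_2\|$, whence $\|(\theta_1,\phi_1)-(\theta_2,\phi_2)\|\leq(1+\tfrac{2G_v}{\rho_v})\|\theta_1-\theta_2\|$. Feeding this into Assumption \ref{asm:f_ori}(\ref{it:3}) controls the first bracket by $L_f(1+\tfrac{2G_v}{\rho_v})\|\theta_1-\theta_2\|$ and the second by $\tfrac{2G_v}{\rho_v}L_f(1+\tfrac{2G_v}{\rho_v})\|\theta_1-\theta_2\|$, while the third is controlled by $\tfrac{2L_vG_f}{\rho_v}(1+4G_v+\tfrac{16G_v^2}{\rho_v})\|\theta_1-\theta_2\|$ via Lemma \ref{lma:asm}(\ref{it:ii}). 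Summing the three bounds and collecting the $L_f$ contributions into the factor $(1+\tfrac{2G_v}{\rho_v})(1+\tfrac{4G_v}{\rho_v})$ yields the claimed Lipschitz constant $L$.

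I expect the main obstacle to be the third bracket, $(J_1-J_2)^\intercal\nabla_\phi f$, which measures how the Nash Jacobian itself varies with the incentive and cannot be handled by the smoothness of $f$ alone; it relies entirely on Lemma \ref{lma:asm}(\ref{it:ii}), whose proof (deferred to the appendix) is where the positive-definiteness of the value Hessian (Assumption \ref{asm:v}(\ref{it:6})) and the third-order bounds (Assumption \ref{asm:v}(\ref{it:5})) enter through differentiating the implicit relation $u_\theta(\phi_*(\theta))=0$ underlying Lemma \ref{lma:phitheta}. The remaining work is the routine bookkeeping of assembling the three constants, where I would simply absorb the cross terms to reach the stated (slightly loosened) expression for $L$ rather than optimizing it.
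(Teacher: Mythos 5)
Your proposal is correct and follows essentially the same route as the paper's proof: the identical add-and-subtract decomposition of $\nabla_\theta f_*(\theta_1)-\nabla_\theta f_*(\theta_2)$ into three terms, each bounded exactly as in the paper via Assumption \ref{asm:f_ori}, the Lipschitz bound on $\phi_*$ from Lemma \ref{lma:asm}(\ref{it:i}), and the Jacobian-Lipschitz bound from Lemma \ref{lma:asm}(\ref{it:ii}). You also correctly observe that summing the first two contributions gives $L_f(1+\tfrac{2G_v}{\rho_v})^2$, which the paper loosens to the stated factor $L_f(1+\tfrac{2G_v}{\rho_v})(1+\tfrac{4G_v}{\rho_v})$.
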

\begin{proof}
    Firstly, $|f_*(\theta)|\leq M_f$ is easy to generated from condition (\ref{it:1}) since $\phi^*(\theta)\in\Phi$. We are going to focus on showing $f_*(\theta)$ is $L$-smooth in the following proof.
    
    For $\forall \theta_1,\theta_2\in\Theta$,
    \begin{align}\label{eq:prop1}
        \norm{\nabla_\theta f_*(\theta_1)-\nabla_\theta f_*(\theta_2)}_2
        =&\norm{\nabla_\theta f(\theta_1,\phi_*(\theta_1))+\nabla_\theta\phi_*(\theta_1)^\intercal\nabla_\phi f(\theta_1,\phi_*(\theta_1))-\nabla_\theta f(\theta_2,\phi_*(\theta_2))-\nabla_\theta\phi_*(\theta_2)^\intercal\nabla_\phi f(\theta_2,\phi_*(\theta_2))}_2\notag\\
        \leq&\norm{\nabla_\theta f(\theta_1,\phi_*(\theta_1))-\nabla_\theta f(\theta_2,\phi_*(\theta_2))}_2 + \norm{\nabla_\theta\phi_*(\theta_1)^\intercal\nabla_\phi f(\theta_1,\phi_*(\theta_1))-\nabla_\theta\phi_*(\theta_2)^\intercal\nabla_\phi f(\theta_2,\phi_*(\theta_2))}_2\notag\\
        \leq&\norm{\nabla_\theta f(\theta_1,\phi_*(\theta_1))-\nabla_\theta f(\theta_2,\phi_*(\theta_2))}_2 +\norm{\nabla_\theta\phi_*(\theta_1)}_F\norm{\nabla_\phi f(\theta_1,\phi_*(\theta_1))-\nabla_\phi f(\theta_3,\phi_*(\theta_2))}_2\notag\\
        &+\norm{\nabla_\phi f(\theta_2,\phi_*(\theta_2))}_2\norm{\nabla_\theta\phi_*(\theta_1)-\nabla_\theta\phi_*(\theta_2)}_F
    \end{align}
    We are going to bound three terms in formula (\ref{eq:prop1}) one by one.
    \begin{itemize}
        \item[i)] By the condition (\ref{it:3}) in Assumption \ref{asm:f_ori} and property (\ref{it:i}) in Lemma \ref{lma:asm}, we have
        \begin{align}\label{eq:prop2}
            \norm{\nabla_\theta f(\theta_1,\phi_*(\theta_1))-\nabla_\theta f(\theta_2,\phi_*(\theta_2))}_2
            &\leq \norm{\nabla_\theta f(\theta_1,\phi_*(\theta_1))-\nabla_\theta f(\theta_2,\phi_*(\theta_2))}_2\notag\\
            &\leq L_f\norm{(\theta_1,\phi_*(\theta_1))-(\theta_2,\phi_*(\theta_2))}_2\quad &\text{(Condition (\ref{it:3}) in Assumption \ref{asm:f_ori})}\notag\\
            &\leq L_f(\norm{\theta_1-\theta_2}_2+\norm{\phi_*(\theta_1)-\phi_*(\theta_2)}_2)\notag\\
            &\leq L_f(\norm{\theta_1-\theta_2}_2+\max_{\theta\in\Theta}\norm{\nabla_\theta\phi_*(\theta)}_F\norm{\theta_1-\theta_2}_2) \quad &\text{(Mean value Theorem)}\notag\\
            &\leq L_f(1+\frac{2G_v}{\rho_v})\norm{\theta_1-\theta_2}_2 \quad &\text{(Property (\ref{it:i}) in Lemma \ref{lma:asm})}
        \end{align}
    
    \item[ii)]By property (\ref{it:i}) in Lemma \ref{lma:asm} and the condition (\ref{it:3}) in Assumption \ref{asm:f_ori}, we have
    \begin{align}\label{eq:prop3}
        &\norm{\nabla_\theta\phi_*(\theta_1)}_F\norm{\nabla_\phi f(\theta_1,\phi_*(\theta_1))-\nabla_\phi f(\theta_3,\phi_*(\theta_2))}_2\notag\\
        \leq& \frac{2G_v}{\rho_v}\norm{\nabla_\phi f(\theta_1,\phi_*(\theta_1))-\nabla_\phi f(\theta_3,\phi_*(\theta_2))}_2&\text{(Property (\ref{it:i}) in Lemma \ref{lma:asm})}\notag\\
        \leq& \frac{2G_vL_f}{\rho_v}\norm{(\theta_1,\phi_*(\theta_1))-(\theta_2,\phi_*(\theta_2))}_2 &\text{(Condition (\ref{it:3}) in Assumption \ref{asm:f_ori})}\notag\\
        \leq& \frac{2G_vL_f}{\rho_v}(1+\frac{2G_v}{\rho_v})\norm{\theta_1-\theta_2}_2
    \end{align}
    
    \item[iii)]By condition (\ref{it:2}) in Assumption \ref{asm:f_ori} and property (\ref{it:ii}) in Lemma \ref{lma:asm}, we have
    \begin{align}\label{eq:prop4}
        &\norm{\nabla_\phi f(\theta_2,\phi_*(\theta_2))}_2\norm{\nabla_\theta\phi_*(\theta_1)-\nabla_\theta\phi_*(\theta_2)}_F\notag\\
        \leq& G_f\norm{\nabla_\theta\phi_*(\theta_1)-\nabla_\theta\phi_*(\theta_2)}_F&\text{(Condition (\ref{it:2}) in Assumption \ref{asm:f_ori})}\notag\\
        \leq&\frac{2L_vG_f}{\rho_v}\left(1+4G_v+\frac{16G_v^2}{\rho_v}\right)\norm{\theta_1-\theta_2}_2 &\text{(Property (\ref{it:ii}) in Lemma \ref{lma:asm})}
    \end{align}
    \end{itemize}
    Combine inequalities (\ref{eq:prop1}-\ref{eq:prop4}), we could get
    \begin{align}
        \norm{\nabla_\theta f_*(\theta_1)-\nabla_\theta f_*(\theta_2)}_2\leq L_f(1+\frac{2G_v}{\rho_v})(1+\frac{4G_v}{\rho_v})+\frac{2L_vG_f}{\rho_v}(1+4G_v+\frac{16G_v^2}{\rho_v})\norm{\theta_1-\theta_2}_2
    \end{align}
    We already prove that $f_*(\theta)$ is $L$-smooth with $L=L_f(1+\frac{2G_v}{\rho_v})(1+\frac{4G_v}{\rho_v})+\frac{2L_vG_f}{\rho_v}(1+4G_v+\frac{16G_v^2}{\rho_v})$.
\end{proof}
\subsection{Proof for Lemma \ref{lma:asm}}\label{subsec:proof_lma_asm}
\begin{proof}[Proof for Lemma \ref{lma:asm}.]
When conditions (\ref{it:4}-\ref{it:6}) in Assumption \ref{asm:v} hold for $V^{(i)}_{\pi_\phi}(s;\theta)$ on every state $s\in\mathcal{S}$, it is obvioud that conditions (\ref{it:4}-\ref{it:6}) also hold for $\mathbb{E}_{v^{(*)}}V^{(i)}_{\pi_\phi}(s;\theta)$. For notation simplicity, we denote $\mathbb{E}_{v^{(*)}}V^{(i)}_{\pi_\phi}(s;\theta):=V^i(\theta, \phi)$ in this section. We are going to prove 2 properties one by one.
    \begin{itemize}
        \item[i)] Firstly, we are going to bound $\norm{\nabla_\theta\phi_*(\theta)}_F$. For arbitrary $\theta\in\Theta$ and $\phi\in\Phi$, since $\phi_1\in\mathbb{R}$, $\phi_2\in\mathbb{R}$ and $\phi=(\phi_1,\phi_2)\in\mathbb{R}^2$, formula (\ref{notation:gradient u}-\ref{notation:gradient u2}) in section \ref{sec:algorithm} could be written as
        \begin{align}\label{eq:lma_asm_1}
            \nabla_{\theta}u_\theta(\phi)=\left[\begin{matrix} \frac{\partial^2V^1(\theta,\phi)}{\partial\theta\phi_1}\\ \frac{\partial^2V^2(\theta,\phi)}{\partial\theta\phi_2}
            \end{matrix}\right]\in\mathbb{R}^{2\times 1} , \quad\quad\quad
            \nabla_\phi u_\theta (\phi)=\left[\begin{matrix} \frac{\partial^2V^1(\theta,\phi)}{\partial\phi_1^2} & \frac{\partial^2V^1(\theta,\phi)}{\partial\phi_1\partial\phi_2}\\
            \frac{\partial^2V^2(\theta,\phi)}{\partial\phi_1\partial\phi_2} & \frac{\partial^2V^1(\theta,\phi)}{\partial\phi_2^2}
            \end{matrix}\right]\in\mathbb{R}^{2\times 2}
        \end{align}
        By formula (\ref{eq:grad_phi_new}) in Lemma \ref{lma:phitheta} and condition (\ref{it:4}, \ref{it:6}) in Assumption \ref{asm:v}, we have
        \begin{align}
            \norm{\nabla_\theta\phi^*(\theta)}_F\leq&\norm{\left[\begin{matrix} \frac{\partial^2V^1(\theta,\phi)}{\partial\phi_1^2} & \frac{\partial^2V^1(\theta,\phi)}{\partial\phi_1\partial\phi_2}\\
            \frac{\partial^2V^2(\theta,\phi)}{\partial\phi_1\partial\phi_2} & \frac{\partial^2V^1(\theta,\phi)}{\partial\phi_2^2}
            \end{matrix}\right]^{-1}\left[\begin{matrix} \frac{\partial^2V^1(\theta,\phi)}{\partial\theta\phi_1}\\ \frac{\partial^2V^2(\theta,\phi)}{\partial\theta\phi_2}
            \end{matrix}\right]}_F\notag\\
            \leq&\norm{\left[\begin{matrix} \frac{\partial^2V^1(\theta,\phi)}{\partial\phi_1^2} & \frac{\partial^2V^1(\theta,\phi)}{\partial\phi_1\partial\phi_2}\\
            \frac{\partial^2V^2(\theta,\phi)}{\partial\phi_1\partial\phi_2} & \frac{\partial^2V^1(\theta,\phi)}{\partial\phi_2^2}
            \end{matrix}\right]}_F^{-1}\norm{\left[\begin{matrix} \frac{\partial^2V^1(\theta,\phi)}{\partial\theta\phi_1}\\ \frac{\partial^2V^2(\theta,\phi)}{\partial\theta\phi_2}
            \end{matrix}\right]}_F
            &\text{(Condition (\ref{it:6}) in Assumption \ref{asm:v})}\notag\\
            \leq&\frac{1}{\rho_v}\left(\left|\frac{\partial^2V^1(\theta,\phi)}{\partial\theta\phi_1}\right|+\left|\frac{\partial^2V^2(\theta,\phi)}{\partial\theta\phi_2}\right|\right)\notag\\
            \leq&\frac{2G_v}{\rho_v} &\text{(Condition (\ref{it:4}) in Assumption \ref{asm:v})}
        \end{align}
        
        \item[ii)] Secondly, we are going to bound $\norm{\nabla_\theta\phi_*(\theta_1)-\nabla_\theta\phi_*(\theta_2)}_F$. For $\forall \theta_1,\theta_2\in\Theta$,
        \begin{align}
            \norm{\nabla_\theta\phi_*(\theta_1)-\nabla_\theta\phi_*(\theta_2)}_F=&\norm{\left[\nabla_\phi u_{\theta_1}(\phi_*(\theta_1))\right]^{-1}\nabla_\theta u_{\theta_1}(\phi_*(\theta_1))-\left[\nabla_\phi u_{\theta_2}(\phi_*(\theta_2))\right]^{-1}\nabla_\theta u_{\theta_2}(\phi_*(\theta_2))}_F\notag\\
            \leq&\norm{\left[\nabla_\phi u_{\theta_1}(\phi_*(\theta_1))\right]^{-1}}_F\norm{\nabla_\theta u_{\theta_1}(\phi_*(\theta_1))-\nabla_\theta u_{\theta_2}(\phi_*(\theta_2))}_F\notag\\
            &+\norm{\nabla_\theta u_{\theta_2}(\phi_*(\theta_2))}_F\norm{\left[\nabla_\phi u_{\theta_1}(\phi_*(\theta_1))\right]^{-1} - \left[\nabla_\phi u_{\theta_2}(\phi_*(\theta_2))\right]^{-1}}_F
        \end{align}
        By Condition (\ref{it:6}) we have
        \begin{align}\label{eq:lma_f1}
            \norm{\left[\nabla_\phi u_{\theta_1}(\phi_*(\theta_1))\right]^{-1}}_F\leq\frac{1}{\rho}.
        \end{align}
        By Condition (\ref{it:4}) we have
        \begin{align}\label{eq:lma_f2}
            \norm{\nabla_\theta u_{\theta_2}(\phi_*(\theta_2))}_F\leq \left|\frac{\partial^2V^1(\theta_2,\phi_*(\theta_2))}{\partial\theta\phi_1}\right|+\left|\frac{\partial^2V^2(\theta_2,\phi_*(\theta_2))}{\partial\theta\phi_2}\right|\leq 2G_v.
        \end{align}
        By Condition (\ref{it:5}) we have
        \begin{align}\label{eq:lma_f3}
            \norm{\nabla_\theta u_{\theta_1}(\phi_*(\theta_1))-\nabla_\theta u_{\theta_2}(\phi_*(\theta_2))}_F\leq&\left|\frac{\partial^2V^1(\theta_1,\phi_*(\theta_1))}{\partial\theta\phi_1}-\frac{\partial^2V^1(\theta_2,\phi_*(\theta_2))}{\partial\theta\phi_1}\right|+\left|\frac{\partial^2V^2(\theta_1,\phi_*(\theta_1))}{\partial\theta\phi_2}-\frac{\partial^2V^2(\theta_2,\phi_*(\theta_2))}{\partial\theta\phi_2}\right|\notag\\
            \leq&\max_{\theta,\phi}\left|\frac{\partial^3V^1(\theta_1,\phi_*(\theta_1))}{\partial\theta^2\phi_1}\right|\norm{\theta_1-\theta_2}_2+\max_{\theta,\phi}\left|\frac{\partial^3V^2(\theta_1,\phi_*(\theta_1))}{\partial\theta^2\phi_2}\right|\norm{\theta_1-\theta_2}_2\notag\\
            \leq& 2L_v\norm{\theta_1-\theta_2}.
        \end{align}
        Last but not least,
        \begin{align}
            &\norm{\left[\nabla_\phi u_{\theta_1}(\phi_*(\theta_1))\right]^{-1} - \left[\nabla_\phi u_{\theta_2}(\phi_*(\theta_2))\right]^{-1}}_F \\
            =& \norm{\left[\begin{matrix} \frac{\partial^2V^1(\theta_1,\phi_*(\theta_1))}{\partial\phi_1^2} & \frac{\partial^2V^1(\theta_1,\phi_*(\theta_1))}{\partial\phi_1\partial\phi_2} \\
            \frac{\partial^2V^2(\theta_1,\phi_*(\theta_1))}{\partial\phi_1\partial\phi_2} & \frac{\partial^2V^2(\theta_1,\phi_*(\theta_1))}{\partial\partial\phi_2^2} \\
            \end{matrix}\right]^{-1}
            -\left[\begin{matrix} \frac{\partial^2V^1(\theta_2,\phi_*(\theta_2))}{\partial\phi_1^2} & \frac{\partial^2V^1(\theta_2,\phi_*(\theta_2))}{\partial\phi_1\partial\phi_2} \\
            \frac{\partial^2V^2(\theta_2,\phi_*(\theta_2))}{\partial\phi_1\partial\phi_2} & \frac{\partial^2V^2(\theta_2,\phi_*(\theta_2))}{\partial\partial\phi_2^2} \\
            \end{matrix}\right]^{-1}
            }_F
        \end{align}
        For notation simplicity we denote
        \begin{align*}
            &a_1=\frac{\partial^2V^1(\theta_1,\phi_*(\theta_1))}{\partial\phi_1^2}, b_1 = \frac{\partial^2V^1(\theta_1,\phi_*(\theta_1))}{\partial\phi_1\partial\phi_2}, c_1 = \frac{\partial^2V^2(\theta_1,\phi_*(\theta_1))}{\partial\phi_1\partial\phi_2}, d_1 = \frac{\partial^2V^2(\theta_1,\phi_*(\theta_1))}{\partial\partial\phi_2^2}\\
            &a_2=\frac{\partial^2V^1(\theta_2,\phi_*(\theta_2))}{\partial\phi_1^2}, b_2 = \frac{\partial^2V^1(\theta_2,\phi_*(\theta_2))}{\partial\phi_1\partial\phi_2}, c_2 = \frac{\partial^2V^2(\theta_2,\phi_*(\theta_2))}{\partial\phi_1\partial\phi_2}, d_2 = \frac{\partial^2V^2(\theta_2,\phi_*(\theta_2))}{\partial\partial\phi_2^2}\\
            &M_1 = det\left(\left[\begin{matrix}
              a_1 & b_1 \\ c_1 & d_1
            \end{matrix}\right]\right), 
            M_2 = det\left(\left[\begin{matrix}
              a_2 & b_2 \\ c_2 & d_2
            \end{matrix}\right]\right)
        \end{align*}
        Therefore, we have
        \begin{align}\label{ieq:M1}
            &\norm{\left[\nabla_\phi u_{\theta_1}(\phi_*(\theta_1))\right]^{-1} - \left[\nabla_\phi u_{\theta_2}(\phi_*(\theta_2))\right]^{-1}}_F\notag\\
            =&\norm{\frac{1}{M_1}\left[\begin{matrix}
              d_1 & -b_1 \\ -c_1 & a_1
            \end{matrix}\right]-\frac{1}{M_2}\left[\begin{matrix}
              d_2 & -b_2 \\ -c_2 & a_2
            \end{matrix}\right]}_F\notag\\
            \leq&\left|\frac{a_1}{M_1}-\frac{a_2}{M_2}\right|+\left|\frac{b_1}{M_1}-\frac{b_2}{M_2}\right|+\left|\frac{c_1}{M_1}-\frac{c_2}{M_2}\right|+\left|\frac{d_1}{M_1}-\frac{d_2}{M_2}\right|
        \end{align}
        We are first going to bound $\left|\frac{a_1}{M_1}-\frac{a_2}{M_2}\right|$.
        \begin{align*}
            \left|\frac{a_1}{M_1}-\frac{a_2}{M_2}\right|\leq\frac{1}{|M_1|}\left|a_1-a_1\right|+\frac{|a_2|}{|M_1M_2|}|M_1-M_2|.
        \end{align*}
        By Condition (\ref{it:6}), we have
        \begin{align}\label{eq:M_1}
            |M_1|\geq\rho_v\quad \text{and}\quad|M_2|\geq\rho_v.
        \end{align}
        By Condition (\ref{it:5}) we have
        \begin{align}\label{eq:M_2}
            \left|a_1-a_1\right|\leq L_v\norm{\theta_1-\theta_2}_2.
        \end{align}
        By Condition (\ref{it:4}) we have
        \begin{align}\label{eq:M_3}
            |a_2|\leq G_v.
        \end{align}
        Similarly, by Condition (\ref{it:4},\ref{it:5}) we have
        \begin{align}\label{eq:M_4}
            |M_1-M_2|=&|a_1d_1-b_1c_1-a_2d_2+b_2c_2|\notag\\
            \leq& |a_1||d_1-d_2|+|d_2||a_1-a_2|+|b_1||c_1-c_2|+|c_2||b_1-b_2|\notag\\
            \leq& 4G_vL_v\norm{\theta_1-\theta_2}_2.
        \end{align}
        Combine formula (\ref{eq:M_1}-\ref{eq:M_4}) we have
        \begin{align}
            \left|\frac{a_1}{M_1}-\frac{a_2}{M_2}\right|\leq(\frac{L_v}{\rho_v}+\frac{G_v}{\rho_v^2}\times4G_vL_v)\norm{\theta_1-\theta_2}_2=\frac{L_v}{\rho_v}(1+\frac{4G_v^2}{\rho_v})\norm{\theta_1-\theta_2}_2.
        \end{align}
        Similarly, we could compute the bound for $\left|\frac{b_1}{M_1}-\frac{b_2}{M_2}\right|$, $ \left|\frac{c_1}{M_1}-\frac{c_2}{M_2}\right|$ and $\left|\frac{d_1}{M_1}-\frac{d_2}{M_2}\right|$. Therefore, we have
        \begin{align}\label{eq:lma_f4}
            \norm{\left[\nabla_\phi u_{\theta_1}(\phi_*(\theta_1))\right]^{-1} - \left[\nabla_\phi u_{\theta_2}(\phi_*(\theta_2))\right]^{-1}}_F\leq\frac{4L_v}{\rho_v}(1+\frac{4G_v^2}{\rho_v})\norm{\theta_1-\theta_2}_2
        \end{align}
        Combine formula (\ref{eq:lma_f1},\ref{eq:lma_f2},\ref{eq:lma_f3},\ref{eq:lma_f4}) we have
        \begin{align}
            \norm{\nabla_\theta\phi_*(\theta_1)-\nabla_\theta\phi_*(\theta_2)}_F\leq& \frac{1}{\rho_v}\times 2L_v\norm{\theta_1-\theta_2}_2+2G_v\times\frac{4L_v}{\rho_v}(1+\frac{4G_v^2}{\rho_v})\norm{\theta_1-\theta_2}_2\notag\\
            \leq& \frac{2L_v}{\rho_v}(1+4G_v+\frac{16G_v^2}{\rho_v})\norm{\theta_1-\theta_2}_2.
        \end{align}
    \end{itemize}
\end{proof}

\section{Proof for Theorem \ref{thm:outer_conv}}\label{sup:proof_outer}

\begin{proof}[Proof for Theorem \ref{thm:outer_conv}]
    By Assumption \ref{asm:f_1},the gradient $\nabla_\theta f_*(\theta)$ is L-Lipschitz continuous implies for $\forall k$
    \begin{align}\label{eq:L-smooth}
        f_*(\theta_{k+1})\leq f_*(\theta_k)+\nabla_\theta f_*(\theta_k)^\intercal(\theta_{k+1}-\theta_k)+\frac{L}{2}\norm{\theta_{k+1}-\theta_k}^2.
    \end{align}
    Since $\theta_{k+1}=\theta_k-\beta\nabla_{\theta}f_*(\theta_k)$, we have $\theta_{k+1}-\theta_k=-\beta\nabla_{\theta}f_*(\theta_k)$. Substitute $\theta_{k+1}-\theta_k$ in formula \ref{eq:L-smooth}, we have
    \begin{align}\label{eq:f2}
        f_*(\theta_{k+1})-f_*(\theta_k)\leq&-\beta\nabla_{\theta}f_*(\theta_k)^\intercal\nabla_{\theta}f_*(\theta_k)+\frac{L\beta}{2}\norm{\nabla_{\theta}f_*(\theta_k)}^2\notag\\
        =& \beta(\frac{L\beta}{2}-1)\norm{\nabla_{\theta}f_*(\theta_k)}^2.
    \end{align}
    Since the learning rate $\beta=\frac{1}{L}$, substitute $\beta$ in formula \ref{eq:f2} we have
    \begin{align}\label{eq:f3}
        f_*(\theta_{k+1})-f_*(\theta_k)\leq-\frac{1}{2L}\norm{\nabla_{\theta}f_*(\theta_k)}^2.
    \end{align}
    Therefore, summing for $k=0,...,T$ for both LHS and RHS of inequality \ref{eq:f3} we have
    \begin{align*}
        \sum_{k=0}^T\left[f_*(\theta_{k+1})-f_*(\theta_k)\right]&\leq -\frac{1}{2L}\sum_{k=0}^T\norm{\nabla_{\theta}f_*(\theta_k)}^2
        \Longrightarrow \sum_{k=0}^T\norm{\nabla_{\theta}f_*(\theta_k)}^2 \leq 2L(f_*(\theta_0)-f_*(\theta_T))
    \end{align*}
    Since $\theta^* = \argmin_\theta f_*(\theta)$, we have $f_*(\theta_T)\geq f_*(\theta^*)$. Then
    \begin{align}
        \sum_{k=0}^T\norm{\nabla_{\theta}f_*(\theta_k)}^2 \leq 2L(f_*(\theta_0)-f_*(\theta^*))
    \end{align}
    Therefore, since the square of norms are all positive, the minimum of the square of norms must be smaller than the mean of the square of norms. we have
    \begin{align}
        \min_{k=0,...,T}\norm{\nabla_{\theta}f_*(\theta_k)}^2\leq\frac{1}{T+1}\sum_{k=0}^T\norm{\nabla_{\theta}f_*(\theta_k)}^2 \leq \frac{2L(f_*(\theta_0)-f_*(\theta^*))}{T+1}.
    \end{align}
    By Assumption \ref{asm:f_1}, $|f_*(\theta)|\leq M$ for all $\theta\in\Theta$ and $f_*(\theta_0)\geq f_*(\theta^*)$, we could conclude
    $$f_*(\theta_0)-f_*(\theta^*) = |f_*(\theta_0)-f_*(\theta^*)|\leq|f_*(\theta_0)|+|f_*(\theta^*)|\leq 2M.$$
    Therefore,
    \begin{align}
        \min_{k=0,...,T}\norm{\nabla_{\theta}f_*(\theta_k)}^2\leq\frac{1}{T+1}\sum_{k=0}^T\norm{\nabla_{\theta}f_*(\theta_k)}^2 \leq \frac{2L(f_*(\theta_0)-f_*(\theta^*))}{T+1}\leq \frac{4LM}{T+1}.
    \end{align}
\end{proof}

\newpage
\section{Experimental Details}\label{supp:exp}
\subsection{DASAC}
\begin{algorithm}[H]
\caption{DASAC: SAC-based Differentiable Arbitrating in MARL. }
\label{alg_IncenRl_sac}
\begin{algorithmic}
\STATE {\bfseries Input:}Seq. of stepsize $\left(\{\bar{\alpha}^{(i)}_t\}_{t\in[0,T\!-\!1],i\in\{1,2\}},\beta_k\right)$, initial regularization parameters $\lambda^i_0, i\in\left\{1,2\right\}$, 
truncation parameters $\left\{Q_{\max}^{(i)},\mathcal{E}_{\max}^{(i)}\right\}i\in\left\{1,2\right\}$, target entropy for SAC algorithm $\bar{\mathcal{H}}$.
\STATE Initialization actor parameter $\theta_0\in\Theta$, marginalized Q estimator parameter $\psi_0\in\Psi$, and centralized joint Q estimator parameter (namely centralized critic) $\xi_0\in\Xi$.

\FOR{k=0,1,...}
  \STATE$\mathcal{Q}_{\psi_0^i}^{(i)}(s,a^i;\theta_k)\leftarrow 0 \;(i\in\{1,2\})$
  
  \FOR{t=0,1...}
  \STATE Sample transitions $(s_t, u_t^i, u_t^{-i}, r_t^i, s_{t+1})$ from the replay buffer
  \STATE \textcolor{gray}{// {\it Update the actor}}
  \STATE Update policy via minimizing the KL divergence
  \begin{align*}
      \pi_{\phi_t^i}^i(\cdot|s_t;\theta_k) \leftarrow \argmin_{\phi\in\Phi}D_{KL}\left(\pi_{\phi}||exp\left(\frac{1}{\lambda^i_t}Q_{\psi_t^i}(s_t,a_t^i)\right)\right)
  \end{align*}
  \STATE \textcolor{gray}{// {\it Optimize the centralized joint Q estimator}}
  \STATE Compute the value function of the next state with the centralized joint Q function
    \begin{align*}
      V^{cent,(i)}_{t+1}=Q^{cent}_{\xi_t^{(i)}}(s_{t+1},a_t^i,a_t^{-i};\theta_k)-\lambda^i_t\log(\pi_{\phi_t^i}(a_t^i|s_{t+1};\theta_k))+\lambda^{-i}_t\log(\pi_{\phi_t^{-i}}(a_t^{-i}|s_{t+1};\theta_k))
    \end{align*}
  \STATE Update the parameter $\xi^i$ by minimizing $$\left(r_t^i + (1 - {\it is\_done})\gamma V^{cent,(i)}_{t+1} - Q^{cent}_{\xi_t^{(i)}}(s_t,u_t^{(i)},u_t^{(-i)};\theta_k)\right)^2$$
  \STATE \textcolor{gray}{// {\it Optimize the marginalized Q estimator}}
  \STATE Compute the marginalized $Q^{(i)}$-function from the learned centralized joint Q function 
  \begin{align*}
      \widehat{Q}_{\pi_{\phi_t}}^{(i)}(s,a_t^i;\theta_k)=\mathbb{E}_{\hat{a}_t^{-i}\sim\pi^{-i}_{\phi_t}}\left[Q^{cent}_{\xi^{(i)}_t}(s_t,a_t^i,\hat{a}_t^{-i};\theta_k)\right]
  \end{align*}
  \STATE Update the estimated ideal energy function
  \begin{align*}
      \widehat{\mathcal{Q}}_{t+1}^{(i)}(s,a_t^i;\theta_k)=&(1-\bar{\alpha}^{(i)}_t)\cdot \mathcal{Q}_{\phi_t^i}^{(i)}(s,a_t^i;\theta_k)+\bar{\alpha}^{(i)}_t\cdot \widehat{Q}_{\pi_{\phi_t}}^{(i)}(s,a_t^i;\theta_k).
  \end{align*}
  \STATE Update the parameters $\psi$ of marginalized Q estimator to obtain $\mathcal{Q}_{\psi^i}^{(i)}(s,a^i;\theta)\in\mathcal{F}_{\mathcal{Q}_{\max}^{(i)}}$:
  \begin{align*}
    \psi_{t+1}\leftarrow \argmin_{\psi}\mathbb{E}_{\sigma_t}\left[\sum_{i\in\left\{1,2\right\}}\!\!\!\left(\mathcal{Q}_{\psi^i}^{(i)}(s,a_t^i;\!\theta)-\widehat{\mathcal{E}}_{t+1}^{(i)}(s,a_t^i;\!\theta)\right)^2\right]\!.
  \end{align*}
  \STATE \textcolor{gray}{// {\it Auto-tune the regularization parameters}}
  \STATE Update the regularization parameters $\lambda^i$
  \begin{align*}
      \lambda^i_{t+1}\leftarrow\argmin_{\lambda}\mathbb{E}_{a^i_t\sim\pi_{\phi_t}^i}\left[\lambda^i_t\log\pi_{\phi_t}^i(a_t^i|s_t)-\lambda^i_t\bar{\mathcal{H}}\right]
  \end{align*}
  \STATE$\psi^{(*)}(\theta_k)=\psi_{t+1}.$
  \ENDFOR
  \STATE Update incentive parameter
  $$\theta_{k+1}\leftarrow\theta_k-\beta_k\nabla f_{*}(\theta_k),$$
  where $\nabla f_{*}(\theta_k)$ is defined in (\ref{eq:grad_f}).
\ENDFOR
\end{algorithmic}
\end{algorithm}

\subsection{GridSearch(M) and BayesOpt baselines}\label{supp:GSM}

For comparison, we apply the grid search over the incentive parameters as a zeroth-order algorithm. We first perform a round of coarse-grained grid search, using the same modified MASAC algorithm under the incentive parameters from 0.0 to 0.5 with a step size of 0.05, and then for two games respectively, a round of fine-grained search is applied. For Predator-prey, we do grid search from 0.3 to 0.35 with a step size of 0.005 while we search from 0.35 to 0.4 with the same step size for Running-with-scissors.

For the improved zeroth-order baseline, \textit{BayesOpt}, we use the most popular expected improvement (EI) as the acquisition function and use \textit{L-BFGS-B} method to maximize the acquisition function with a random start point in order to compute the next sampling point.

\subsection{Implementation Details}

\subsubsection{Computation Resources}

We conduct our experiments on one NVIDIA GeForce RTX 3090 GPU with 24 GB GDDR6X memory. We implement our codes on the PyTorch framework with CUDA acceleration, which we can run in parallel on one GPU card.

\subsubsection{Hyper-parameters}
\quad
\begin{table}[H]
    \centering
    \begin{tabular}{lc}
        \toprule
        Hyper-parameter &Value\\
        \midrule
        initialization method &orthogonal\\
        num GRU layers &1 \\
        Actor MLP hidden state dim  & 32 \\
        Actor RNN hidden state dim & 16 \\
        Critic MLP\&RNN hidden state dim & 64 \\
        num FC after & 1\\
        optimizer &Adam \\
        optimizer eps &1e-5\\
        weight decay & 0\\
        activation function & ReLU\\
        use reward normalization &true \\
        episode length &25 \\
        last action layer gain &0.01\\
        batch size &64\\
        buffer size &5000 \\
        gamma &0.99\\
        epsilon &from 1.0 to 0.05\\
        epsilon anneal time &50000\\
        Q function loss &MSE loss\\
        initial regularization parameter & 1\\
        target entropy coefficient &0.3\\
        \bottomrule
    \end{tabular}
    \vspace{1mm}
    \caption{Hyper-parameters used in our experiments by DASAC.}
\end{table}


\end{document}